\newcommand{\ignore}[1]{}
\newtheorem{theorem}{Theorem}
\newtheorem*{theorem*}{Theorem}
\newtheorem{lemma}[theorem]{Lemma}
\newtheorem{definition}{Definition}
\newtheorem{claim}[theorem]{Claim}
\newtheorem{observation}{Observation}
\newtheorem*{observation*}{Observation}
\newtheorem{problem}{Problem}
\theoremstyle{remark}
\newcommand{\F}{\mathbb{F}}
\newcommand{\N}{\mathbb{N}}
\newcommand{\Q}{\mathbb{Q}}
\newcommand{\Z}{\mathbb{Z}}
\newcommand{\calS}{\mathcal{S}}
\newcommand{\cf}{\textup{{coef}}}
\newcommand{\x}{{\boldsymbol x}}
\newcommand{\y}{{\boldsymbol y}}
\newcommand{\e}{{\boldsymbol e}}
\newcommand{\calI}{\mathcal{I}}
\newcommand{\calP}{\mathcal{P}}
\newcommand{\m}{\boldsymbol m}
\newcommand{\floor}[1]{\lfloor #1 \rfloor}
\newcommand{\ceil}[1]{\lceil #1 \rceil}
\newcommand{\sssum}[1]{#1-\mathsf{SimulSubsetSum}}
\newcommand{\ssum}{\mathsf{SSUM}}
\newcommand{\hsssum}[1]{\mathsf{Hamming}-#1-\mathsf{SSSUM}}
\newcommand{\hsubssum}[1]{\mathsf{Hamming}-#1-\mathsf{SUBSSUM}}
\newcommand{\ubssum}{\mathsf{UBSSUM}}
\newcommand{\ksubssum}[1]{#1-\mathsf{SUBSSUM}}
\newcommand{\ksssum}[1]{#1-\mathsf{SSSUM}}
\newcommand{\usssum}{\mathsf{uSSSUM}}
\newcommand{\potential}{\mathbb{P}}
\newcommand{\pprod}{\mathsf{Subset~Product}}
\newcommand{\simulsub}{\mathsf{SimulSubsetSum}}
\newtheorem{remark}{Remarks}
\newtheorem*{remark*}{Remarks}
\newcommand{\OO}{\tilde{O}}
\title{Efficient reductions and algorithms for variants of Subset Sum~\thanks{A preliminary version of this work has been accepted to CALDAM 2022.}}
\author{Pranjal Dutta~\thanks{Chennai Mathematical Institute, India (and Visiting Research Fellow, Department of CSE, IIT Kanpur). Email:~~\texttt{pranjal@cmi.ac.in}.} \and Mahesh Sreekumar Rajasree~\thanks{Department of CSE, IIT Kanpur. Email:~~\texttt{mahesr@cse.iitk.ac.in}.}}
\date{}
\begin{document}

\maketitle

\begin{abstract}
Given $(a_1, \dots, a_n, t) \in \Z_{\geq 0}^{n + 1}$, the Subset Sum problem ($\ssum$) is to decide whether there exists $S \subseteq [n]$ such that $\sum_{i \in S} a_i = t$. There is a close variant of the $\ssum$, called $\pprod$. Given positive integers $a_1, \hdots, a_n$ and a target integer $t$, the $\pprod$ problem asks to
determine whether there exists a subset $S \subseteq [n]$ such that  $\prod_{i \in S} a_i=t$. There is a pseudopolynomial time dynamic programming algorithm, due to Bellman (1957) which solves the $\ssum$ and $\pprod$ in $O(nt)$ time and $O(t)$ space.

In the first part, we present {\em search} algorithms for variants of the Subset Sum problem. Our algorithms are parameterized by $k$, which is a given upper bound on the number of realisable sets (i.e.,~number of solutions, summing exactly $t$). We show that $\ssum$ with a unique solution is already $\NP$-hard, under randomized reduction. This makes the regime of parametrized algorithms, in terms of $k$, very interesting.

Subsequently, we present an $\tilde{O}(k\cdot (n+t))$ time deterministic algorithm, which finds the hamming weight of all the realisable sets for a subset sum instance. We also give a $\poly(knt)$-time and $O(\log(knt))$-space deterministic algorithm that finds all the realisable sets for a subset sum instance.

In the latter part, we present a simple and elegant randomized algorithm for $\pprod$ in $\OO(n + t^{o(1)})$ expected-time. Moreover, we also present a $\poly(nt)$ time and $O(\log^2 (nt))$ space deterministic algorithm for the same. We study these problems in the unbounded setting as well. Our algorithms use multivariate FFT, power series and number-theoretic techniques, introduced by Jin and Wu (SOSA'19) and Kane (2010).\\


\smallskip
{\bf \noindent Keywords.}~subset sum, subset product, power series, isolation lemma, hamming weight, interpolation, logspace, Newton's identities, simultaneous, FFT, pseudo-prime-factor
\end{abstract}

\newpage

\tableofcontents

\newpage 
\section{Introduction}

\vspace{-2mm}
The  \textsf{Subset Sum} problem ($\ssum$) is a well-known $\NP$-complete problem \cite[p.~226]{lewis1983computers}, where given $(a_1, \dots, a_n, t) \in \Z_{\geq 0}^{n + 1}$, the problem is to decide whether there exists $S \subseteq [n]$ such that $\sum_{i \in S}\,a_i = t$. In the recent years, provable-secure cryptosystems based on $\ssum$ such as private-key encryption schemes \cite{lyubashevsky2010public}, tag-based encryption schemes \cite{faust2016chosen}, etc have been proposed. There are numerous improvements made in the algorithms that solve the~$\ssum$~ problem in both the classical \cite{bringmann2021near,bringmann2017near,jin2018simple,jin2021fast,esser2020low} and quantum world \cite{bernstein2013quantum,helm2018subset,li2019improved}. One of the first algorithms was due to Bellman \cite{bellman57} who gave a $O(nt$) time ({\em pseudo-polynomial} time) algorithm which requires $\Omega(t)$ space. In this paper, we give efficient algorithms for interesting variants of subset sum.

\subsection{Variants of Subset Sum}

To begin with, one can ask for a {\em search} version of the subset sum problem, i.e.,~to output all the solutions. Since there can be {\em exponentially} many solutions, it could take $\exp(n)$-time (and space), to output them. This motivates our first problem~defined below.
\begin{problem}[$\ksssum{k}$] \label{prob1:k-sssum}
Given $(a_1, \dots, a_n, t) \in \Z_{\geq 0}^{n+1}$, the $k$-solution $\ssum$ $(\ksssum{k})$ problem asks to output all $S \subseteq [n]$ such that $\sum_{i \in S}\,a_i = t$ provided with the guarantee that the number of such subsets is at most $k$.
\end{problem}

{\noindent $\blacktriangleright$ \textsf{Remark}.}~~We denote $\ksssum{1}$ as unique Subset Sum problem ($\usssum$). In ~\href{https://cstheory.stackexchange.com/questions/42502/subset-sum-problem-with-at-most-one-solution-for-any-target}{stackexchange}, a more restricted version was asked where it was assumed that $k=1$, for {\em any} realizable $t$. Here we just want $k=1$ for some {\em fixed} target value $t$ and we do not assume anything for any other value $t'$.

\smallskip
Now, we consider a different restricted version of the $\ksssum{k}$, where we demand to output only the hamming weights of the $k$-solutions (we call it $\hsssum{k}$, for definition see \autoref{prob2:ham-k}). By hamming weight of a solution, we mean the number of $a_i$'s in the solution set (which sums up to exactly $t$). In other words, if $\vec{a} \cdot \vec{v}=t$, where $\vec{a} = (a_1, \hdots, a_n)$ and $\vec{v} \in \{0,1\}^n$, we want $|v|_1$, the $\ell_1$-norm of the solution vector. 

\begin{problem}[$\hsssum{k}$] \label{prob2:ham-k}
Given an instance of the $\ksssum{k}$, say~$(a_1, \dots, a_n, t) \in \Z_{\ge 0}^{n+1}$, with the promise that there are at most $k$-many $S \subseteq [n]$ such that $\sum_{i \in S}a_i = t$, $\hsssum{k}$ asks to output all the hamming weights (i.e., $|S|$) of the solutions.
\end{problem}

It is obvious that solving $\ksssum{k}$ solves \autoref{prob2:ham-k}. Importantly, the decision problem, namely the \textsf{HWSSUM} is already $\NP$-hard. 

\paragraph{\textsf{HWSSUM} and its $\NP$-hardness.} The~\textsf{HWSSUM}~problem is :~Given an instance $(a_1,\hdots, a_n,t,w) \in \Z_{\ge 0}^{n+2}$, decide whether there is a solution to the subset sum with hamming weight equal to $w$. Note that, there is a trivial Cook's reduction from the $\ssum$ to the \textsf{HWSSUM}: $\ssum$ decides `yes' to the instance $(a_1,\hdots,a_n,t)$ iff at least one of the following \textsf{HWSSUM} instances $(a_1,\hdots,a_n,t,i)$, for $\,i \in [n]$ decides `yes'. Therefore, the search-version of \textsf{HWSSUM}, the $\hsssum{k}$ problem, is already an interesting problem and worth investigating.

\bigskip 
\noindent In parallel, we also study a well-known variant of the subset sum, called $\pprod$.

\begin{problem}[$\pprod$] \label{prob2:partition-product}
Given $(a_{1}, \dots, a_{n}, t)$ $\in \Z_{\geq 1}^{n+1}$, the $\pprod$ problem asks to decide whether there exists an $S \subseteq [n]$ such that $\prod_{i \in S} a_{i} = t$.
\end{problem}

\noindent $\pprod$ is also known to be $\NP$-complete~\cite[p.~221]{garey1979computers}. Like $\ssum$, it has a trivial $O(nt$) time ({\em pseudo-polynomial} time) dynamic programming algorithm which requires $\Omega(t)$ space~\cite{bellman57}. 

$\pprod$ has been studied and applied in many different forms. For e.g.,~1) constructing a {\em smooth hash} (VSH) by Contini, Lenstra and Steinfeld~\cite{contini2006vsh}, 2) attack on the Naccache-Stern Knapsack (NSK) public key cryptosystem~\cite{draziotis2020product}. Similar problem has also been studied in optimization, in the form of product knapsack problem~\cite{pferschy2021approximating}, multiobjective knapsack problem~\cite{bazgan2009solving} and so on.


Next, we define a `seemingly' unrelated problem at first. It asks to decide whether there is a `common' solution to the given many instances of subset sum. This was first introduced by~\cite[Section~3.3]{kane2010unary} (but no formal name was given). In this work, we study this as in intermediate problem which plays a crucial role to study the $\pprod$ problem, see~\autoref{sec:rand-algo-pprod}.
\begin{problem}[\textsf{SimulSubsetSum}]
\label{prob1:k-ssssum}
Given subset sum instances $(a_{1j}, \dots, a_{nj}, t_{j})$ $\in \Z_{\geq 0}^{n+1}$, for $j \in [k]$, where $k$ is some parameter, the Simultaneous Subset Sum problem (in short, $\textsf{SimulSubsetSum}$) asks to decide whether there exists an $S \subseteq [n]$ such that $\sum_{i \in S} a_{ij} = t_j, \forall j \in [k]$.
\end{problem}
{\noindent $\blacktriangleright$~\textsf{Remarks}.}~~1. When $k$ is fixed parameter (independent of $n$), we call this $\sssum{k}$. There is a trivial  $O(n(t_1+1) \hdots (t_k+1))$ time deterministic algorithm for the $\simulsub$ problem with $k$ subset sum instances ($k$ not necessarily a constant); for details see~\autoref{sec:dynamic-simul}. 

\smallskip
2. It suffices to work with $t_j \ge 1, \forall j \in [k]$. To argue that, let us assume that $t_j = 0$ for some $j \in [k]$ and $I_j := \{i \in [n]\,|\, a_{ij} = 0\}$. Observe that if $\simulsub$ has a solution set $S \subseteq [n]$, then $S \subseteq I_j$. Therefore, for every $\ell \in [k]$, instead of looking at $(a_{1\ell}, \dots, a_{n\ell}, t_{\ell})$, it suffices to work with $\{a_{i,\ell}\,|\, i \in I_{j}\}$ with the target $t_{\ell}$. Thus, we can trivially ignore the $j^{th}$ $\ssum$ instance.

\paragraph{Hardness depends on $k$.} \autoref{prob1:k-ssssum} asks to solve a system of $k$-linear equations in $n$-variables with $0/1$ constraints on the variables in a linear algebraic way. If we assume that the set of vectors  $\{(a_{1j}, \dots, a_{nj})\,|\, \forall j \in [k]\}$ are linearly independent; then we can perform Gaussian elimination to find a relation between the free variables (exactly $n-k$) and dependent/leading variables. Then, by enumerating over all possible $2^{n-k}$ values of the free variables and finding the corresponding values for leading variables, we can check whether there is a 0/1 solution. This takes~$\poly(n,k) \cdot 2^{n-k}$ time.

In particular, when $k \geq n - O(\log(n))$, $\simulsub$ (with assuming linear independence), it has a polynomial time solution. Whereas, we showed (in \autoref{thm:ssum-to-simulssum}) that given a subset sum instance, we can convert this into a $\simulsub$ instance in polynomial time even with $k = O(\log(n))$.





\subsection{Our results}

In this section, we briefly state our main results. The leitmotif of this paper is to give efficient (time, space) algorithms for all the aforementioned variants of subset sum. 

\subsubsection{Time-efficient algorithms for variants of Subset Sum} \label{sec:efficient-time-algos}
Our first theorem gives an efficient pseudo-linear $\tilde{O}(n+t)$ time {\em deterministic} algorithm for \autoref{prob2:ham-k}, for constant $k$.  
\begin{theorem}[Algorithm for hamming weight]\label{thm1:hamming}
There is a $\tilde{O}(k(n+t))$-time deterministic algorithm for $\hsssum{k}$. 
\end{theorem}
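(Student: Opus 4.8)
The plan is to encode the subset-sum instance as a univariate generating polynomial and read off the Hamming weights of the realisable sets from a carefully extracted slice of its coefficients. Introduce a fresh variable $y$ to track cardinality, and consider
\[
P(x,y) \;=\; \prod_{i=1}^{n} \bigl(1 + y\,x^{a_i}\bigr) \;=\; \sum_{S \subseteq [n]} y^{|S|}\, x^{\sum_{i \in S} a_i}.
\]
The coefficient of $x^t$ in $P(x,y)$ is a univariate polynomial $Q(y) = \sum_{w} c_w y^w$ in $y$ of degree at most $n$, where $c_w$ is exactly the number of subsets of $[n]$ of size $w$ summing to $t$. Under the promise that there are at most $k$ realisable sets in total, $Q(y)$ is a nonzero polynomial with at most $k$ terms (a $k$-sparse polynomial), and the set of exponents appearing in $Q$ is precisely the set of Hamming weights we must output. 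So the whole task reduces to (i) computing $Q(y) = \cf_{x^t} P(x,y)$ efficiently, and (ii) recovering the support of a $k$-sparse univariate polynomial from evaluations.

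For step (i), I would not expand the bivariate product directly. Instead, fix $k+1$ (or a few more, for safety) distinct evaluation points $y = \beta_0, \dots, \beta_k$ — small integers or points in a suitable extension field — and for each $\beta_j$ compute $Q(\beta_j) = \cf_{x^t}\prod_i (1 + \beta_j x^{a_i})$. Each such coefficient is just a Subset-Sum-style count modulo the target: work in the quotient ring $R[x]/(x^{t+1})$, so each factor $1 + \beta_j x^{a_i}$ is a polynomial of degree $\le t$, and the product of $n$ such polynomials truncated at degree $t$ can be computed by the standard balanced-tree multiplication using FFT, in $\tilde{O}(n + t)$ arithmetic operations (this is exactly the Bellman-style $O(nt)$ replaced by its near-linear FFT analogue, as used in Jin–Wu). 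Doing this for $O(k)$ values of $\beta_j$ gives total time $\tilde{O}(k(n+t))$, matching the claimed bound. One subtlety to handle is the bit-size of the counts $c_w$: since each $c_w \le \binom{n}{w} \le 2^n$, the coefficients can be $n$-bit integers, so either carry the $\tilde{O}(n)$-bit arithmetic cost (absorbed into $\tilde{O}$), or compute $Q$ modulo a prime / several small primes and reconstruct, which is cleaner for keeping things deterministic.

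For step (ii), recovering the exponents of the $k$-sparse polynomial $Q(y)$ from its values at $\beta_0, \dots, \beta_{2k-1}$ is classical: choose $\beta_j = g^j$ for a suitable element $g$, so that the evaluation vector is a generalised geometric sum, and run the Prony / Blahut–Ben-Or–Tiwari sparse-interpolation procedure — form the Hankel system from the $2k$ evaluations, find the linear recurrence (equivalently, the roots $g^{w}$ of the characteristic polynomial via a deterministic root-finding step over the working field), and read off each Hamming weight $w$ as the discrete log of a root base $g$. To keep this fully deterministic and within $\poly\log$ overhead, I would work over $\Z/p\Z$ for a prime $p$ chosen so that $g$ has multiplicative order larger than $n$ (so the discrete logs in $[0,n]$ are recovered unambiguously) and so that the relevant determinants are nonzero — a small prime of size $\poly(n)$ found by brute-force search suffices, and one can repeat with a few primes and take the consensus support to guard against an unlucky modulus. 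Alternatively, since the degree is only $n$, one can just evaluate $Q$ at enough points to interpolate it densely in $\tilde{O}(n)$ extra time and read the support directly, which trades a cleaner argument for a slightly different (still $\tilde{O}(k(n+t))$) running time.

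\textbf{The main obstacle} I anticipate is the bookkeeping around exact computation of the (potentially huge) integer coefficients $c_w$ while keeping everything deterministic and within the stated time bound: naively the entries grow to $2^n$, so one must either commit to $\tilde{O}(n)$-bit arithmetic and verify it is absorbed by the $\tilde{O}$, or run the whole pipeline modulo several $\poly(n)$-size primes and argue (deterministically, via a counting/CRT bound on how many primes can kill a fixed nonzero coefficient) that the recovered support is correct. The FFT-based truncated product and the Prony interpolation are both standard; the care lies in the modular arithmetic and in confirming that the sparse-interpolation step genuinely costs only $\poly(k,\log(nt))$ and not more, so that the dominant term remains the $\tilde{O}(k(n+t))$ from computing the $O(k)$ truncated products.
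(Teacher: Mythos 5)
Your high-level route is essentially the paper's: introduce a cardinality marker $y$, observe that $Q(y)=\cf_{x^t}\prod_i(1+y\,x^{a_i})$ is a $k$-sparse polynomial of degree at most $n$ whose support is exactly the set of Hamming weights, evaluate $Q$ at powers of a generator, run a Prony/Ben-Or--Tiwari style sparse interpolation (the paper phrases this via Newton's identities and Vieta's formulas to form $g(x)=\prod_i(x-\mu^{w_i})^{\lambda_i}$), find the roots, and invert the discrete logarithm. The architecture matches the paper's \autoref{thm1:hamming}. However, three details in your write-up are genuine gaps as stated.

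The central one: ``the product of $n$ such polynomials truncated at degree $t$ can be computed by the standard balanced-tree multiplication using FFT, in $\tilde{O}(n+t)$ arithmetic operations'' is false. With truncation at degree $t$, a balanced binary tree performs $\Theta(n)$ multiplications of degree-$\Theta(t)$ polynomials, costing $\tilde{O}(nt)$ in total, not $\tilde{O}(n+t)$. The near-linear bound in Jin--Wu comes from a different device entirely: take $\ln$ of the product, turning it into a sum $\sum_i\ln(1+\beta x^{a_i})$ whose truncation at degree $t$ has only $O(t\log t)$ nonzero terms after grouping equal $a_i$'s, then exponentiate back in $\tilde{O}(t)$ via Brent's power-series trick. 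This is what the paper's Coefficient Extraction Lemma (\autoref{lem:coeff-extraction}) supplies, extended to handle the weight $\mu^{j}$ in front of $x^{a_i}$. Without invoking this log-exp mechanism explicitly, your claimed per-evaluation cost of $\tilde{O}(n+t)$ has no justification.

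Two smaller points. First, your concern that the coefficients of $Q$ could be $n$-bit integers (``$c_w\le\binom{n}{w}\le 2^n$'') is unfounded under the promise: the total number of realisable sets is at most $k$, so each $c_w\le k$. Hence a \emph{single} prime $q>k+t$ with a primitive root $\mu$ suffices --- no CRT over several primes, no ``consensus'' argument; this is precisely why the paper gets a clean deterministic bound. Second, ``a deterministic root-finding step over the working field'' is underspecified; generic deterministic root-finding over $\F_q$ costs $\tilde{O}(\deg g\cdot\sqrt{q})=\tilde{O}(k\sqrt{n+k+t})$, which can exceed $\tilde{O}(k(n+t))$ when $k$ is large. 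The paper exploits that the roots are guaranteed to lie in $\{\mu^0,\dots,\mu^n\}$ and trial-divides $g$ by $(x-\mu^i)$ for $i\in[n]$ in $\tilde{O}(nk)$ total. Relatedly, your ``alternative'' dense interpolation needs $n+1$ evaluations of $Q$ at cost $\tilde{O}(n+t)$ each, giving $\tilde{O}(n(n+t))$, not $\tilde{O}(k(n+t))$; when $k=o(n)$ this is strictly worse, so the sparse route is not optional.
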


{\noindent $\blacktriangleright$ \textsf{Remark (Optimality)}.}~~We emphasize the fact that \autoref{thm1:hamming} is likely to be {\em near}-optimal for bounded $k$, due to the following argument. An $O(t^{1 - \epsilon})$ time algorithm for $\hsssum{1}$ can be directly used to solve $\ksssum{1}$, as discussed above. By using the {\em randomized} reduction (\autoref{thm:ssum-to-ussum}), this would give us a randomized $n^{O(1)}t^{1-\epsilon}$-time algorithm for $\ssum$. But, in \cite{abboud2019seth} the authors showed that $\ssum$ does not have $n^{O(1)}t^{1-\epsilon}$ time algorithm unless the Strong Exponential Time Hypothesis (SETH) is false.

\paragraph{\autoref{thm1:hamming} is better than the trivial.} Consider the usual `search-to-decision' reduction for subset sum: First try to include $a_1$ in the subset, and if it is feasible then we subtract $t$ by $a_1$ and add $a_1$ into the solution, and then continue with $a_2$, and so on. This procedure finds a single solution, but if we implement it in a recursive way then it can find all the $k$ solutions in $k \cdot n \cdot \text{(time complexity for decision version)}$ time; we can think about an $n$-level binary recursion tree where all the infeasible subtrees are pruned.Since number of solutions is bounded by $k$, choosing a prime $p > n+t+k$ suffices in \cite{jin2018simple}, to make the algorithm deterministic. Thus, the time complexity of the decision version is $\tilde{O}((n+t) \log k)$. Hence, from the above, the search complexity is $\tilde{O}(kn(n+t))$ which is {\em worse} than \autoref{thm1:hamming}.

\begin{theorem}[Time-efficient algorithm for $\pprod$]
\label{thm:time-algos-pprod}
There exists a randomized algorithm that solves $\pprod$ in $\tilde{O}(n+t^{o(1)})$ expected-time.

\end{theorem}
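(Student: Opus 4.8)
The plan is to reduce $\pprod$ to a $\simulsub$ instance by taking logarithms of the $a_i$'s with respect to a prime basis, and then to solve that $\simulsub$ instance with a randomized near-linear-time algorithm built from the multivariate-FFT/power-series machinery of Jin--Wu. Concretely, first I would check whether $t$ has any prime factor exceeding $t^{o(1)}$; since $t$ has at most $O(\log t)$ distinct prime factors, we may assume $t$ is $t^{o(1)}$-smooth after handling the (at most one) large prime factor separately, or—more robustly—use a ``pseudo-prime-factor'' basis as in Kane's work so that the relevant factor base $P = \{p_1,\dots,p_r\}$ has size $r = t^{o(1)}$. Then discard every $a_i$ that is not $t^{o(1)}$-smooth, or that has a prime factor not dividing $t$ to a sufficient power, since such an $a_i$ can never appear in a subset whose product is exactly $t$. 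For each surviving $a_i$ write $a_i = \prod_{\ell=1}^{r} p_\ell^{e_{i\ell}}$ and $t = \prod_{\ell=1}^r p_\ell^{u_\ell}$; then $\prod_{i\in S} a_i = t$ iff $\sum_{i\in S} e_{i\ell} = u_\ell$ for all $\ell \in [r]$, i.e.\ exactly a $\simulsub$ instance with $k = r = t^{o(1)}$ equations, where each target $u_\ell \le \log_2 t$.

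The second and main ingredient is a randomized $\tilde O\!\left(kn + \prod_{\ell\in[k]}(2u_\ell+1)\right)$-time algorithm for $\simulsub$ (the generalization of Jin--Wu to several targets). The idea there: encode a candidate subset by the generating polynomial $\prod_{i\in[n]}\bigl(1 + y_1^{e_{i1}}\cdots y_k^{e_{ik}}\bigr)$, whose coefficient of the monomial $y_1^{u_1}\cdots y_k^{u_k}$ counts the realisable sets; to avoid exponential blow-up one works modulo the ideal $(y_\ell^{u_\ell+1})_{\ell}$ and modulo a prime $p$, and—following Jin--Wu—one does not multiply the linear factors directly but groups the $a_i$'s by their exponent vectors, takes $\log$ (the \dlog\ trick) of each $(1+\text{monomial})^{m}$ as a truncated power series, sums, and exponentiates back, all via multivariate FFT on a domain of size $\prod_\ell(u_\ell+1) \le (\log t + 1)^{r} = t^{o(1)}$. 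Working modulo a random prime $p = \tilde O(1)$ makes arithmetic cheap and, by Schwartz--Zippel / a standard isolation argument, detects a nonzero coefficient with constant probability; repeating $O(\log)$ times boosts this. Since $k=t^{o(1)}$ and the FFT domain has size $t^{o(1)}$, the total running time is $\tilde O(n) + t^{o(1)}$, i.e.\ $\tilde O(n + t^{o(1)})$ in expectation.

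The step I expect to be the main obstacle is controlling the size of the factor base so that $\prod_\ell (u_\ell+1)$ stays $t^{o(1)}$ while still faithfully encoding the constraint $\prod_{i\in S}a_i=t$. A naive prime factorization of $t$ could in principle involve a prime close to $t$ (so $u_\ell$ small but the prime itself huge), or many small primes; the delicate point—handled by the pseudo-prime-factor idea of Kane and by the smoothness pre-filtering—is to pass to a basis in which each exponent target is $O(\log t)$ and the number of coordinates is $(\log t)^{o(1)} \cdot \text{(polyloglog)} = t^{o(1)}$, and to argue that the reduction neither creates spurious solutions nor loses genuine ones. A secondary technical nuisance is that multivariate FFT modulo a small prime $p$ requires $p-1$ to be divisible by the relevant orders of roots of unity; one circumvents this either by Kronecker substitution back to a univariate FFT over $\Z_p$ with $p$ chosen appropriately, or by using Bluestein/CRT tricks, and one must check these do not inflate the $t^{o(1)}$ bound. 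Once the reduction and the $\simulsub$ subroutine are in place, assembling them and the probability amplification is routine.
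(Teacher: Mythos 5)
Your high-level strategy matches the paper exactly: factor $t$ (the paper uses Lenstra--Pomerance for honest prime factorization in expected $t^{o(1)}$ time; your mention of Kane's pseudo-prime-factors corresponds to the paper's worst-case variant giving $\tilde O(n^2+t^{o(1)})$), reduce to a $\simulsub$ instance over the exponent vectors, and solve that by the multivariate log/exp/FFT generalization of Jin--Wu modulo a random prime. The parts about detecting nonzeroness of the target coefficient over $\F_p$ and the error analysis are all consistent with what the paper does.

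There is, however, a genuine gap in the bound on the FFT domain size, and it is precisely the step you flagged as ``the main obstacle.'' You write that the domain has size $\prod_\ell (u_\ell+1) \le (\log t + 1)^r = t^{o(1)}$, but this inequality chain is false. The number of distinct prime factors of $t$ can be as large as $r = \Theta(\log t/\log\log t)$; for such $r$,
\[
(\log t + 1)^{r} \;\approx\; 2^{\,r\log\log t} \;=\; 2^{\,\Theta(\log t)} \;=\; t^{\Theta(1)},
\]
which is not $t^{o(1)}$. The mere facts that $r\le \log t$ and $u_\ell \le \log t$ do not give the claim. What actually makes the product small is the \emph{sum} constraint $\sum_\ell u_\ell \le \log_2 t$ (each factor is $\ge 2$), which the paper exploits via AM-GM:
\[
\prod_\ell (2u_\ell+1) \;\le\; 2^{2r}\prod_\ell u_\ell \;\le\; 2^{2r}\left(\frac{\log t}{r}\right)^{r} \;\le\; 2^{2r}\cdot 2^{O(\sqrt{r\log t})},
\]
using the elementary inequality $(a/b)^b \le 2^{2\sqrt{ab}}$, and then $r = O(\log t/\log\log t)$ gives $2^{O(\log t/\sqrt{\log\log t})} = t^{O(1/\sqrt{\log\log t})} = t^{o(1)}$. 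Without this AM-GM step your complexity bound does not close; replacing $(\log t+1)^r$ with the AM-GM bound is exactly the missing ingredient. (Two smaller nits: the prime $p$ in Jin--Wu is chosen of size $\poly(n+N)$ with $N=\prod_\ell(2u_\ell+1)$, not $\tilde O(1)$; and the ``$t^{o(1)}$-smoothness'' pre-filtering you describe is unnecessary — one simply discards any $a_i$ that does not divide $t$, and the size of the primes in the basis is irrelevant, only their count and the exponent sum matter.)
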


{\noindent \textsf{Remarks}.}~~1. The result in the first part of the above theorem is reminiscent of the $\OO(n+t)$ time randomized algorithms for the subset sum problem~\cite{jin2018simple,bringmann2017near}, although the time complexity in our case is the expected time, and ours is better. 

2. The expected time is because to factor an integer $t$ takes expected $\exp(O(\sqrt{\log(t) \log \log (t)}))$ time~\cite{lenstra1992rigorous}. If one wants to remove expected time analysis (and do the worst case analysis), the same problem can be solved in $\tilde{O}(n^2 + t^{o(1)})$ randomized-time. For details, see the end of \autoref{sec:pf-thm1-details}.

3. While it is true that Bellman's algorithm gives $O(nt)$ time algorithm, the state-space of this algorithm can be  improved to (expected) $nt^{o(1)}$-time for $\pprod$, using a similar dynamic algorithm with a careful analysis. For details, see~\ref{sec:pprod-dynamic}.

\subsubsection{Space-efficient algorithms for variants of Subset Sum}
\begin{theorem}[Algorithms for finding solutions in low space]
\label{thm2:algo-lowspace}
There is a $\poly(knt)$-time and $O(\log (knt))$-space deterministic algorithm which solves $\ksssum{k}$.
\end{theorem}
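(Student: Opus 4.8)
The plan is to combine a \emph{counting} primitive in $O(\log(knt))$ space (the number-theoretic idea of Kane) with the standard ranking/self-reducibility trick, so that solutions are streamed one at a time to the (write-only) output tape and the working memory never has to hold an entire subset of $[n]$.

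First I would set up the counting primitive. For an instance $(a_1,\dots,a_n,t)$ with $t\le\sum_j a_j$ (otherwise there is nothing to output), let $N_t=\#\{S\subseteq[n]:\sum_{i\in S}a_i=t\}=[x^t]\prod_{j=1}^n(1+x^{a_j})$. Pick any prime $p$ with $p>\max(k,\sum_j a_j+1)$; one exists in a range of length $O(\max(k,nt))$ by Bertrand's postulate, and it is found by trial division storing only $O(\log(knt))$ bits. Working in $\F_p$ and using $\sum_{c\in\F_p^\times}c^m\equiv-1\pmod p$ when $(p-1)\mid m$ and $\equiv0$ otherwise, one obtains
\[
  -\sum_{c\in\F_p^\times}c^{-t}\prod_{j=1}^n(1+c^{a_j})\;\equiv\;\#\{S:\textstyle\sum_{i\in S}a_i\equiv t\!\!\pmod{p-1}\}\;\equiv\;N_t\pmod p,
\]
where the last step uses $p-1>\sum_j a_j\ge t$, so congruence modulo $p-1$ forces equality, and $N_t\le k<p$, so $N_t$ is recovered exactly. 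The right-hand side is evaluated by a double loop (over $c\in\F_p^\times$, then over $j\in[n]$ for the product) storing only a constant number of residues modulo $p$ and loop counters; each $c^{a_j}\bmod p$ and $c^{-t}\bmod p$ is computed by repeated squaring using $O(\log p)$ bits. Hence $N_t$ is computable in $O(\log(knt))$ space and $\poly(knt)$ time, and the same subroutine applies verbatim to any suffix sub-instance $(a_{i+1},\dots,a_n,t')$ with $0\le t'\le t$, since then $\sum_{j>i}a_j<p-1$ as well.

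Next I would enumerate. For $r=1,\dots,N_t$, output the $r$-th solution in lexicographic order on characteristic vectors. To produce it, I sweep $i=1,\dots,n$ maintaining only the index $i$, the partial sum $\tau$ of the already-fixed included $a_j$'s ($O(\log t)$ bits), and a residual rank $\rho$ (initialised to $r$, always at most $k$, so $O(\log k)$ bits). At step $i$, let $C_0$ be the count of the sub-instance $(a_{i+1},\dots,a_n,\,t-\tau)$ obtained from the primitive above; if $\rho\le C_0$ I write $x_i=0$ and continue, otherwise I write $x_i=1$ and set $\rho\leftarrow\rho-C_0$, $\tau\leftarrow\tau+a_i$. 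The observation that makes this fit in small space is that $C_0$ never exceeds $k$: every subset of $\{a_{i+1},\dots,a_n\}$ summing to $t-\tau$ extends the committed prefix to a genuine solution of the original instance, of which there are at most $k$; hence a single prime $p=\poly(knt)$ suffices for \emph{all} sub-instance counts, and $\rho$ stays bounded by $k$. A routine invariant ($\rho$ never exceeds the number of completions of the current prefix) shows the committed prefix is always extendable, so $\tau\le t$ throughout, and after position $n$ the written bits form exactly the $r$-th solution.

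The main obstacle, and where I would be most careful, is exactly this last point: arguing that the ranking recursion only ever queries sub-instance counts bounded by $k$ (not by the a priori bound $2^{n}$), so that one polynomially bounded prime works everywhere and no CRT reconstruction of exponentially large integers is needed — the rest is bookkeeping. Putting it together, the working tape holds $O(1)$ numbers of $O(\log(knt))$ bits, giving $O(\log(knt))$ space, and the running time is $k$ (values of $r$) times $n$ (positions) times the $\poly(knt)$ cost of a single counting call, hence $\poly(knt)$. Edge cases ($t=0$, zero entries $a_j=0$, $t>\sum_j a_j$) are handled directly and do not affect these bounds.
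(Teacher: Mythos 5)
Your proof is correct, and it takes a genuinely different route from the paper's. The paper attaches an auxiliary formal variable $y_i$ to each item, so that $p_t(y_1,\dots,y_n) := \mathrm{coef}_{x^t}\prod_i(1+y_i x^{a_i})$ is a multilinear, $\le k$-sparse polynomial whose monomials are precisely the solution sets; Kane's identity gives $O(\log)$-space black-box evaluations $p_t(c_1,\dots,c_n)$, and the solutions are then read off by running the Klivans--Spielman sparse-interpolation algorithm, which the paper argues can itself be carried out in $O(\log(knt))$ space. You instead keep Kane's identity as a pure \emph{counting} subroutine over a single variable and recover the solutions by the classical ranking/self-reducibility sweep, streaming each characteristic vector bit by bit. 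The load-bearing observation you make --- that every sub-instance count queried during the sweep is bounded by $k$, because any completion of a committed prefix yields a distinct solution to the original instance, so a single prime $p>\max(k,\sum_j a_j+1)$ suffices and no CRT reconstruction of exponentially large counts is ever needed --- is exactly what makes the elementary route go through, and you are right to flag it as the crux. The trade-off: your argument is more self-contained (it avoids having to verify that a multivariate sparse-interpolation procedure runs in logarithmic space, which is where the paper has to do the most delicate bookkeeping), while the paper's encoding-plus-interpolation viewpoint is more modular and would extend more readily if one wanted, say, weighted coefficients on the monomials rather than just a listing of solutions. One small point worth stating explicitly in your write-up is that the counting calls are adaptive (the target $t-\tau$ of the $i$-th call depends on earlier bits), which is harmless here precisely because each call is stateless and reuses the same $O(\log(knt))$ working cells.
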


{\noindent $\blacktriangleright$ \textsf{Remark}.}~~When considering low space algorithms outputting multiple values, the standard assumption is that the output is written onto a one-way tape which {\em does not} count into the space complexity; so an algorithm outputting $k n \log n$ bits (like in the above case) could use much less working memory than $kn\log n$; for a reference see~McKay and Williams~\cite{mckay2018quadratic}. 

\paragraph{\autoref{thm2:algo-lowspace} is better than the trivial.}~~Let us again compare with the trivial search-to-decision reduction time algorithm, as mentioned in \autoref{sec:efficient-time-algos}. For solving the decision problem in low space, we simply use Kane's $O(\log (nt))$-space $\poly(nt)$-time algorithm \cite{kane2010unary}. As explained (and improved) in~\cite{jin2021fast}, the time complexity is actually $O(n^3t)$ and the extra space usage is $\tilde{O}(n)$ for remembering the recursion stack. Thus the total time complexity is $O(kn^4t)$ and it takes $\tilde{O}(n) + O(\log t)$ space. While \autoref{thm2:algo-lowspace} takes $O(\log (knt))$ space and $\poly(knt)$ time. Although our time complexity is worse \footnote{Thm.~\ref{thm2:algo-lowspace} is {\em not about} time complexity; as long as it is pseudopolynomial time it's ok.}, when $k \le 2^{O((n\log t)^{1-\epsilon})}$, for $\epsilon > 0$, our space complexity is {\em better}.

\begin{theorem}(Algorithm for $\pprod$)
\label{thm:lowspace-pprod}
$\pprod$ can be solved deterministically in $O(\log^2 (nt))$ space and $\poly(nt)$-time.
\end{theorem}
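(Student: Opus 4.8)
The plan is to reduce $\pprod$ to a multidimensional subset-sum \emph{counting} problem via the prime factorization of $t$, fold the $r$ dimensions back into a single variable by a mixed-radix encoding that is faithful only on ``short'' subsets, and then evaluate the resulting count modulo small primes using the finite-field coefficient extraction underlying Kane's logspace algorithm for $\ssum$~\cite{kane2010unary}.

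First, factor $t=p_1^{e_1}\cdots p_r^{e_r}$ by trial division: this takes $\poly(t)$ time and $O(\log t)$ working space, and writing down the $r$ pairs $(p_j,e_j)$ costs $O(r\log t)=O(\log^2 t)$ space, since $r$ (the number of distinct prime factors of $t$) is $O(\log t/\log\log t)$. Discard every $a_i$ that equals $1$, exceeds $t$, has a prime factor outside $\{p_1,\dots,p_r\}$, or has $v_{p_j}(a_i)>e_j$ for some $j$; this loses no yes-instances, and for the surviving indices a subset $S$ satisfies $\prod_{i\in S}a_i=t$ iff the exponent vectors $\bigl(v_{p_1}(a_i),\dots,v_{p_r}(a_i)\bigr)$, $i\in S$, sum coordinatewise to $(e_1,\dots,e_r)$. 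We never store these vectors: each coordinate is recomputed from $a_i$ and the stored factorization in $O(\log t)$ space. Two facts will be used repeatedly: $\prod_j(e_j+1)$ is the divisor count $d(t)=t^{o(1)}$, and $\sum_j e_j\le\log_2 t$, so every such $S$ has $|S|\le\log_2 t$ because each surviving $a_i$ is at least $2$.

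Next, linearize. Set $L:=\lfloor\log_2 t\rfloor$, $w_1:=1$, $w_j:=\prod_{\ell<j}(Le_\ell+1)$, and for each surviving $i$ put $b_i:=\sum_j v_{p_j}(a_i)\,w_j$ and $B:=\sum_j e_j w_j$. Coordinatewise sums over any $S$ with $|S|\le L$ have $j$-th entry at most $Le_j$, so the mixed-radix map $(c_1,\dots,c_r)\mapsto\sum_j c_j w_j$ is injective on all such sums, giving
$$\#\{\,S:\prod_{i\in S}a_i=t\,\}\;=\;\#\{\,S:|S|\le L,\ \sum_{i\in S}b_i=B\,\}\;=\;\sum_{m=0}^{L}\,[u^m z^B]\prod_{i}(1+uz^{b_i}).$$
A short calculation using $r=O(\log t/\log\log t)$ and $\sum_j e_j\le\log_2 t$ shows $w_r$, $B$, and hence $\sum_i b_i$ are all $\poly(nt)$ — the only dangerous term $\prod_j(Le_j+1)$ is $t^{O(1)}$ precisely because $r\log\log t=O(\log t)$. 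Now use finite-field extraction: for a prime $M$ with $M-1>\max(n,\sum_i b_i)$ one has $[y^j]f\equiv-\sum_{x\in\F_M^{\times}}x^{M-1-j}f(x)\pmod M$ whenever $\deg f\le M-2$ and $0\le j\le M-2$; applying this in the two variables $u,z$ expresses the count $N$ modulo $M$ as a triple sum, over $m\le L$ and $x,y\in\F_M^{\times}$, of $y^{M-1-m}x^{M-1-B}\prod_i(1+yx^{b_i})$, computable in $\poly(nt)$ time and $O(\log M)=O(\log(nt))$ working space, recomputing each $b_i$ on the fly. Since $0\le N\le 2^n$, doing this for every prime $M$ in a fixed $\poly(nt)$-length window above $\sum_i b_i$ (whose product exceeds $2^n$) and answering ``yes'' iff some residue is nonzero is correct; the outer loop over these primes adds only $O(\log(nt))$ space, so together with the stored factorization the algorithm runs in $O(\log^2(nt))$ space and $\poly(nt)$ time.

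The step I expect to be the crux is the linearization. The textbook mixed-radix ``concatenation'' of the $r$ subset-sum coordinates is faithful only when the base exceeds the per-coordinate sums over \emph{all} of $[n]$, which can reach $ne_j$, and then the single target blows up to $n^{\Theta(\log t/\log\log t)}$, ruining the $\poly(nt)$ time bound. The remedy is the $|S|\le\log_2 t$ observation: it lets the base shrink to $\poly\log t$, after which $(\poly\log t)^{r}=t^{O(1)}$ stays polynomial. Enforcing $|S|\le\log_2 t$ inside the generating function — the auxiliary variable $u$, truncated to degree $\log_2 t$ — and verifying that every degree, modulus, and loop length remains polynomial is the technical heart of the argument. (With a little more care one can recompute the factorization on demand and bring the space down to $O(\log(nt))$, but $O(\log^2(nt))$ already suffices.)
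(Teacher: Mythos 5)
Your proof is correct, but it takes a genuinely different route from the paper's. The paper keeps the problem $k$-variate: it sets up $f(\x)=\prod_{i=1}^n\bigl(1+\prod_{j=1}^k x_j^{e_{ij}}\bigr)$ with one variable per prime factor of $t$, applies the $k$-variate version of Kane's identity (\autoref{lem:kane-extension}), and sums over $\overline{y}\in(\F_q^*)^k$, recomputing each $e_{ij}$ and $t_j$ on the fly so that only the $k\log q=O(\log^2(nt))$ bits of the current point $\overline{y}$ are stored. You instead collapse the $k$ subset-sum coordinates into a single target via a mixed-radix encoding, at the cost of one extra ``cardinality'' variable $u$ truncated at degree $L=\lfloor\log_2 t\rfloor$, and then apply bivariate Kane extraction. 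The engine that makes your folding sound is the observation (which the paper never invokes) that any solution has $|S|\le\log_2 t$, so the per-coordinate carry bound $Le_j$ is enough, and the encoded target $B$ and weights $b_i$ stay $\poly(nt)$ because $\prod_j(Le_j+1)\le L^r\,d(t)=t^{O(1)}$. What each buys: the paper's route needs no linearization and generalises directly to arbitrary $\simulsub$ instances, but its inner loop ranges over $(q-1)^k$ points with $q>n\log t$, and when $k=\Theta(\log t/\log\log t)$ this count, $(n\log t)^{\Theta(\log t/\log\log t)}$, is not obviously $\poly(nt)$ — the paper's stated running time implicitly relies on the same ``solutions are short'' phenomenon you make explicit. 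Your linearization replaces the $k$-fold product of evaluation points by a double loop over $(\F_M^\times)^2$ times a short loop of length $L$, which gives an unconditional $\poly(nt)$ bound without any side reasoning; the price is the one-time mixed-radix bookkeeping and the extra $u$-variable, both harmless. In short, your route is tighter on the time analysis while matching the paper's $O(\log^2(nt))$ space.
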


{\noindent $\blacktriangleright$~\textsf{Remark}.} We {\em cannot} directly invoke the theorem in \cite[Section~3.3]{kane2010unary} to conclude, since the reduction from $\pprod$ to $\simulsub$ requires $O(n \log (nt))$ space. Essentially, we use the same identity lemma as \cite{kane2010unary} and carefully use the space; for details see~\autoref{sec:low-space-pprod}.

\subsubsection{Reductions among variants of Subset Sum}

Using a pseudo-prime-factorization decomposition, we show that given a target $t$ in $\pprod$, it suffices to solve $\simulsub$ with at most $\log t$ many instances, where each of the targets are also `small', at most $O(\log \log t)$ bits.
\begin{theorem}[Reducing $\pprod$ to $\simulsub$]
\label{thm:pprod-to-sssum}
There is a deterministic polynomial time reduction from $\pprod$ to $\simulsub$.
\end{theorem}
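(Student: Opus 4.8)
The plan is to mimic the trivial reduction ``take discrete logarithms of the prime factorization of $t$'', while circumventing the fact that factoring $t$ is not known to be in deterministic polynomial time. If we could write $t = p_1^{e_1}\cdots p_r^{e_r}$ with the $p_j$ prime, then after discarding every $a_i$ with $a_i \nmid t$ (such an $a_i$ can never lie in a realizing set) and every $a_i = 1$ (which never affects a product), each surviving $a_i$ would be $\prod_j p_j^{g_{ij}}$, and $\prod_{i\in S} a_i = t$ would be equivalent to $\sum_{i \in S} g_{ij} = e_j$ for every $j \in [r]$ — i.e.\ exactly a $\simulsub$ instance on the $r$ targets $e_1,\dots,e_r$. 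Since the $p_j$ are distinct primes dividing $t$, we would have $r \le \log_2 t$ and each $e_j \le \log_2 t$, matching the promised parameters ($t=1$ is a trivial ``yes'' via $S=\emptyset$).

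To make this effective I would replace the prime factorization by a \emph{pseudo-prime-factorization}: a set $B = \{b_1,\dots,b_m\}$ of pairwise coprime integers $b_j \ge 2$ such that $t$ and each surviving $a_i$ is a product of powers of the $b_j$. Such a coprime base (gcd-free basis) of a finite set of integers is computable in deterministic polynomial time by the classical factor-refinement procedure — repeatedly replace a pair $(x,y)$ with a nontrivial common factor by $\gcd(x,y)$ together with the two cofactors, and iterate until the multiset is pairwise coprime. Given $B$, the exponent vectors $g_i = (g_{i1},\dots,g_{im})$ with $a_i = \prod_j b_j^{g_{ij}}$ and $f = (f_1,\dots,f_m)$ with $t = \prod_j b_j^{f_j}$ are read off by repeated trial division, again in polynomial time. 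The reduction outputs the $\simulsub$ instance consisting, for each $j \in [m]$, of the subset sum instance $(g_{1j},\dots,g_{nj},\,f_j)$.

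Correctness rests on uniqueness of the representation of an integer over a fixed coprime base: if $p$ is any prime dividing $b_j$, coprimality forces $p \nmid b_k$ for $k \ne j$, so the $p$-adic valuation of any $x = \prod_j b_j^{h_j}$ equals $h_j\cdot v_p(b_j)$, which pins down $h_j$. Hence $\prod_{i\in S} a_i = t$ iff $\prod_j b_j^{\sum_{i\in S} g_{ij}} = \prod_j b_j^{f_j}$ iff $\sum_{i\in S} g_{ij} = f_j$ for all $j$, which is exactly what the constructed instance asks. For the quantitative claim in the remark: once every $a_i \nmid t$ is discarded, each $b_j$ divides $t$, so $f_j \ge 1$; as the $b_j$ are pairwise coprime and $\ge 2$, $2^m \le \prod_j b_j^{f_j} = t$, giving $m \le \log_2 t$, while $f_j \le \log_2 t$ so each target uses $O(\log\log t)$ bits.

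I expect the only genuine obstacle is the one flagged above — obtaining the decomposition without a full factorization of $t$ — and this is dispatched entirely by the coprime-base algorithm. Everything else (the preprocessing that kills $a_i=1$, $a_i\nmid t$ and $t=1$; extracting the exponents by division; the size and time bookkeeping showing the whole reduction is polynomial) is routine.
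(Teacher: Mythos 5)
Your proposal is correct and follows essentially the same route as the paper: discard the $a_i$ not dividing $t$, compute a pairwise-coprime base (the paper's ``pseudo-prime-factor set'', your ``gcd-free basis''), read off exponent vectors by repeated division, and output the resulting $\simulsub$ instance with at most $\log_2 t$ targets each at most $\log_2 t$. The only cosmetic difference is that you invoke the classical factor-refinement procedure as a known black box, whereas the paper spells out its own divide-and-conquer variant (repeatedly splitting on $g=\gcd(a_1,a_i)$ and replacing each $a_j$ by $a_j//g$) and proves termination via a multiplicative potential function; both yield the same coprime base in polynomial time, so the reductions coincide.
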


{\noindent $\blacktriangleright$~\textsf{Remark}.}~~The reduction uses $\tilde{O}(n \log t)$ space as opposed to the following chain of reductions: $\pprod \le_{\P} \ssum \le_{\P} \simulsub$.
The first reduction is a {\em natural} reduction, from an input $(a_1,\hdots, a_n,t)$,  which takes $\log$ both sides and adjust (multiply) a `large' $M$ (it could be $O(n \log t)$ bit~\cite{kovalyov2010generic,pferschy2021approximating}) with $\log a_i$, to reduce this to a $\ssum$ instance with $b_i:=\lfloor M \log a_i \rfloor$. Therefore, the total space required could be as large as $\tilde{O}(n^2 \log t)$. The second reduction follows from~\autoref{thm:ssum-to-simulssum}. Therefore, ours is more space efficient. Motivated thus, we give an efficient randomized algorithm for $\simulsub$.

\begin{figure} \label{fig:reductions}
\begin{center}
\begin{tikzcd}
    &  & \mathsf{HWSSUM} \arrow[dd] &  & \\
    &  &    &  &\\
    \simulsub \arrow[rr] &  & \ssum \arrow[uu, shift right=2] \arrow[dd, shift left=2] \arrow[ll, shift right=2] \arrow[rr, shift left=2] &  & \mathsf{SSSUM} \arrow[ll]            \\
    &  &    &  &\\
    &  & \ubssum \arrow[uu] &  & 
\end{tikzcd}
\caption{Reductions among variants of the Subset Sum problem}
\end{center}
\end{figure}
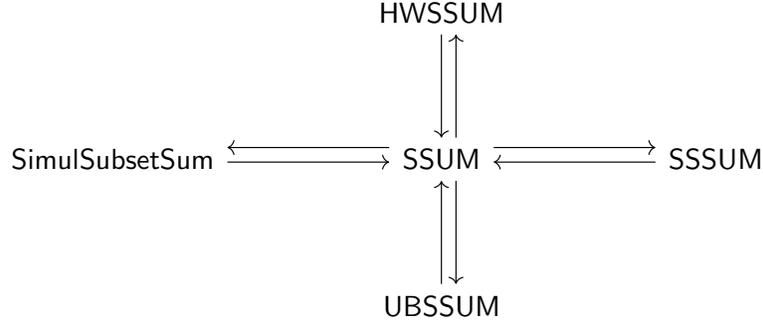

\medskip
In the latter part, we present few reductions among $\ssum$, $\ksssum{k}$ and $\simulsub$ problems. We also extend \autoref{prob1:k-sssum}-\ref{prob2:ham-k} to the {\em unbounded version} of the Subset Sum problem ($\ubssum$) and show similar theorems as above. For details, see~\autoref{sec:pprd-to-ssum}-\ref{sec:unbounded}.

\subsection{Related works}
Before going into the details, we briefly review the state of the art of the problems (\& its variants). After Bellman's $O(nt)$ dynamic solution \cite{bellman57}, Pisinger \cite{pisinger1999linear} first improved it to $O(nt/ \log t)$ on word-RAM models. Recently, Koiliaris and Xu gave a deterministic algorithm \cite{koiliaris2019faster,koiliaris2018subset} in time $\tilde{O}(\sqrt{n}t)$, which is the best deterministic
algorithm so far. Bringmann \cite{bringmann2017near} \& Jin et.al. \cite{jin2018simple} later improved the running time to randomized $\tilde{O}(n + t)$. All these algorithms require $\Omega(t)$ space. Moreover, most of the recent algorithms solve the decision versions. Here we remark that in~\cite{abboud2019seth}, the authors showed that $\ssum$ has no $t^{1-\epsilon} n^{O(1)}$ time algorithm for any $\epsilon > 0$, unless the Strong Exponential
Time Hypothesis (SETH) is false. Therefore, the $\tilde{O}(n + t)$ time bound is likely to be {\em near-optimal}. 

There have been a very few attempts to classically solve $\pprod$ or its variants. It is known to be $\NP$-complete and the reduction follows from the Exact Cover by 3-Sets (X3C) problem \cite[p.~221]{garey1979computers}. Though the knapsack and its approximation versions have been studied~\cite{kovalyov2010generic,pferschy2021approximating}, we do not know many classical algorithms and attempts to solve this, unlike the recent attention for the subset sum problem \cite{bringmann2017near,jin2018simple,jin2021fast,bringmann2021near}. 

In \cite{koiliaris2019faster} (also see~\cite[Lemma~2]{koiliaris2018subset}), the authors gave a deterministic $\tilde{O}(nt)$ algorithm that finds all the hamming weights for all realisable targets less than equal to $t$. Their algorithm {\em does not} depend on the number of solutions for a particular target. Compared to this, our~\autoref{thm1:hamming} is {\em faster} when $k=o(n/(\log n)^c)$, for a large constant $c$. Similarly, with the `extra' information of $k$, we give a {\em faster} deterministic algorithm (which even outputs all the hamming weights of the solutions) compared to $\tilde{O}(\sqrt{n}t)$ decision algorithm in \cite{koiliaris2018subset,koiliaris2019faster} (which outputs all the realisable subset sums $\le t$), when $k = o(\sqrt{n}/(\log n)^c)$, for a large constant $c$. Here we remark that the $O(nt)$-time dynamic programming algorithm \cite{bellman57} can be easily modified to find all the solutions, but this gives an $O(n(k + t))$-time (and space) algorithm solution (for more details, see \autoref{appendex:k-SSSUM-trivial}).

On the other hand, there have been quite some work on solving $\ssum$ in $\mathsf{LOGSPACE}$. Elberfeld, Jakoby, and Tantau \cite{elberfeld2010logspace}, and Kane{\cite{kane2010unary}}
(2010) gave $O(\log nt)$ space $\poly(nt)$-time deterministic algorithm, which have been very recently improved to $\tilde{O}(n^2t)$-time and $\poly \log (nt)$ space. On the other hand, Bringmann \cite{bringmann2017near} gave a $nt^{1+\epsilon}$ time, $O(n \log t)$ space {\em randomized} algorithm, which have been improved to $O(\log n \log \log n + \log t)$ space by Jin et.al.\cite{jin2021fast}. Again, most of the algorithms are decision algorithms and do not output the solution set. In contrast to this, our algorithm (\autoref{algo:lowspace}) in \autoref{thm2:algo-lowspace} uses only $O(\log (knt))$ and outputs all the solution sets, which is near-optimal. 


Since subset sum can be solved in randomized $\tilde{O}(n+t)$ time~\cite{jin2018simple}, as mentioned before, one obvious way to solve $\pprod$ would be to work with $b_i:=\lfloor M \log a_i \rfloor$ and a $\mathcal{R}$, a range of target values $t'$ which could be as large as $M \log t$ such that $\pprod$ is YES iff subset sum instance with $b_i$ and $t' \in \mathcal{R}$ is YES. But $M$ could be as large as $O(n \cdot \left(\prod_{i} a_i\right)^{1/2})$. Therefore, although there is a randomized near-linear time algorithm for subset sum, when one reduces the instance of $\pprod$ to a subset sum instance, the target becomes very large, failing to give an $\tilde{O}(n+t)$ algorithm. 


Moreover, the general techniques, used for subset sum~\cite{bringmann2017near,jin2018simple,jin2021fast} seem to fail to `directly' give algorithms for $\pprod$. This is exactly why, in this work, the efficient algorithms have been indirect, {\em via} solving $\simulsub$ instances. 


\section{Preliminaries and Notations}

{\bf \noindent Notations.}~$\N, \Z$ and $\Q$ denotes the set of all natural numbers, integers and rational numbers respectively. Let $a,b$ be two $m$-bit integers .Then, $a//b$ denotes $a/b^e$ where $e$ is the largest non-negative integer such that $b^e\,|\,a$. Observe that $a//b$ is not divisible by $b$ and the time to compute $a//b$ is $O(m\log(m)\cdot \log(e))$. 

For any positive integer $n > 0$, $[n]$ denotes the set $\{1,2,\dots,n\}$ while $[a,b]$ denotes the set of integers $i$ s.t.~$ a \le i \le b$. Also, $2^{[n]}$ denotes the set of all subsets of $[n]$, while  $\log$ denotes $\log_2$. A weight function $w\,:\,[n]\,\longrightarrow\,[m]$, can be naturally extended to a set $S \in 2^{[n]}$, by defining $w(S)\,:=\,\sum_{i \in S}\,w(i)$. We also denote  $\tilde{O}(g)$ to be $g \cdot \poly(\log g)$.

$\F[x_1, \dots, x_k]$ denotes the ring of $k$-variate polynomials over field $\F$ and $\F[[x_1, \dots, x_k]]$ is the ring of power series in $k$-variables over $\F$. We will use the short-hand notation $\x$ to denote the collection of variables $(x_1, \dots, x_k)$ for some $k$. For any non-negative integer vector $\overline{e} \in \Z^{k}$, $\x^{\overline{e}}$ denotes $\prod_{i=1}^k x_i^{e_i}$. Using these notations, we can will write any polynomial $f(\x) \in \Z[\x]$ as $f(\x) = \sum_{\overline{e} \in S} f_e \cdot  \x^{\overline{e}}$ for some suitable set $S$. 

We denote $\cf_{\x^{\e}}(f)$, as the coefficient of $\x^{\e}$ in the polynomial $f(\x)$ and $\deg_{x_i}(f)$ as the highest degree of $x_i$ in $f(\x)$. Sparsity of a polynomial $f(x_1,\hdots,x_k) \in \F[x_1,\hdots, x_k]$ over a field $\F$, denotes the number of nonzero terms in $f$.

\begin{definition}[Subset Sum problem ($\ssum$)]
Given $(a_1, \dots, a_n, t) \in \Z_{\geq 0}^{n + 1}$, the subset sum problem is to decide whether $t$ is a realisable target with respect to $(a_1, \dots, a_n)$, i.e., there exists $S \subseteq [n]$ such that $\sum_{i \in S}a_i = t$. Here, $n$ is called the {\em size}, $t$ is the {\em target} and any $S \subseteq [n]$ such that $\sum_{i \in S} a_i = t$ is a {\em realisable set} of the subset sum instance.
\end{definition}

{\noindent \bf Assumptions}.~Throughout the paper, we assume that $t \ge \max a_i$ for simplicity. 

\begin{lemma}[{\cite[Isolation Lemma]{mulmuley1987matching}}]
\label{lem:isolation}
Let $n$ and $N$ be positive integers, and let $\mathcal{F}$ be an arbitrary family of subsets of $[n]$. Suppose $w(x)$ is an integer weight given to each element $x \in [n]$ uniformly and independently at random from $[N]$. The weight of $S \in \mathcal{F}$ is defined as $w(S) = \sum_{x \in S}w(x)$. Then, with probability at least $1 - n/N$, there is a unique set $S' \in \mathcal{F}$ that has the minimum weight among all sets of $\mathcal{F}$.
\end{lemma}


\begin{lemma}[Kane's Identity {\cite{kane2010unary}}] \label{lem:kane}
Let $f(x) = \sum_{i=0}^d c_i x^i$ be a polynomial of degree at most $d$ with coefficients $c_i$ being integers. Let $\F_q$ be the finite field of order $q = p^k > d+2$. For $0 \leq t \leq d$, define
\[r_t = \sum_{x \in \F_q^*}x^{q-1-t}f(x) = -c_t \in \F_q\]
Then, $r_t = 0 \iff c_t$ is divisible by $p$.
\end{lemma}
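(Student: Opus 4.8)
The plan is to reduce the statement to the elementary character-sum identity over a finite field and then use the degree hypothesis to isolate a single coefficient. The identity I will use is: for an integer $m \ge 0$,
\[
\sum_{x \in \F_q^*} x^{m} \;=\; \begin{cases} -1 \in \F_q, & (q-1) \mid m,\\[1mm] 0, & (q-1)\nmid m,\end{cases}
\]
where $x^0$ is read as $1$. First I would prove this: if $(q-1)\mid m$ then $x^m = 1$ for every $x\in\F_q^*$ by Fermat's little theorem, so the sum equals $|\F_q^*| = q-1 \equiv -1 \pmod p$; if $(q-1)\nmid m$, fix a generator $g$ of the cyclic group $\F_q^*$ and sum the geometric series $\sum_{j=0}^{q-2} g^{jm} = (g^{m(q-1)}-1)/(g^m-1) = (1-1)/(g^m-1) = 0$, valid since $g^m\neq 1$.

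Next I would expand $r_t$ by linearity of the sum over $x$:
\[
r_t \;=\; \sum_{x \in \F_q^*} x^{\,q-1-t}\,f(x) \;=\; \sum_{i=0}^{d} c_i \sum_{x \in \F_q^*} x^{\,q-1-t+i}.
\]
By the identity above the inner sum is nonzero (and then equal to $-1$) exactly when $(q-1)\mid(q-1-t+i)$, i.e.\ when $(q-1)\mid(i-t)$. Here the quantitative hypothesis does its work: since $0\le i,t\le d$ we have $|i-t|\le d$, while $q>d+2$ gives $q-1>d$; hence the only multiple of $q-1$ in $[-d,d]$ is $0$, so $(q-1)\mid(i-t)$ forces $i=t$. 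For $i=t$ the exponent is $q-1-t+i = q-1\ge 1$, which indeed falls in the $(q-1)\mid m$ case, so that term contributes $c_t\cdot(-1)$ and all other terms vanish. Therefore $r_t = -c_t$ in $\F_q$.

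Finally, since $\F_q$ has characteristic $p$ and $c_t \in \Z$, the element $-c_t$ is zero in $\F_q$ if and only if $p \mid c_t$; this yields $r_t = 0 \iff p \mid c_t$, as claimed.

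\textbf{Main obstacle.} There is no substantial difficulty; the only point requiring care is the bookkeeping of exponent ranges that makes the condition $q > d+2$ effective — simultaneously guaranteeing that every index $i\neq t$ contributes $0$ and that the surviving $i=t$ term lands on the exponent $q-1$ (so it contributes exactly $-1$), which is what pins down $r_t = -c_t$ rather than some other multiple of $c_t$.
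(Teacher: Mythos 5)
Your proof is correct and uses exactly the same two character-sum identities ($\sum_{x\in\F_q^*}x^m=-1$ if $(q-1)\mid m$, and $0$ otherwise) and the same linearity-plus-degree-bound argument that the paper itself employs when it proves the bivariate generalization in Lemma~\ref{lem:kane-extension}; the paper cites the univariate case from \cite{kane2010unary} without a standalone proof, so your write-up is in effect the one-variable specialization of that proof. The only minor remark is that your argument needs only $q-1>d$, i.e.\ $q>d+1$, so the stated $q>d+2$ is slightly looser than necessary but of course still sufficient.
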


\begin{lemma}[Newton's Identities] \label{lem:newton}
Let $X_1, \hdots, X_n$ be $n \ge 1$ variables. Let $P_m(X_1, \dots, X_n) = \sum_{i=1}^n X_i^m$, be the $m$-th power sum and $E_{m}(X_1, \hdots, X_n)$ be the $m$-th elementary symmetric polynomials, i.e.,~$E_m(x_1,\hdots, x_n)= \sum_{1 \le j_1 \le \hdots \le j_m \le n}\,X_{j_1}\cdots X_{j_m}$, then 
$$m \cdot E_m(X_1, \dots, X_n) \;=\; \sum_{i=1}^m (-1)^{i-1} E_{m-i}(X_1, \dots, X_n) \cdot P_i(X_1, \dots, X_n)\;.$$
\end{lemma}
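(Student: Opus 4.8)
The plan is to prove the identity by a generating-function manipulation in the ring of formal power series $\Z[X_1,\dots,X_n][[t]]$, treating $t$ purely as a bookkeeping variable so that all coefficients stay honest polynomials with integer coefficients. First I would introduce the generating polynomial
\[
E(t) \;:=\; \prod_{i=1}^n (1 + X_i t) \;=\; \sum_{m \ge 0} E_m(X_1,\dots,X_n)\, t^m\,,
\]
adopting the conventions $E_0 = 1$ and $E_m = 0$ for $m > n$, so that the claimed relation for every $m \ge 1$ is captured uniformly by one power-series identity (in particular the degenerate range $m > n$ is then handled automatically).

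The key step is to differentiate $E(t)$ with respect to $t$. On one hand the product rule gives $E'(t) = \sum_{i=1}^n X_i \prod_{j \ne i}(1 + X_j t)$, so that $E'(t) = E(t) \cdot \sum_{i=1}^n \frac{X_i}{1 + X_i t}$, where each $\frac{1}{1+X_i t}$ is read as the power series $\sum_{j \ge 0}(-X_i t)^j$ — legitimate because $1 + X_i t$ is a unit in $\Z[X_1,\dots,X_n][[t]]$, which is exactly why no formal logarithm is needed. Expanding $\frac{X_i}{1+X_i t} = \sum_{j \ge 0}(-1)^j X_i^{\,j+1}\, t^j$ and summing over $i$ collapses the right-hand factor to $\sum_{j \ge 0}(-1)^j P_{j+1}\, t^j$. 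Hence
\[
\sum_{m \ge 1} m E_m\, t^{m-1} \;=\; E'(t) \;=\; \Big(\sum_{m \ge 0} E_m\, t^m\Big)\Big(\sum_{j \ge 0}(-1)^j P_{j+1}\, t^j\Big)\,.
\]

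Finally I would extract the coefficient of $t^{m-1}$ from both sides. The left side contributes $m E_m$; the right side contributes $\sum_{a+b=m-1,\ a,b \ge 0} (-1)^b E_a P_{b+1}$, and the substitution $i = b+1$ (so $a = m-i$, with $1 \le i \le m$) rewrites this as $\sum_{i=1}^m (-1)^{i-1} E_{m-i} P_i$, which is precisely the asserted identity. I do not expect a genuine obstacle: the only delicate points are (i) justifying invertibility of $1 + X_i t$ as a power series so the passage $E'(t)/E(t) = \sum_i X_i/(1+X_i t)$ is valid without logarithms, and (ii) keeping the $E_m = 0$ convention in force so that the computation is correct also when $m > n$. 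An alternative, purely combinatorial route — a sign-reversing involution on pairs consisting of a squarefree monomial contributing to $E_{m-i}$ and a variable power contributing to $P_i$, cancelling all terms except those giving $m E_m$ — also works, but the generating-function argument above is the shorter one to write out in full.
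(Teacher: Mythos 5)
The paper never proves this lemma; it is invoked as a classical fact, with a pointer to the Wikipedia article on Newton's identities in the adjoining remark, so there is no in-paper argument to compare your proof against. Your generating-function derivation is correct and is one of the standard proofs: the logarithmic-derivative identity $E'(t) = E(t)\cdot\sum_{i} X_i/(1+X_i t)$, obtained without an actual logarithm because each $1+X_i t$ is a unit in $\Z[X_1,\dots,X_n][[t]]$, the geometric-series expansion collapsing the right factor to $\sum_{j\ge 0}(-1)^j P_{j+1}t^j$, and the coefficient extraction at $t^{m-1}$ with the reindexing $i=b+1$ all go through exactly as you wrote them. The two caveats you flag, namely invertibility of $1+X_it$ and the convention $E_m=0$ for $m>n$ (which the paper records in a remark immediately after the lemma), are the only real subtleties and you handle them properly. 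One small observation unrelated to the correctness of your argument: the paper's displayed definition of $E_m$ uses non-strict inequalities $1 \le j_1 \le \cdots \le j_m \le n$, which as written would describe the complete homogeneous symmetric polynomial rather than the elementary one; your proof correctly uses the intended strict-inequality (squarefree) $E_m$ coming from the product $\prod_i(1+X_i t)$, which is what the stated identity actually requires, so the paper's definition has a typo that your argument silently corrects.
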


\begin{remark}
$E_m(X_1,\hdots,X_n)=0$ when $m > n$ (for a quick recollection, see~\href{https://en.wikipedia.org/wiki/Newton\%27s\_identities}{wiki}).
\end{remark}

\begin{lemma}[Vieta's formulas] \label{lem:vieta}
Let $f(x) = \prod_{i=1}^n(x - a_i)$ be a monic polynomial of degree $n$. Then, $f(x) = \sum_{i=0}^n c_ix^i$ where $c_{n-i} = (-1)^{i}E_i(a_1, \dots, a_n), \forall 1 \leq i \leq n$ and $c_n = 1$.
\end{lemma}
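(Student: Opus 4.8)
The plan is to prove the identity by directly expanding the product $f(x) = \prod_{i=1}^n (x - a_i)$ over all ways of choosing, from each of the $n$ linear factors, either the $x$-term or the constant $-a_i$-term. Distributing the product gives
\[
f(x) \;=\; \prod_{i=1}^n (x - a_i) \;=\; \sum_{S \subseteq [n]} \Big(\prod_{i \in S} (-a_i)\Big)\, x^{\,n - |S|}\;,
\]
where the set $S$ records exactly the factors from which we selected the $-a_i$ summand (and the remaining $n - |S|$ factors contribute an $x$). Pulling the signs out of the inner product, $\prod_{i \in S}(-a_i) = (-1)^{|S|}\prod_{i \in S} a_i$.

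Next I would group the terms of this sum according to the size $m := |S|$ and read off the coefficient of $x^{n-m}$:
\[
c_{n-m} \;=\; (-1)^m \sum_{\substack{S \subseteq [n] \\ |S| = m}} \prod_{i \in S} a_i \;=\; (-1)^m\, E_m(a_1, \dots, a_n)\;,
\]
which is precisely the definition of the $m$-th elementary symmetric polynomial recalled in the excerpt. The case $m = 0$ gives $c_n = E_0 = 1$, and for $m > n$ there are no subsets of that size, consistent with $\deg f = n$. Renaming $m$ as $i$ yields $c_{n-i} = (-1)^i E_i(a_1,\dots,a_n)$ for $1 \le i \le n$ together with $c_n = 1$, which is the claim.

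As a fallback presentation — if one prefers to avoid writing the full distributive expansion explicitly — one can instead induct on $n$: the base case $n = 1$ reads $x - a_1 = x + (-1)^1 E_1(a_1)$ and is immediate; for the step, write $f(x) = (x - a_n)\cdot g(x)$ with $g(x) = \prod_{i=1}^{n-1}(x - a_i) = \sum_j c'_j x^j$, apply the inductive hypothesis to $g$, equate coefficients of $x^k$ in $f$ to get $c_k = c'_{k-1} - a_n c'_k$, and then invoke the Pascal-type recurrence $E_i(a_1,\dots,a_n) = E_i(a_1,\dots,a_{n-1}) + a_n E_{i-1}(a_1,\dots,a_{n-1})$ to check that the two sides agree.

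There is no genuinely hard step here; the only points demanding care are the sign bookkeeping — each selected $-a_i$ contributes one factor of $-1$, so a size-$m$ subset carries $(-1)^m$ — and the index shift between the degree drop $n - |S|$ and the subscript on $E$. I would present the expansion argument as the main proof, since it is the shortest and makes the combinatorial meaning of the coefficients transparent, and relegate the inductive version to a remark.
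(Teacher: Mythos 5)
The paper states Vieta's formulas as a preliminary lemma without supplying a proof of its own, so there is no reference proof to compare against. Your direct-expansion argument is correct and is the standard one: distribute the product over all subsets $S \subseteq [n]$, track the sign $(-1)^{|S|}$ coming from the chosen constant factors, and collect terms by $|S|$ to read off the elementary symmetric polynomials. The inductive fallback via $E_i(a_1,\dots,a_n) = E_i(a_1,\dots,a_{n-1}) + a_n E_{i-1}(a_1,\dots,a_{n-1})$ is also sound. One small point of care worth flagging: the paper's displayed definition of $E_m$ uses non-strict inequalities $1 \le j_1 \le \dots \le j_m \le n$, which, if read literally, describes the complete homogeneous symmetric polynomial rather than the elementary one; your identification $\sum_{|S|=m}\prod_{i\in S}a_i = E_m$ implicitly (and correctly) uses the intended strict-inequality definition, which is the one for which Vieta's formulas actually hold.
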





\begin{lemma}[Polynomial division with remainder~{\cite[Theorem 9.6]{von2013modern}}]\label{lem:poly-division}
Given a $d$-degree polynomial $f$ and a linear polynomial $g$ over a finite field $\F_p$, there exists a deterministic algorithm that finds the quotient and remainder of $f$ divided by $g$ in $\tilde{O}(d\log p)$-time.
\end{lemma}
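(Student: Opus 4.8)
The statement to prove is: given a degree-$d$ polynomial $f$ and a linear polynomial $g$ over $\F_p$, there is a deterministic algorithm computing the quotient and remainder of $f \div g$ in $\tilde O(d \log p)$ time.

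Let me think about this. We have $f$ of degree $d$, $g$ linear, so $g = cx + e$ with $c \neq 0$. We want $q, r$ with $f = qg + r$, $\deg r < 1$, i.e., $r$ is a constant. Since $g$ is linear, $\deg q = d - 1$.

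One approach: since $g$ is linear, we can normalize $g$ to be monic by dividing by $c$ (one field inversion, $\tilde O(\log p)$ time via extended Euclidean on integers or Fermat). So WLOG $g = x - a$ where $a = -e/c$. Then dividing $f$ by $x - a$ is exactly synthetic division (Horner's scheme): if $f = \sum_{i=0}^d f_i x^i$, then the quotient coefficients $q_{d-1}, \dots, q_0$ and remainder $r$ satisfy the recurrence $q_{d-1} = f_d$, $q_{j-1} = f_j + a q_j$ for $j = d-1, \dots, 1$, and $r = f_0 + a q_0 = f(a)$. This is $d$ iterations, each doing one multiplication by $a$ and one addition in $\F_p$, so $d$ field operations. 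Each field operation (addition, multiplication of $O(\log p)$-bit numbers) costs $\tilde O(\log p)$ using fast integer arithmetic, for a total of $\tilde O(d \log p)$. Finally, recover the quotient and remainder for the original (un-normalized) $g$: $q$ gets scaled by $1/c$ and $r$ stays the same constant (or one checks the bookkeeping — if $f = \tilde q (x-a) + r$ and $g = c(x-a)$, then $f = (\tilde q / c) g + r$, so the quotient is $\tilde q / c$, another $d$ scalar multiplications, still within budget).

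Alternatively one can just cite the general fast polynomial division result: for polynomials of degree $d$ over $\F_p$, division with remainder costs $O(M(d))$ field operations where $M(d) = \tilde O(d)$ is the cost of multiplying two degree-$d$ polynomials (Schönhage–Strassen / FFT-based), and each field operation is $\tilde O(\log p)$, giving $\tilde O(d \log p)$; this is precisely \cite[Theorem 9.6]{von2013modern}. But since $g$ is merely linear, the elementary synthetic-division argument already gives the bound without invoking fast multiplication, and is cleaner to present.

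I expect no real obstacle here — the only mild care needed is (i) handling the normalization of $g$ to monic form and correctly scaling the quotient back, and (ii) being precise that a single arithmetic operation on $O(\log p)$-bit integers, including reduction mod $p$, takes $\tilde O(\log p)$ bit-operations, so that the $d$ field operations of Horner's scheme aggregate to $\tilde O(d \log p)$. So the plan is: normalize $g$; run Horner/synthetic division to get $(\tilde q, r)$; rescale to get $(q, r)$; account the cost as $O(d)$ field operations $= \tilde O(d\log p)$ bit operations.
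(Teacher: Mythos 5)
Your proposal is correct. The paper itself gives no proof and simply cites \cite[Theorem 9.6]{von2013modern}, which is the general fast-division-with-remainder result (Newton iteration on the reversed divisor, costing $O(M(d))$ field operations). Your synthetic-division argument is a genuinely more elementary route that exploits the fact that $g$ is linear: Horner's scheme yields the quotient and remainder in $d$ field operations without any FFT-based machinery, and the normalization-and-rescale step to handle a non-monic $g$ is handled correctly. Both routes give the claimed $\tilde{O}(d\log p)$ bit-complexity bound; yours is cleaner and self-contained for the special case actually used in the paper (divisors of the form $x-\mu^i$, which are already monic, so even the rescaling step is vacuous there), while the citation buys generality to arbitrary-degree divisors that the paper does not need.
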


Next, we define ord$_p(a)$; this notion will be important later.
\begin{definition}[Order of a number $\bmod~p$] \label{def:ord}
The order of $a ~(\bmod~p)$, denoted as ord$_p(a)$ is defined to be the {\em smallest} positive integer $m$ such that $a^m \equiv 1~\bmod~p$.
\end{definition} 


\begin{theorem}[\cite{shparlinski1996finding}] \label{thm:primitive-root-finding}
There exists a $\tilde{O}(p^{1/4+\epsilon})$ time algorithm to determinstically find a primitive root over $\F_p$.
\end{theorem}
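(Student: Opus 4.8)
The plan is to split the task into two pieces: (i) computing the complete prime factorization $p-1 = \ell_1^{e_1}\cdots\ell_r^{e_r}$, and (ii) producing an explicit integer, provably small, that must be a primitive root. The point of (i) is that, once the $\ell_i$ are in hand, an element $g\in\F_p^*$ is a primitive root if and only if $g^{(p-1)/\ell_i}\not\equiv 1\pmod p$ for every $i\in[r]$; since $r = O(\log p)$ and each modular exponentiation costs $\tilde{O}(\log^2 p)$ bit operations, testing a single candidate takes only $\tilde{O}(\log^3 p)$ time. So the whole algorithm reduces to bounding the cost of (i) together with the number of candidates examined in (ii).

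For (i), I would invoke a deterministic factoring routine matching the target exponent, such as the Pollard--Strassen algorithm, which via fast polynomial multipoint evaluation produces a nontrivial factor of a composite integer $m$ (or certifies that $m$ is prime) in $\tilde{O}(m^{1/4})$ time. Running it recursively on $p-1$ and the successive cofactors factors $p-1$ completely: every subinstance is at most $p$, and $p-1$ has only $O(\log p)$ prime factors, so the total cost is $\tilde{O}(p^{1/4}\log p) = \tilde{O}(p^{1/4+\epsilon})$. (Equivalently, one may first strip all prime factors below $p^{1/4}$ by trial division and then apply Pollard--Strassen to the $p^{1/4}$-rough part, which has at most three prime factors; the bound is unchanged.)

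For (ii), I would test $g = 2, 3, 4, \dots$ in order and return the first one that passes the primitivity test above. Correctness and the running time both rely on the classical Burgess bound for the least primitive root: starting from his character-sum estimate $\bigl|\sum_{n\le N}\chi(n)\bigr| \ll N^{1-1/s}\,p^{(s+1)/(4s^2)+\epsilon}$ for non-principal $\chi\bmod p$ and any integer $s\ge 1$, together with the M\"obius-type expansion of the indicator of primitive roots over divisors of $p-1$, one concludes that the least primitive root modulo $p$ is $O_\epsilon(p^{1/4+\epsilon})$. Hence at most $O(p^{1/4+\epsilon})$ candidates are ever examined, each at cost $\tilde{O}(1)$, and the overall running time is $\tilde{O}(p^{1/4+\epsilon})$, the polylogarithmic overhead being absorbed by relabelling $\epsilon$.

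The hard part is really Step (ii). The algorithmic ingredients — fast integer and polynomial arithmetic, Pollard--Strassen, the $r$ modular exponentiations per candidate — are all routine, and the only thing that needs care is bookkeeping the time budgets. But the assertion that a primitive root occurs among the first $O(p^{1/4+\epsilon})$ integers is genuinely deep: it rests on Burgess's inequality, for which there is no elementary replacement strong enough (P\'olya--Vinogradov alone only yields roughly $p^{1/2}$). I would therefore take Burgess's theorem as a black box and devote the write-up to confirming that both the factorization step and the per-candidate test fit inside the claimed $\tilde{O}(p^{1/4+\epsilon})$ bound.
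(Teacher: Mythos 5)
The paper does not prove this statement; it imports it as a black box with a citation to Shparlinski (1996), so there is no internal proof to compare against. Your reconstruction is correct and matches the standard route to this bound (and, to my recollection, essentially Shparlinski's own argument): deterministically factor $p-1$ via Pollard--Strassen in $\tilde{O}(p^{1/4})$ time, invoke Burgess's character-sum bound to conclude the least primitive root is $O_\epsilon(p^{1/4+\epsilon})$, and enumerate candidates with the usual primitivity test $g^{(p-1)/\ell}\not\equiv 1 \pmod p$ over the prime divisors $\ell$ of $p-1$. One small bookkeeping slip: you write that each candidate costs ``$\tilde{O}(1)$,'' but per the paper's convention $\tilde{O}(g)=g\cdot\poly(\log g)$, so the per-candidate cost is $\poly(\log p)$, not $\tilde{O}(1)$; this does not affect the final bound since, as you note, the polylog factor is absorbed into the $p^\epsilon$.
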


\begin{theorem}[\cite{nagura1952interval}] \label{thm:prime-interval}
For $n \ge 25$, there is  a prime in the interval $[n, \frac{6}{5} \cdot n]$.
\end{theorem}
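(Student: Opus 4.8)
This is Nagura's theorem, and the plan is to prove it the Chebyshev way: translate the statement ``there is a prime in $[n,6n/5]$'' into an inequality between values of Chebyshev's functions, and then feed in explicit, non-asymptotic estimates for those functions.

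First I would reduce the statement. Recall $\theta(x) := \sum_{p \le x}\log p$ and $\psi(x) := \sum_{p^k \le x}\log p = \theta(x) + \theta(x^{1/2}) + \theta(x^{1/3}) + \cdots$, the sums running over primes $p$. If there were no prime in $[n,6n/5]$, then in particular $n$ itself is not prime, so $\theta(6n/5) = \theta(n-1) = \theta(n)$; hence it suffices to prove $\theta(6n/5) > \theta(n)$ for every integer $n \ge 25$. Since $\psi(x) - \theta(x) = \theta(x^{1/2}) + \theta(x^{1/3}) + \cdots = O(\sqrt{x}\,\log x)$, it is enough to obtain a lower bound on $\psi(6n/5) - \psi(n)$ that beats this error.

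Next I would build explicit two-sided bounds $A_1 x - E_1(x) < \psi(x) < A_2 x + E_2(x)$, valid for all $x$ past an explicit threshold $x_0$, with $A_2/A_1 < 6/5$ and $E_1,E_2$ of lower order. The classical route uses an auxiliary function such as Chebyshev's $T(x) := \log\lfloor x\rfloor! - \log\lfloor x/2\rfloor! - \log\lfloor x/3\rfloor! - \log\lfloor x/5\rfloor! + \log\lfloor x/30\rfloor!$: Legendre's formula for the exponent of a prime in a factorial rewrites $T$ in terms of $\psi(x),\psi(x/6),\psi(x/7),\dots$, while Stirling's formula gives $T(x) = cx + O(\log x)$ with $c = \tfrac12\log 2 + \tfrac13\log 3 + \tfrac15\log 5 - \tfrac1{30}\log 30 \approx 0.92129$. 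From $\psi(x) - \psi(x/6) \le T(x) \le \psi(x)$ one gets the lower bound $\psi(x) \ge cx - O(\log x)$ and, on iterating, the upper bound $\psi(x) \le \sum_{j \ge 0}T(x/6^j) + O(1) \le \tfrac{6c}{5}x + O((\log x)^2)$. This yields $(A_1,A_2) \approx (0.92129,\,1.10555)$, whose ratio is \emph{exactly} $6/5$ --- tantalisingly close but not enough, since then the leading $\tfrac{6c}{5}n$ terms of $\psi(6n/5)$ and $\psi(n)$ cancel and the error terms point the wrong way. So I would instead run the same scheme with a finer auxiliary function (a longer signed combination of $\log\lfloor x/d\rfloor!$ over more moduli, in the spirit of Sylvester's sharpening of Chebyshev) to push $A_1$ up toward $1$ and $A_2$ down toward $1$, getting a ratio comfortably below $6/5$ with explicit $E_1,E_2$. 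This is the step I expect to be the main obstacle: forcing the constants strictly inside the $6/5$ window while keeping every implied constant in the error terms explicit, so that $x_0$ becomes a concrete number.

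Finally I would close in two pieces. For $n \ge x_0$: $\psi(6n/5) - \psi(n) > A_1\cdot\tfrac{6n}{5} - E_1\!\left(\tfrac{6n}{5}\right) - A_2 n - E_2(n) = \left(\tfrac{6A_1}{5} - A_2\right)n - (\text{lower order})$, which is positive once $n$ exceeds an explicit bound, whence $\theta(6n/5) - \theta(n) \ge \psi(6n/5) - \psi(n) - O(\sqrt{n}\,\log n) > 0$ there. For the finite range $25 \le n \le x_0$ I would verify the claim by direct inspection --- e.g. list the primes up to $\tfrac{6}{5}x_0$ and check that no two consecutive primes in that range differ by more than a factor $6/5$ --- which is also exactly where $n = 24$, whose interval $[24,28.8]$ is prime-free, correctly falls outside the statement. (If one is willing to invoke modern explicit Chebyshev-type inequalities --- Rosser--Schoenfeld-style bounds pinning $\psi(x)/x$ within a few percent of $1$ --- the gap $A_2/A_1 < 6/5$ is immediate with room to spare, so only the reduction of the first step and the finite check remain; alternatively, an Erd\H{o}s-style argument on a ratio of factorials tuned to the ratio $6/5$, in the manner of the usual proof of Bertrand's postulate, trades Stirling-precise estimates for more delicate bookkeeping of prime-power exponents but has the same overall shape: an asymptotic inequality plus a finite check.)
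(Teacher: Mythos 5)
The paper does not prove this statement; it is invoked purely as a citation to Nagura (1952), exactly as one would cite Bertrand's postulate as a black box. So the only question is whether your blind reconstruction would actually close, and as written it does not. Your outline correctly identifies the Chebyshev-type method that Nagura in fact used, and your observation that the classical divisor set $\{1,2,3,5,30\}$ produces $A_2/A_1 = 6/5$ exactly under the naive iteration $\psi(x)\le\sum_{j\ge 0} T(x/6^j)$ is both accurate and a good explanation of why $6/5$ shows up at all; the reduction to $\theta(6n/5)>\theta(n)$ and the Stirling bookkeeping on $T(x)$ are sound.

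But the step that decides the theorem is precisely the one you defer. You flag forcing $A_2/A_1$ strictly below $6/5$ with fully explicit error terms and an explicit threshold $x_0$ as the main obstacle, and you neither carry it out nor commit to a concrete finer auxiliary function that would do it. Since the leading terms of $\psi(6n/5)$ and $\psi(n)$ cancel on the nose, the entire content of the theorem lives in that unresolved step, so the proposal is a plan rather than a proof. One remark on the route you sketch: a longer Sylvester-style signed divisor combination is not the only way out. The same $T$ already satisfies $T(x) \le \psi(x) - \psi(x/6) + \psi(x/7)$, because $e(y)=0$ on $[6,7)$, so the geometric iteration you used for the upper bound throws away usable slack; a sharper treatment of this very $T$, rather than a bigger modulus, is essentially how Nagura squeezes $\theta(x)/x$ strictly inside the window. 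Your Rosser--Schoenfeld fallback is valid and would close the gap immediately, but it turns the ``proof'' into a reduction to a later and heavier theorem plus a finite verification --- functionally the same move the paper itself makes by citing Nagura outright.
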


The following is a naive bound, but it is sufficient for our purpose.
\begin{lemma}\label{lem:inequality}
For integers $a \ge b \ge 1$, we have $(a/b)^b \le 2^{2\sqrt{ab}}$.
\end{lemma}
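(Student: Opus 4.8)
The plan is to prove the inequality $(a/b)^b \le 2^{2\sqrt{ab}}$ for integers $a \ge b \ge 1$ by taking logarithms and reducing it to a statement about a single real parameter. First I would take $\log_2$ of both sides: the claim is equivalent to $b \log_2(a/b) \le 2\sqrt{ab}$. Dividing through by $b$ (which is positive), this becomes $\log_2(a/b) \le 2\sqrt{a/b}$. So if I set $x := a/b \ge 1$, the whole statement collapses to the single-variable inequality $\log_2 x \le 2\sqrt{x}$ for all real $x \ge 1$.

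The remaining step is to verify $\log_2 x \le 2\sqrt{x}$ on $[1,\infty)$. I would do this by substituting $x = y^2$ with $y \ge 1$, turning it into $2\log_2 y \le 2y$, i.e. $\log_2 y \le y$. This last inequality is standard: for instance, $\log_2 y \le y$ holds for all $y \ge 1$ because $\ln y \le y - 1 \le y$ and $\log_2 y = \ln y / \ln 2 \le \ln y$ is false in general, so more carefully I would use $\log_2 y \le y$ which follows from the well-known bound $\ln(1+u) \le u$ applied appropriately, or simply from the fact that the function $y - \log_2 y$ has derivative $1 - 1/(y \ln 2)$, which is positive for $y \ge 1/\ln 2 \approx 1.44$, combined with a direct check on $[1, 1/\ln 2]$; alternatively just cite that $2^y \ge y$ for all $y \ge 0$, hence $y \ge \log_2 y$ whenever $y \ge \log_2 y$... cleanest is: for $y \ge 1$, $2^y \ge 1 + y \ge y$, so taking $\log_2$ gives $y \ge \log_2 y$.

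I do not expect any real obstacle here; the only mild care needed is the reduction from integers to reals (which is free, since the inequality we reduce to holds for all real $x \ge 1$, in particular for the rational value $a/b$) and choosing a clean justification for the elementary bound $\log_2 y \le y$. One can also avoid the substitution entirely and argue directly that $g(x) := 2\sqrt{x} - \log_2 x$ satisfies $g(1) = 2 > 0$ and $g'(x) = 1/\sqrt{x} - 1/(x \ln 2) = (\sqrt{x}\ln 2 - 1)/(x \ln 2) \ge 0$ for $x \ge 1/(\ln 2)^2 \approx 2.08$, together with a monotonicity/endpoint check on the small interval $[1, 1/(\ln 2)^2]$; but the $x = y^2$ substitution makes this bookkeeping unnecessary, so that is the route I would take.
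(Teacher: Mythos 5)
Your proof is correct and is essentially the same as the paper's: both reduce the claim to the elementary bound $y \le 2^y$ (equivalently $\log_2 y \le y$) for $y \ge 1$, applied at $y = \sqrt{a/b}$. The paper gets there in one step via the substitution $x = \sqrt{a/b}$ and the rewriting $(a/b)^b = x^{2b}$, $2^{2\sqrt{ab}} = 2^{2bx}$, while you arrive at the same place by taking logarithms and then substituting $x = y^2$; after the exploratory detours, your final justification ($2^y \ge 1+y \ge y$ for $y \ge 1$) is sound.
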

\begin{proof}
Let $x = \sqrt{a/b}$. We need to show that $x^{2b} \le 2^{2bx}$, which is trivially true since $x \le 2^x$, for $x \ge 1$.
\end{proof}






\section{Hardness results}
In this section, we prove some hardness results. These proofs are very standard, still after different feedback and reviews, we give the details for the brevity. 

Some of the algorithms presented in this paper consider that the number of solutions is {\em bounded} by a parameter $k$. This naturally raises the question whether the $\ssum$\ problem is hard, when the number of solutions is bounded. We will show that this is true even for the case when $k = 1$, i.e., $\usssum$\ is NP-hard under {\em randomized} reduction.

\begin{theorem}[Hardness of $\usssum$]
\label{thm:ssum-to-ussum}
There exists a randomized reduction which takes a $\ssum$\,instance $\mathcal{M} = (a_1, \dots, a_n, t) \in \Z_{\geq 0}^{n+1}$, as an input, and produces multiple $\ssum$\; instances $\mathcal{SS}_\ell = (b_1, \dots, b_n, t^{(\ell)})$, where $\ell \in [2n^2]$, such that if 
\begin{itemize}
 \setlength\itemsep{0.6mm}
    \item $\mathcal{M} \text{ is a YES instance of $\ssum$} \implies \exists \ell \text{ such that $\mathcal{SS}_{\ell}$ is a YES instance of $\usssum$}$;
    \item $\mathcal{M} \text{ is a NO instance of $\ssum$} \implies \forall \ell, \mathcal{SS}_{\ell} \text{ is a NO instance of $\usssum$}$.
\end{itemize}
\end{theorem}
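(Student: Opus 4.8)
The plan is to use the Isolation Lemma (\autoref{lem:isolation}) to pin down a unique realisable set, and then encode the isolating weight function into the $\ssum$ entries themselves. First I would take the family $\mathcal{F} = \{S \subseteq [n] : \sum_{i \in S} a_i = t\}$ of realisable sets of $\mathcal{M}$. Choose a random weight $w : [n] \to [N]$ with $N = 2n$ (so that the failure probability $n/N \le 1/2$); by \autoref{lem:isolation}, with probability at least $1/2$ there is a unique set $S^\star \in \mathcal{F}$ of minimum weight, \emph{provided} $\mathcal{F} \neq \emptyset$. The minimum possible weight lies in $\{0, 1, \dots, n N\} \subseteq \{0,\dots,2n^2\}$, which is where the $2n^2$ instances come from: we will guess the value of this minimum weight.

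The key step is the construction of $b_i$ and $t^{(\ell)}$. I would set $b_i := M \cdot a_i + w(i)$ for a multiplier $M$ large enough that the ``$w$-part'' cannot overflow into the ``$a$-part''; taking $M := nN + 1 > \sum_i w(i)$ suffices. Then for a subset $S$, $\sum_{i \in S} b_i = M \cdot (\sum_{i \in S} a_i) + w(S)$, and because $w(S) \le nN < M$, this equals $M t + \ell$ if and only if $\sum_{i \in S} a_i = t$ \emph{and} $w(S) = \ell$. So define $t^{(\ell)} := M t + \ell$ for $\ell \in \{0, 1, \dots, 2n^2\}$ (reindex to $[2n^2]$ or $\{1,\dots,2n^2+1\}$ as the statement requires — a cosmetic adjustment). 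For the completeness direction: if $\mathcal{M}$ is a YES instance, then $\mathcal{F} \neq \emptyset$, isolation succeeds with probability $\ge 1/2$, and letting $\ell^\star := w(S^\star)$, the instance $\mathcal{SS}_{\ell^\star}$ has $S^\star$ as its \emph{unique} solution (any other solution would have the same $a$-sum $t$ and the same $w$-sum $\ell^\star$, contradicting minimality/uniqueness of $S^\star$) — so $\mathcal{SS}_{\ell^\star}$ is a YES instance of $\usssum$. For the soundness direction: if $\mathcal{M}$ is a NO instance, then no $S$ has $\sum_{i\in S} a_i = t$, hence no $S$ can satisfy $\sum_{i \in S} b_i = M t + \ell$ for any $\ell$ in range (since that would force $\sum_{i \in S} a_i = t$), so every $\mathcal{SS}_\ell$ is a NO instance — deterministically, with no error.

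I would then note the reduction runs in polynomial time (sampling $n$ weights, computing $2n^2$ targets, each of bit-length $O(\log M + \log t) = O(\log(nt))$) and that the one-sided error is $\le 1/2$, which is the standard notion for a randomized many-one reduction witnessing $\NP$-hardness (amplifiable by independent repetition if desired). Since $\ssum$ is $\NP$-hard, this shows $\usssum$ is $\NP$-hard under randomized reduction.

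The main obstacle — really the only subtle point — is making sure the ``no overflow'' bookkeeping is airtight: one must verify that the maximum $w$-weight $nN$ is strictly less than the multiplier $M$ so that the high-order part $M\cdot(\sum a_i)$ and the low-order part $w(S)$ do not interfere, and that the guessed range $[0, nN]$ for the minimum isolating weight is large enough (indeed $nN = 2n^2$). Everything else is a direct application of \autoref{lem:isolation} plus this positional encoding. Care is also needed that in the NO case the argument is genuinely error-free — which it is, because the implication ``$\sum b_i = Mt+\ell \Rightarrow \sum a_i = t$'' holds for \emph{every} choice of the random weights.
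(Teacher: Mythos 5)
Your proposal is correct and follows essentially the same approach as the paper's proof: apply the Isolation Lemma with weights in $[2n]$, positionally encode the weight into each $b_i$ via a large multiplier so the $a$-part and $w$-part cannot interfere, and enumerate the $O(n^2)$ possible isolating weight values as targets. The only differences are cosmetic — you use the multiplier $M = 2n^2 + 1$ where the paper uses $4n^2$, and you index $\ell$ from $0$ to $2n^2$ where the paper uses $[2n^2]$; both choices satisfy the required strict inequality $|\ell - w(S)| < M$ and give the same reduction.
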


\begin{proof}
The core of the proof is based on the \autoref{lem:isolation} (Isolation lemma). The reduction is as follows. Let $w_1, \dots, w_n$ be chosen {\em uniformly at random} from $[2n]$. We define $b_i = 4n^2a_i + w_i, \forall i \in [n]$ and the $\ell^{th}$ $\ssum$\; instance as $\mathcal{SS}_{\ell} = (b_1, \dots, b_n, t^{(\ell)} = 4n^2 t + \ell)$. Observe that all the new instances are different only in the target values $t^{(\ell)}$.

Suppose $\mathcal{M}$ is a YES instance, i.e., $\exists S \subseteq [n]$ such that $\sum_{i \in S} a_i = t$. Then, for $\ell = \sum_{i \in S} w_i$, the $\mathcal{SS}_{\ell}$ is a YES instance, because

\[
\sum_{i \in S} b_i - t^{(\ell)}\,=\,4n^2\left(\sum_{i \in S} a_i-t\right) \,-\, \left(\ell-\sum_{i \in S} w_i\right)\,=\, 0\;.
\]
If $\mathcal{M}$ is a NO instance, consider any $\ell$ and $S \subseteq [n]$. Since $\mathcal{M}$ is a NO instance, $4n^2(\sum_{i \in S} a_i-t)$ is a non-zero multiple of $4n^2$, whereas $|\ell - \sum_{i \in S}w_i| < 4n^2$, which implies that 

\[
4n^2(\sum_{i \in S} a_i-t) - (\ell - \sum_{i \in S}w_i) \neq 0 \implies \sum_{i \in S} b_i \neq t^{(\ell)}\;.\]

Hence, $\mathcal{SS}_{\ell}$ is also a NO instance.

We now show that if $\mathcal{M}$ is a YES instance, then one of $\mathcal{SS}_{\ell}$ is a $\usssum$. Let $\mathcal{F}$ contain all the solutions to the $\ssum$\; instance $\mathcal{M}$, i.e.,~$\mathcal{F} = \{S | S \subseteq [n], \sum_{i \in S} a_i = t\}$. Since $w_i$'s are chosen uniformly at random, \autoref{lem:isolation} says that there exists a {\em unique} $S \in \mathcal{F}$, such that $w(S) = \sum_{i \in S} w_i$, is {\em minimal} with probability at least $1/2$. Let us denote this minimal value $w(S)$ as $\ell^*$. Then, $\mathcal{SS}_{\ell^*}$ is $\usssum$\; because $S$ is the only subset such that $\sum_{i \in S} w_i = \ell^*$.
\end{proof}

Next, we present a simple deterministic Cook's reduction from $\ssum$ to $\sssum{2}$. It is obvious to see that $\sssum{2} \in \NP$ which implies that $\sssum{2}$ is NP-complete. 

\begin{theorem}[Hardness of $\sssum{2}$] \label{thm:ssum-to-simulssum}
There is a deterministic polynomial time reduction from $\ssum$ to $\sssum{2}$.
\end{theorem}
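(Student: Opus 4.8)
The plan is to give a Cook reduction that turns one $\ssum$ instance $(a_1,\dots,a_n,t)$ into a single $\sssum{2}$ instance by introducing a second coordinate that forces the solution subset to have a prescribed Hamming weight, and then taking the logical OR over the $n+1$ possible weights. Concretely, for each choice of $w \in \{0,1,\dots,n\}$ I would build the $\sssum{2}$ instance whose $i$-th pair of weights is $(a_i, 1)$ for all $i \in [n]$, with target pair $(t, w)$. A subset $S$ satisfies this instance iff $\sum_{i \in S} a_i = t$ \emph{and} $|S| = w$. Thus the original $\ssum$ instance is a YES instance iff at least one of these $n+1$ instances is a YES instance of $\sssum{2}$, and each transformation is clearly polynomial-time (indeed the entries only grow by an additive coordinate bounded by $n$). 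This is a many-one-per-branch / Turing reduction, which is all the statement claims (it says ``deterministic polynomial time reduction,'' and the analogous $\ssum \to$ \textsf{HWSSUM} reduction discussed earlier in the paper is exactly of this Cook flavor).

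First I would state the construction precisely and verify the two directions of correctness: if $S$ is a realisable set for $(a_1,\dots,a_n,t)$, then it is a realisable set for the instance with $w = |S|$; conversely any realisable set for any of the $n+1$ instances is in particular a realisable set for the original $\ssum$ instance, by reading off only the first coordinate. Then I would note the reduction runs in time $\poly(n, \log t)$: we output $n+1$ instances, each of size essentially that of the input plus $O(n \log n)$ bits.

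If the paper insists on a genuine \emph{many-one} polynomial-time reduction rather than a Turing reduction, a slightly more careful construction works: pad with a distinguished coordinate. Since the question mark in the theorem is only about a polynomial-time reduction and the paper's figure and surrounding discussion treat these as Cook reductions (cf.\ the $\ssum \to \mathsf{HWSSUM}$ argument), I would present the OR-of-$(n+1)$-instances version, which is the cleanest. As an alternative single-instance route one could instead encode the weight constraint into the \emph{first} coordinate by the standard trick $a_i' = (n+1) a_i + 1$, target $t' = (n+1)t + w$ — but that only handles a fixed $w$, so one still OR's over $w$; the two-coordinate formulation is strictly cleaner and avoids carry issues.

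The main obstacle, such as it is, is purely presentational: making sure the reduction is of the form the paper wants (Turing vs.\ many-one) and confirming that $\sssum{2} \in \NP$ so that NP-completeness follows, which is immediate since a witness is a subset $S$ and checking $\sum_{i\in S} a_{i1} = t_1$ and $\sum_{i \in S} a_{i2} = t_2$ is polynomial-time. There is no real mathematical difficulty here; the content is the observation that a second linear constraint with all-ones coefficients is exactly a Hamming-weight constraint, which is the same phenomenon that made $\mathsf{HWSSUM}$ and hence $\hsssum{k}$ interesting earlier in the paper.
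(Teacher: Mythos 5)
Your proposal is correct, but it is not the paper's construction. The paper uses only \emph{two} auxiliary instances, not $n+1$: the second $\ssum$ coordinate is the indicator vector $(1,0,\dots,0)$ of the single element $a_1$, with target $b \in \{0,1\}$, so the extra constraint reads "$1 \in S$" or "$1 \notin S$". If the original instance is a YES, then depending on whether some solution contains $a_1$ or not, exactly one of $\mathcal{S}_0,\mathcal{S}_1$ is a YES; if it is a NO, both are. Your version replaces the indicator vector by the all-ones vector and ranges the target $w$ over $\{0,\dots,n\}$, which constrains the Hamming weight of $S$ instead of the membership of one fixed element. Both are correct deterministic polynomial-time Cook reductions; the paper's is slightly leaner ($2$ instances vs.\ $n+1$), and its structure is what the authors then iterate to get the $\sssum{\log}$ extension (branching on membership of the first $O(\log n)$ elements gives $\poly(n)$ instances), a generalisation that your Hamming-weight gadget does not directly support. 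Your observation that the second constraint with all-ones coefficients is precisely a Hamming-weight constraint is nevertheless a nice conceptual link back to the $\mathsf{HWSSUM}$ discussion in the introduction, and your remark that $\sssum{2} \in \NP$ (hence $\NP$-complete) matches the paper's brief note before the theorem.
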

\begin{proof}
Let $(a_1, \dots, a_n, t)$ be an instance of $\ssum$. Consider the following $\sssum{2}$ instances, $\mathcal{S}_b = [(a_1, \hdots, a_n, t),\, (1, 0, \dots, 0, b)]$, where $b \in \{0,1\}$. If the $\ssum$ instance is NO, then {\em both} the $\sssum{2}$ are also NO. If the $\ssum$ instance is a YES, then we argue that one of the $S_b$ instance must be YES. If $\ssum$ instance has a solution which contains $a_1$, then $\mathcal{S}_1$ is a YES instance whereas if it does not contain $a_1$, then $S_0$ is a YES instance.
\end{proof}

\paragraph{Extension to~$\sssum{\log}$.}
The above reduction can be trivially extended to reduce $\ssum$ to $\simulsub$, with number of $\ssum$ instances $k=O(\log n)$. In that case we will work with instances $S_{\boldsymbol b}$, for ${\boldsymbol b} \in \{0,1\}^k$. Since the number of instances is $2^{k} = \poly(n)$, the reduction goes through.

We will now show that $\sssum{2}$ reduces to $\ssum$ which again can be generalised to $\simulsub$, for any number of $\ssum$ instances $k$. 

\begin{theorem}[$\sssum{2}$ is {\em easier} than $\ssum$] \label{thm:simul-easier-subsum}
There is a deterministic polynomial time reduction from $\sssum{2}$ to $\ssum$.
\end{theorem}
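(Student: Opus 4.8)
The plan is to encode the two target equations of a $\sssum{2}$ instance into a single $\ssum$ equation by shifting one of them into high-order digits, the standard Kane-style concatenation trick, but carried out carefully so that the encoding stays polynomial in the input size. Suppose we are given a $\sssum{2}$ instance with vectors $(a_{11},\dots,a_{n1})$ with target $t_1$ and $(a_{12},\dots,a_{n2})$ with target $t_2$. By the remarks following Problem~\ref{prob1:k-ssssum} we may assume $t_1,t_2 \ge 1$, and by our global assumption each $a_{ij} \le t_j$. Set $M := n\cdot t_2 + 1$, so that any subset sum $\sum_{i\in S} a_{i2}$ lies strictly below $M$. Define $b_i := a_{i1}\cdot M + a_{i2}$ for each $i\in[n]$ and $t := t_1 \cdot M + t_2$, giving an $\ssum$ instance $(b_1,\dots,b_n,t)$ whose bit-length is $O(\log(n t_1 t_2))$, hence polynomial.

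The key step is a no-carry argument. For any $S\subseteq[n]$ we have $\sum_{i\in S} b_i = \left(\sum_{i\in S} a_{i1}\right)M + \sum_{i\in S} a_{i2}$. First I would observe that $0 \le \sum_{i\in S} a_{i2} \le n t_2 < M$, so the ``low part'' of $\sum_{i\in S} b_i$ in base $M$ is exactly $\sum_{i\in S} a_{i2}$ and the ``high part'' is exactly $\sum_{i\in S} a_{i1}$, with no interaction between the two. Consequently $\sum_{i\in S} b_i = t = t_1 M + t_2$ holds if and only if $\sum_{i\in S} a_{i2} = t_2$ \emph{and} $\sum_{i\in S} a_{i1} = t_1$ — I would spell out both directions, the forward direction using that $t_2 < M$ forces the base-$M$ digits to match termwise, and the backward direction being immediate substitution. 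This establishes that $S$ is a solution of the $\sssum{2}$ instance iff it is a solution of the constructed $\ssum$ instance, so one query to an $\ssum$ oracle suffices.

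The only thing that needs care — and what I would flag as the main (minor) obstacle — is bounding the magnitude of $M$ and hence of the $b_i$ and $t$: a naive choice such as $M = \sum_i a_{i2} + 1$ is fine, but one must use the global assumption $t_2 \ge \max_i a_{i2}$ (equivalently just bound $\sum_i a_{i2} \le n t_2$) to keep $\log M = O(\log(nt_2))$; otherwise the encoding could blow up. With $M$ polynomially bounded, all arithmetic (computing the $b_i$, computing $t$, and the single oracle call) is polynomial time, completing the reduction. I would close by noting this is exactly the two-instance specialization of Kane's padding construction, but that the point here is the clean deterministic equivalence rather than an algorithmic speedup.
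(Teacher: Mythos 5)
Your proof is correct and is essentially the same concatenation/padding argument the paper uses; the only cosmetic differences are that the paper takes the multiplier to be $\gamma := 1 + \sum_i a_{i1}$ rather than $n t_2 + 1$, and places the first instance rather than the second in the low-order digits, but the no-carry reasoning is identical in both.
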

\begin{proof}
Let $[(a_1, \dots, a_n, t_1), (b_1, \dots, b_n, t_2)]$ be a $\sssum{2}$ instance where without loss of generality~$t_1 \leq t_2$. Also, we can assume that $t_1 \leq \sum_{i=1}^n a_i$, otherwise it does not have a solution.

Now, consider the $\ssum$ instance $(\gamma b_1 + a_1, \dots, \gamma b_n + a_n, \gamma t_2 + t_1)$, where $\gamma := 1 + \sum_{i=1}^n a_i$. If the $\sssum{2}$ instance is YES, this implies that there exist $S \subseteq [n]$ such that $\sum_{i \in S} a_i = t_1$ and $\sum_{i \in S} b_i = t_2$. This implies that $\sum_{i \in S} \gamma b_i + a_i = \gamma t_2 + t_1$ and hence the $\ssum$ instance is also YES.

Now, assume that the $\ssum$ instance is YES, i.e., there exists $S \subseteq [n]$ such that $\sum_{i \in S} \gamma b_i + a_i = \gamma t_2 + t_1$. This implies that $\gamma ( t_2 - \sum_{i \in S}  b_i ) + (t_1 - \sum_{i \in S} a_i )= 0$. If $t_1 \ne \sum_{i \in S} a_i$, then from the previous equality, $(t_1 - \sum_{i \in S} a_i)$ is a non-zero multiple of $\gamma \implies |\,t_1 - \sum_{i \in S} a_i\, | \ge \gamma$. However, by our assumption, 

\[
t_1 \;\leq\; \sum_{i=1}^n a_i \;\implies\; t_1 - \sum_{i \in S} a_i \;\leq \;\sum_{i \in [n] \setminus S} a_i \;<\; 1+ \sum_{i \in [n]} a_i\;=\;\gamma\;.\]
Moreover, $t_1 - \sum_{i \in S} a_i > -\gamma$, holds trivially, since $\gamma > \sum_{i \in [n]} a_i$ and $t_1 >0$. Therefore, $|\,t_1 - \sum_{i \in S} a_i\, | < \gamma$ which implies that {\em both} $t_1 - \sum_{i \in S} a_i = 0$ and $t_2 - \sum_{i \in S}  b_i = 0$. Hence, the $\sssum{2}$ instance is also YES.
\end{proof}

\section{Time-efficient algorithms}

\subsection{Time-efficient algorithm for \texorpdfstring{$\hsssum{k}$}{}} 
\label{sec:pf-thm1}

In this section, we present an $\tilde{O}(k(n+t))$-time deterministic algorithm for outputting all the hamming weight of the solutions, given a $\hsssum{k}$ instance, i.e.,~there are only at most~$k$-many solutions to the $\ssum$ instance $(a_1,\hdots,a_n,t) \in \Z_{\ge 0}^{n+1}$. The basic idea is simple: We want to create a polynomial whose roots are of the form $\mu^{w_i}$, so that we can first find the roots $\mu^{w_i}$ (over $\F_q$), and from them we can find~$w_i$. To achieve that, we work with $k$-many polynomials $f_j:=\prod_{i=1}^n (1+\mu^{j} \cdot x^{a_i})$, for $j \in [k]$. Note that the coefficient of $x^t$ in $f_j$ is of the form $\sum_{i \le k} \lambda_i \cdot \mu^{jw_i}$ (\autoref{cl:eqs-ham}). By Newton's Identities (\autoref{lem:newton}) and Vieta's formulas (\autoref{lem:vieta}), we can now {\em efficiently} construct a polynomial whose roots are $\mu^{w_i}$. The details are given below.

\begin{proof}[Proof of \autoref{thm1:hamming}]
We start with some notations that we will use throughout the proof.

\paragraph{Basic notations.} Assume that the $\ssum$ instance $(a_1, \dots, a_n, t) \in \Z_{\ge 0}^{n+1}$ has {\em exactly} $m\ (m \le k)$ many solutions, and they have $\ell$ many {\em distinct} hamming weights $w_1, \hdots, w_{\ell}$; since two solutions can have same hamming weight, $\ell \le m$. Moreover, assume that there are $\lambda_i$ many solutions which appear with hamming weight $w_i$, for $i \in [\ell]$. Thus, $\sum_{i \in [\ell]}\lambda_i=m \le k$.

\paragraph{Choosing prime $q$ and a primitive root $\mu$.} We will work with a fixed $q$ in this proof, where $q> n+k+t:=M$ (we will mention why such a requirement later). We can find a prime $q$ in $\tilde{O}(n+k+t)$ time, since we can go over every element in the interval $[M, 6/5 \cdot M]$, in which we know a prime exists (\autoref{thm:prime-interval}) and primality testing is efficient \cite{agrawal2004primes}.  Once we find $q$, we choose $\mu$ such that $\mu$ is a {\em primitive root} over $\F_q$, i.e.,~ord$_q(\mu)=q-1$. This $\mu$ can be found in $\tilde{O}((n+k+t)^{1/4+\epsilon})$ time using \autoref{thm:primitive-root-finding}. Thus, the total time complexity of this step is $\tilde{O}(n+k+t)$.

\paragraph{The polynomials.} Define the $k$-many univariate polynomials as follows:
$$ f_{j}(x)\;:=\;\prod_{i \in [n]}\,(1+\mu^j x^{a_i})\;,\,\forall\,j\,\in\,[k]\;.$$
We remark that we do not know $\ell$ apriori, but we can find $m$ efficiently.
\begin{claim}[Finding the exact number of solutions] \label{cl:finding-numer-of-sol}
Given a $\hsssum{k}$ instance, one can find the exact number of solutions, $m$, deterministically, in $\tilde{O}((n+t))$ time.
\end{claim}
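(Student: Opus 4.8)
The plan is to count the exact number of solutions $m$ by evaluating the polynomial $f_1(x) = \prod_{i \in [n]} (1 + \mu x^{a_i})$ at a single point, namely at $x = 1$. Observe that $f_1(1) = \prod_{i \in [n]} (1 + \mu) = (1+\mu)^n$ carries no information about solutions, so instead the right object is the coefficient of $x^t$ in $f_1$ — but that counts each solution $S$ weighted by $\mu^{|S|}$, which may not directly give $m$ in characteristic $p$. A cleaner route: consider the polynomial $g(x) := \prod_{i \in [n]} (1 + x^{a_i})$ over $\F_q$. The coefficient $\cf_{x^t}(g)$ equals (the image in $\F_q$ of) the number of subsets summing to $t$, which is exactly $m$; since we chose $q > n + k + t \ge m$, this residue determines $m$ exactly as a non-negative integer. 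So the task reduces to extracting $\cf_{x^t}(g)$ over $\F_q$.

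First I would compute $g(x) \bmod (x^{t+1})$ over $\F_q$. Multiplying out the $n$ factors $(1 + x^{a_i})$ naively, truncating each intermediate product to degree $t$, costs $n$ multiplications of polynomials of degree at most $t$; using FFT-based multiplication over $\F_q$ (note $q = \tilde{O}(n+k+t)$, so field arithmetic is polylogarithmic) each such multiplication is $\tilde{O}(t)$, for a total of $\tilde{O}(nt)$ — which is too slow. To hit $\tilde{O}(n+t)$ I would instead use the standard trick for this exact generating function: group the $a_i$ by value, so $g(x) = \prod_{v} (1 + x^v)^{c_v}$ where $c_v$ is the multiplicity of value $v$; then use the $\dlog$/$\exp$ method (take the formal logarithm $\sum_v c_v \log(1 + x^v)$, which is a sparse-ish power series one can assemble in $\tilde{O}(t)$ time because $\sum_v 1/v \cdot (\text{number of terms up to degree } t) = \tilde{O}(t)$, then exponentiate back via Newton iteration in $\tilde{O}(t)$), all over $\F_q$. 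Summing the setup for finding $q$ and $\mu$ — already shown to be $\tilde{O}(n+k+t)$ in the preceding paragraph — with the $\tilde{O}(n+t)$ generating-function computation gives the claimed $\tilde{O}(n+t)$ bound. Then read off $m = \cf_{x^t}(g) \in \{0, 1, \dots, q-1\}$, interpreted as an ordinary integer; correctness is immediate since the true count is at most $k < q$.

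The main obstacle I expect is making the generating-function extraction genuinely near-linear rather than $\tilde{O}(nt)$: one must argue carefully that $\sum_v \lceil t/v \rceil = \tilde{O}(t + n)$ (using $\sum_{v=1}^{t} 1/v = O(\log t)$ and that there are at most $n$ distinct values $v$), and that $\log$ and $\exp$ of a power series modulo $x^{t+1}$ over $\F_q$ can be done in $\tilde{O}(t)$ field operations — this is classical (von zur Gathen–Gerhard) but requires that $q$ is large enough to invert small integers, which holds since $q > t$. A secondary subtlety is division by integers up to $t$ inside the $\log$ series: these are nonzero in $\F_q$ precisely because $q > t$, so the computation is well-defined, and this is exactly why the prime $q$ was chosen larger than $t$ in the first place. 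Once these pieces are in place, the claim follows; alternatively, if one is willing to accept $\tilde{O}(n+t) \cdot \poly\log$ slack, one can also just evaluate $f_1$ at enough points and interpolate, but the $\dlog$-$\exp$ route is the clean one.
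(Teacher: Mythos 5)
Your proof is correct and takes essentially the same route as the paper: the paper invokes its Coefficient Extraction Lemma (\autoref{lem:coeff-extraction}, a restatement of Jin and Wu's technique), which computes $\cf_{x^t}\bigl(\prod_{i\in[n]}(1+x^{a_i})\bigr)\bmod q$ via exactly the $\ln$/$\exp$ power-series method you describe, with $q>t$ ensuring the required small-integer inverses exist and $q>k\ge m$ ensuring the residue determines $m$ exactly. The only difference is presentational — you re-derive the $\tilde{O}(n+t)$ grouping-and-$\dlog$ argument, while the paper cites it as a black box.
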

\begin{proof}
Use \cite{jin2018simple} (see~\autoref{lem:coeff-extraction}, for the general statement) which gives a deterministic algorithm to find the coefficient of $x^t$ of $\prod_{i \in [n]}\,(1+x^{a_i})$ over $\F_q$; this takes time~$\tilde{O}((n+t))$. 
\end{proof}
Since we know the exact value of $m$, we will just work with $f_j$ for $j \in [m]$, which suffices for our algorithmic purpose. Here is an important claim about coefficients of $x^t$ in $f_j$'s.

\begin{claim}\label{cl:eqs-ham}
$C_j = \cf_{x^t}(f_j(x))\;=\;\sum_{i \in [\ell]}\,\lambda_i\cdot \mu^{jw_i}$, for each $j \in [m]$.
\end{claim}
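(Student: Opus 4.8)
\textbf{Proof plan for Claim~\ref{cl:eqs-ham}.}
The plan is to expand the product $f_j(x) = \prod_{i\in[n]}(1+\mu^j x^{a_i})$ and read off the coefficient of $x^t$ combinatorially. Expanding the product, a generic term is obtained by choosing, for each $i\in[n]$, either the summand $1$ or the summand $\mu^j x^{a_i}$; equivalently, by choosing a subset $S\subseteq[n]$ and taking $\prod_{i\in S}\mu^j x^{a_i} = \mu^{j|S|}\,x^{\sum_{i\in S}a_i}$. Hence
\[
f_j(x) \;=\; \sum_{S\subseteq[n]} \mu^{j|S|}\, x^{\sum_{i\in S}a_i}\,,
\]
and therefore $\cf_{x^t}(f_j(x)) = \sum_{S:\,\sum_{i\in S}a_i=t} \mu^{j|S|}$, where the sum ranges exactly over the realisable sets of the $\ssum$ instance.

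Next I would group the realisable sets by hamming weight. By the setup, there are $m$ realisable sets in total, with $\ell$ distinct hamming weights $w_1,\dots,w_\ell$, and exactly $\lambda_i$ of the realisable sets have hamming weight $w_i$. Since $|S|=w_i$ for each of those $\lambda_i$ sets, each contributes $\mu^{jw_i}$ to the sum, so collecting terms gives
\[
\cf_{x^t}(f_j(x)) \;=\; \sum_{i\in[\ell]} \lambda_i\, \mu^{jw_i}\,,
\]
which is the claimed identity; this holds as an identity over $\F_q$ for every $j$ (in particular for $j\in[m]$), since all manipulations are valid in the polynomial ring $\F_q[x]$.

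There is essentially no obstacle here: the only thing worth a word is that the identity is stated over $\F_q$, so one should note that the integer coefficient $\sum_{S}\mu^{j|S|}$ is simply reduced mod $q$, and that the $\lambda_i$ (which are at most $m\le k < q$) and the powers $\mu^{jw_i}$ are all interpreted in $\F_q$; no collision or cancellation argument is needed for the statement itself. (The fact that $\mu$ is a primitive root, and that $q>n+k+t$, will matter for later steps that recover the $w_i$ from the $C_j$ via Newton's identities, but not for this claim.)
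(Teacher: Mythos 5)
Your proof is correct and follows essentially the same argument as the paper: expand the product over subsets $S\subseteq[n]$, observe that a realisable set of hamming weight $w$ contributes $\mu^{jw}$ to the coefficient of $x^t$, and group the $m$ realisable sets by their $\ell$ distinct weights with multiplicities $\lambda_i$. The paper's proof is a terser version of exactly this expansion; your remarks about working over $\F_q$ are accurate but not needed for the claim itself.
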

\begin{proof}
If $S \subseteq [n]$ is a solution to the instance with hamming weight, say $w$, then this will contribute $\mu^{jw}$ to the  coefficient of $x^t$ of $f_j(x)$. Since, there are $\ell$ many weights $w_1, \hdots, w_{\ell}$ with multiplicity $\lambda_1, \hdots, \lambda_{\ell}$, the claim easily follows.  
\end{proof}
Using \autoref{lem:coeff-extraction}, we can find $C_j~\bmod~q$ for each $j \in [m]$ in $\tilde{O}((n+t \log (\mu j)))$ time, owing total $\tilde{O}(k(n+t))$, since $q=O(n+k+t)$, $\mu \le q-1$, and $\sum_{j \in [m]} \log j = \log (m!) \le \log (k!)=\tilde{O}(k)$.

Using the Newton's Identities (\autoref{lem:newton}), we have the following relations, for $j \in [m]$:
\begin{equation}\label{eq:thm1-relation}
    E_{j}\left(\mu^{w_1},\hdots, \mu^{w_{\ell}}\right) \,\equiv\, j^{-1} \cdot \left(\sum_{i=1}^j (-1)^{i-1}\,E_{j-i}\left(\mu^{w_1}, \dots, \mu^{w_k}\right) \cdot P_i\left(\mu^{w_1},\hdots, \mu^{w_{\ell}}\right)\right)\;\bmod~q\,.
\end{equation}
In the above, by $E_j(\mu^{w_1},\hdots,\mu^{w_{\ell}})$, we mean
$E_j(\underbrace{\mu^{w_1}, \hdots, \mu^{w_1}}_{\lambda_1~\text{times}}, \underbrace{\mu^{w_2}, \hdots, \mu^{w_2}}_{\lambda_2~\text{times}}, \hdots, \underbrace{\mu^{w_{\ell}}, \hdots, \mu^{w_{\ell}}}_{\lambda_{\ell}~\text{times}})$, and similar for $P_j$. Since $q>k$, $j^{-1} \bmod~q$ exists, and thus the above relations are valid. Here is another important and obvious observation, just from the definition of $P_j$'s:
\begin{observation}
\label{obs:thm1-cf-relation}
For $j \in [k]$, $C_j \equiv P_j\left(\mu^{w_1},\hdots, \mu^{w_{\ell}}\right)\bmod~q$.
\end{observation}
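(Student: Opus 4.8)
\textbf{Proof plan for \autoref{obs:thm1-cf-relation}.} The claim is essentially a restatement of \autoref{cl:eqs-ham} once one unwinds the convention for $P_j$ established just after \eqref{eq:thm1-relation}. The plan is to evaluate both sides of the asserted congruence on the multiset of roots and observe they coincide term-by-term. First I would invoke \autoref{cl:eqs-ham}, which (for $j \in [m]$, hence in particular for the range of $j$ we care about) gives
\[
C_j \;=\; \cf_{x^t}(f_j(x)) \;\equiv\; \sum_{i \in [\ell]} \lambda_i \cdot \mu^{j w_i} \pmod{q}.
\]
On the other side, recall the notational convention fixed in the paragraph following \eqref{eq:thm1-relation}: the symbol $P_j(\mu^{w_1}, \dots, \mu^{w_\ell})$ abbreviates $P_j$ evaluated on the length-$m$ tuple in which each $\mu^{w_i}$ is repeated $\lambda_i$ times (so that $\sum_{i \in [\ell]} \lambda_i = m$). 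By the definition of the power-sum polynomial in \autoref{lem:newton}, namely $P_j(X_1, \dots, X_N) = \sum_{s=1}^N X_s^j$, substituting this multiset yields
\[
P_j(\mu^{w_1}, \dots, \mu^{w_\ell}) \;=\; \sum_{i \in [\ell]} \lambda_i \cdot (\mu^{w_i})^j \;=\; \sum_{i \in [\ell]} \lambda_i \cdot \mu^{j w_i}.
\]

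Comparing the two displays gives $C_j \equiv P_j(\mu^{w_1}, \dots, \mu^{w_\ell}) \bmod q$, as required. There is no real obstacle here: the only thing to be careful about is bookkeeping the multiplicities $\lambda_i$ correctly — i.e., making explicit that ``$P_j$ of $\mu^{w_1}, \dots, \mu^{w_\ell}$'' means $P_j$ of the multiset, so that the exponent contributions from two distinct solutions sharing the same hamming weight are both counted — and noting that the power-sum $\sum_s X_s^j$ is additive over the multiset, so it collapses to the weighted sum $\sum_i \lambda_i \mu^{j w_i}$ that also appears in the coefficient extraction of \autoref{cl:eqs-ham}. This is exactly the shape needed to feed into the Newton's-identity recursion \eqref{eq:thm1-relation}, since there $P_i$ is precisely what \autoref{obs:thm1-cf-relation} lets us replace by the computable quantity $C_i$.
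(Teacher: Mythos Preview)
Your proposal is correct and follows exactly the approach the paper intends: the paper labels this an ``important and obvious observation, just from the definition of $P_j$'s,'' and your argument simply makes that explicit by combining \autoref{cl:eqs-ham} with the multiset convention for $P_j$ stated after \eqref{eq:thm1-relation}. The only cosmetic point is that \autoref{cl:eqs-ham} is stated for $j\in[m]$ while the observation is stated for $j\in[k]\supseteq[m]$, but the claim's proof works verbatim for all $j\in[k]$ (and in any case only $j\in[m]$ is ever used), so this is not a gap.
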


Note that we know $E_0 = 1$ and $P_j$'s (and $j^{-1} \bmod q$) are already computed. To compute $E_j$, we need to know $E_1, \hdots, E_{j-1}$ and additionally we need $O(j)$ many additions and multiplications. Suppose, $T(j)$ is the time to compute $E_1, \hdots, E_j$. Then, the trivial complexity is $T(m) \le \tilde{O}(k^2)+\tilde{O}(k(n+t))$. But one can do better than $\tilde{O}(k^2 )$ and make it $\tilde{O}(k )$ (i.e~solve the recurrence, using FFT),~owing the total complexity to $T(m) \le \tilde{O}(k(n+t))$ (since $q=O(n+k+t)$). For details, see \autoref{appendix-recurrence-fft}. 

Once, we have computed $E_j$, for $j \in [m]$, define a new polynomial 

\[
g(x)\;:=\; \sum_{j=0}^{m}\,(-1)^{j} \cdot E_j(\mu^{w_1},\hdots, \mu^{w_{\ell}}) \cdot x^j\;.\]
Using \autoref{lem:vieta}, it is immediate that $g(x) = \prod_{i=1}^{\ell} (x - \mu^{w_i})^{\lambda_i}$. Further, by definition, deg$(g)=m$. From $g$, now we want to extract the roots, namely~$\mu^{w_1}, \hdots, \mu^{w_{\ell}}$ over $\F_q$. We do this, by checking whether $(x-\mu^i)$ divides $g$, for $i \in [n]$ (since $w_i \le n$). Using \autoref{lem:poly-division}, a single division with remainder takes $\tilde{O}(k)$, therefore, the total time to find all the $w_i$ is $\tilde{O}(nk) = \tilde{O}(nk)$.

Here, we {\em remark} that we do not use the determinstic root finding or factoring algorithms (for e.g.~\cite{shoup1990deterministic,bourgain2015character}), since it takes $\tilde{O}(mq^{1/2}) = \tilde{O}(k \cdot (k+t)^{1/2})$ time, which could be larger than $\tilde{O}(k(n+t))$. 

\paragraph{Reason for choosing $q$ and $\mu$.} In the hindsight, there are three important properties of the prime $q$ that will suffice to successfully output the $w_i$'s using the above described steps:
\begin{enumerate}
    \item Since, \autoref{lem:coeff-extraction} {\em requires} to compute the inverses of numbers upto $t$, hence, we would want $q > t$.
    \item While computing $E_{j}(\mu^{w_1}, \dots, \mu^{w_k})$ using \autoref{lem:newton} in the above, one should be able to compute the inverse of all $j$'s less than equal to $m$. So, we want $q > m$,.
    \item To obtain $w_i$ from $\mu^{w_i} \bmod~q$, we want ord$_{q}(\mu)>n$ (for definition see \autoref{def:ord}). Since, $w_i \le n$, this would ensure that we have found the correct $w_i$.
\end{enumerate}

Here, we remark that we do not need to concern ourselves about the `largeness' of the coefficients of $C_j$ and make it nonzero $\bmod~q$, as  required in \cite{jin2018simple}. For the first two points, it suffices to choose $q >k+t$. Since $\mu$ is a primitive root over $\F_q$, this guarantees that ord$_q(\mu) = q-1 > n$ and thus we will find $w_i$ from $\mu^{w_i}$ correctly.


\paragraph{Total time complexity.} The complexity to find the correct $m, q$ and $\mu$ is $\tilde{O}(n+k+t)$. Finding the coefficients of $g$ takes $\tilde{O}(k(n+t))$ and then finding $w_i$ from $g$ takes $\tilde{O}(n k)$ time. Thus, the total complexity remains $\tilde{O}(k(n+t))$. 
\end{proof}

{\noindent $\blacktriangleright$~\textsf{Remark}}.   
\label{remark:randomized-not-helping}
The above algorithm can be extended to find the multiplicities $\lambda_i$'s in $\tilde{O}(k(n+t) + k^{3/2})$ by finding the largest $\lambda_i$, by binary search, such that $(x-\mu^{w_i})^{\lambda_i}$ divides $g(x)$. Finding each $\lambda_i$ takes $\tilde{O}(m  \log (\lambda_i))$ over $\F_q$, for the same $q$ as above, since the polynomial division takes $\tilde{O}(m )$ time and binary search introduces a multiplicative $O(\log (\lambda_i))$ term. Since, $\sum_{i \in [\ell]} \log (\lambda_i) = \log \left( \prod_{i \in [\ell]} \lambda_i\right)$, using AM-GM, $\prod_{i \in [\ell]} \lambda_i \le (m/\ell)^{\ell}$, which is maximized at $\ell= \sqrt{m} \le \sqrt{k}$, implying $\sum_{i \in [\ell]} \log (\lambda_i) \le O(\sqrt{k} \log k)$. Since, $m \le k$, this explains the additive $k^{3/2}$ term in the complexity.
    
\begin{algorithm}[H]
\caption{Algorithm for $\hsssum{k}$}
\label{algo:hamming}

\vspace{1mm}
\KwIn{A $\ksssum{k}$\ instance $a_1, \dots, a_n, t$}
\KwOut{Hamming weights of all subsets $S \subseteq [n]$ such that $\sum_{i \in S} a_i = t$}

Using \autoref{lem:coeff-extraction}, find the number of solutions $m$ for the $\ksssum{k}$ instance $a_1, \dots, a_n, t$ and terminate if $m= 0$\;
Choose a prime $q$ from the interval $[k+t, 6/5 \cdot (k+t)]$ \;
Find a primitive root $\mu$ over $\F_q$\;
\For{$j \in [m]$}{
    Using $q$ in \autoref{lem:coeff-extraction}, find $C_j = \cf_{x^t}(f_j(x))$ where $f_j(x) = \prod_{i \ = n} (1+\mu^jx^{a_i})$\;
}
Compute $E_0, E_1, \dots, E_{m}$ from $P_1, \dots, P_{m}$ where $P_i \equiv C_i \mod{q}$ using FFT\;
$W = \{\}$\;
\For{$i \in [n]$}{
    \If{$(x-\mu^i) \ |\ g(x)$}{
        $W = W \cup \{i\}$\;
    }
}
\KwRet $W$\;
\end{algorithm}

\section{Time-efficient algorithm for \texorpdfstring{$\pprod$}{}}
\label{sec:rand-algo-pprod}


\noindent In this section, we give a randomized $\tilde{O}(n+t^{o(1)})$ expected time algorithm for $\pprod$. Essentially, we factor all the entries in the instance in $\OO(n + t^{o(1)})$ expected time.  Once we have the exponents, it suffices to solve the corresponding~$\simulsub$ instance. Now, we can use the efficient randomized algorithm for $\simulsub$ (\autoref{thm:randomized-algos}) to finally solve $\pprod$. So, first we give an efficient algorithm for $\simulsub$.

\begin{theorem}[Algorithm for $\simulsub$]
\label{thm:randomized-algos}
There is a randomized $\tilde{O}(kn+ \prod_{i \in [k]} (2t_i + 1))$-time algorithm that solves $\simulsub$, with target instances $t_1, \hdots, t_k$.
\end{theorem}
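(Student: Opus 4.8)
The plan is to reduce the decision problem to extracting a single coefficient of a multivariate generating polynomial modulo a well-chosen prime, and then to perform that extraction in near-linear time by a multivariate lift of the power-series ($\exp/\log$, equivalently Newton's identities) machinery behind \autoref{lem:coeff-extraction}. After the harmless preprocessing of Remark~2 (so $t_j\ge 1$), deleting every $i$ for which some $a_{ij}>t_j$ (such an $i$ lies in no solution, since then $\sum_{i\in S}a_{ij}\ge a_{ij}>t_j$ whenever $i\in S$), and setting aside the $z$ elements whose vector $\veca_i=(a_{i1},\dots,a_{ik})$ is identically zero (each is free and contributes a factor $2$ to the solution count), the surviving instance has a solution iff $\cf_{\x^{\vec t}}(g)\ne 0$ for
\[
g(\x)\;=\;\prod_{i}\bigl(1+\x^{\veca_i}\bigr),\qquad \x^{\veca_i}:=x_1^{a_{i1}}\cdots x_k^{a_{ik}},\qquad \x^{\vec t}:=x_1^{t_1}\cdots x_k^{t_k},
\]
since $\cf_{\x^{\vec t}}(g)=\#\{S:\sum_{i\in S}a_{ij}=t_j\ \forall j\}$. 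I would compute $2^{z}\cdot\cf_{\x^{\vec t}}(g)\bmod p$ for a prime $p$ drawn uniformly from $[\Lambda,2\Lambda]$ with $\Lambda=\Theta(n\log n+\max_j t_j)$: the true count is at most $2^n$, hence has at most $n$ prime divisors, so $p$ divides it with probability $\le 1/3$ (there are $\Theta(\Lambda/\log\Lambda)$ primes in the range by \autoref{thm:prime-interval}); meanwhile $p>\max_j t_j$ legalises every inverse used below, and $p$ has $O(\log(nt))$ bits so each arithmetic operation modulo $p$ costs $\tilde O(1)$. (For the ``counting modulo a prescribed $p>\prod_j(2t_j+1)$'' variant one simply takes that $p$.)

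All computation is done in $R:=\F_p[x_1,\dots,x_k]/(x_1^{t_1+1},\dots,x_k^{t_k+1})$, of $\F_p$-dimension $N:=\prod_j(t_j+1)$, in which one multiplication costs $\tilde O(N')$ with $N':=\prod_j(2t_j+1)$: Kronecker-encode $R$ into a univariate polynomial of degree $<N'$ (base $2t_j+1$ in coordinate $j$, so the product of two degree-$\le t_j$ factors does not overflow), multiply via fast (Kronecker / Sch\"onhage--Strassen) univariate multiplication, then decode and truncate. Since each $\x^{\veca_i}$ is nilpotent in $R$, $\log(1+\x^{\veca_i})=\sum_{j\ge1}\tfrac{(-1)^{j-1}}{j}\x^{j\veca_i}$ is a finite sum there, and $g\equiv\exp(L)$ in $R$ with $L:=\sum_i\log(1+\x^{\veca_i})$. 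I would first assemble $L$ by bucketing the $\veca_i$ by value and, for each distinct nonzero $\veca$ of multiplicity $m_{\veca}$, adding $m_{\veca}\sum_{j\ge1:\,j\veca\preceq\vec t}\tfrac{(-1)^{j-1}}{j}\x^{j\veca}$. A harmonic-sum estimate --- splitting distinct vectors by their first nonzero coordinate $r$, of which at most $N/\prod_{l\le r}(t_l+1)$ share a given $r$-th entry $v$, each contributing at most $\lfloor t_r/v\rfloor$ monomials, and using $\sum_v 1/v=\tilde O(1)$ with $t_l+1\ge2$ to sum the geometric series over $r$ --- bounds the number of monomial contributions by $\tilde O(N)$, so $L$ is built in $\tilde O(kn+N)$ time. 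Then I would compute $\exp(L)\bmod$ the ideal (equivalently run Newton's identities) by fast \emph{box}-truncated multivariate power-series exponentiation, adding variables one at a time: $\exp$ in $R_j:=\F_p[x_1,\dots,x_j]/(x_1^{t_1+1},\dots,x_j^{t_j+1})$ reduces to one exponentiation in $R_{j-1}$ times a univariate-in-$x_j$ Newton-iteration exponentiation of a series divisible by $x_j$ and truncated at $x_j^{t_j+1}$, which (via the Kronecker encoding of $R_{j-1}$) costs $\tilde O(\prod_{l\le j}(2t_l+1))$; summing over $j$, and using $2t_l+1\ge3$ so $\prod_{l\le j}(2t_l+1)\le N'/3^{k-j}$, gives a geometric series bounded by $\tilde O(N')$. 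Reading off $\cf_{\x^{\vec t}}(\exp L)$, multiplying by $2^z$, and answering ``yes'' iff nonzero, the total time is $\tilde O(kn+N')=\tilde O\!\bigl(kn+\prod_{i\in[k]}(2t_i+1)\bigr)$.

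Correctness follows from the identity $\prod_i(1+\x^{\veca_i})=\exp\!\bigl(\sum_i\log(1+\x^{\veca_i})\bigr)$ in $R$ (valid since $p>\max_j t_j$ and the $x_j$ are nilpotent), from the random-prime bound above, and from the fact that every solution of the original instance splits as a survivor-solution together with an arbitrary subset of the $z$ zero-vector elements (so the full count is $2^z$ times $\cf_{\x^{\vec t}}(g)$). The step I expect to be the main obstacle is the box-truncated multivariate exponential: showing it can be done in $\tilde O(N')$ rather than the naive $\tilde O((\max_j t_j)\cdot N')$. The one-variable-at-a-time reduction above achieves this, but one must verify that each inner Newton iteration, run over the coefficient ring $R_{j-1}$, terminates in $O(\log t_j)$ rounds (it does, the series being divisible by $x_j$) and meshes correctly with the Kronecker encoding of $R_{j-1}$; the rest --- preprocessing, the $\tilde O(N)$ bound on $|\supp(L)|$, and the prime selection --- is routine.
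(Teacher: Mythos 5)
Your proposal is correct and follows the same core route as the paper: reduce $\simulsub$ to the nonvanishing of $\cf_{\x^{\vec t}}$ of $\prod_i(1+\x^{\veca_i})$, work modulo $\langle x_1^{t_1+1},\dots,x_k^{t_k+1},p\rangle$ for a random prime $p$, and extract the coefficient by the $\ln$/$\exp$ trick with fast truncated multivariate arithmetic (this is exactly the content of the paper's \autoref{lem:coefficient-extraction} and \autoref{lem:multivariate-exponentiation}). The details differ modestly. For the prime, the paper samples from $[N'+1,(n+N')^3]$ with $N'=\prod(2t_i+1)$, whereas your $p=\Theta(n\log n+\max_j t_j)$ is already sufficient --- one only needs $p>\max_j t_j$ so all the $j^{-1}$ used in the series exist, plus $\Omega(n)$ primes in the window to dodge the at most $n$ prime divisors of a count $\le 2^n$ --- so your choice is the cleaner one. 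For the exponential, the paper runs a Brent-style divide-and-conquer on the $x_1$-degree with multidimensional FFT (\autoref{algo:brent}); you instead peel off one variable at a time, doing a Newton iteration in $x_j$ over the coefficient ring $R_{j-1}$ under a Kronecker encoding, and control the sum over $j$ by the $3^{-(k-j)}$ geometric decay. Both give $\tilde O\bigl(\prod(2t_i+1)\bigr)$, and both rely on exactly the same ``the per-level work telescopes'' phenomenon. One small point in your favor: by bucketing over all distinct nonzero $\veca$ and doing the harmonic-sum accounting via the \emph{first nonzero coordinate}, your $\ln$ step cleanly handles exponent vectors with some (but not all) zero entries, a case the paper's \autoref{lem:coefficient-extraction} quietly skips when it asserts every contributed monomial has positive individual degree in every variable; your explicit set-aside of the $z$ all-zero vectors (contributing the $2^z$ factor) likewise makes the nilpotence of each $\x^{\veca_i}$, and hence the validity of the $\ln$/$\exp$ identity in $R$, fully rigorous.
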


\begin{proof}
Let us assume that the input to the $\simulsub$ problem are $k$ $\ssum$ instance of the form $(a_{1j}, \hdots, a_{nj}, t_j)$, for $j \in [k]$. Define a $k$-variate polynomial $f(\x)$, where $\x= (x_1, \dots, x_k)$, as follows:

\[
f(\x) \;=\; \prod_{i=1}^{n}\, \left( 1 \,+\, \prod_{j=1}^{k} \,x_j^{a_{ij}}\right)\;.\]
Here is an immediate but important claim. We denote the monomial $\m := \prod_{i=1}^{k} x_i^{t_i}$ and $\cf_{\m}(f)$ as the coefficient of $\m$ in the polynomial $f(\x)$.
\begin{claim}
There is a solution to the $\simulsub$ instance, i.e., $\exists S \subseteq [n]$ such that $\sum_{i \in S} a_{ij} = t_j, \forall j \in [k]$ {\em iff} $\cf_{\m} (f(\x)) \neq 0$.
\end{claim}
Therefore, it is enough to compute the coefficient of $f(\x)$. The rest of the proof focuses on computing $f(\x)$ efficiently, to find $\cf_{\m}(f)$. 

\smallskip
Let $p$ be prime such that $p \in [N+1, (n+N)^3]$, where $N:= \prod_{i=1}^k (2t_i + 1)$. Define an ideal $\calI$, over $\Z[\x]$ as follows: $\calI := \langle x_1^{t_1+1}, \hdots, x_k^{t_k+1}, p \rangle$. Since, we are interested in $\cf_{\m}(f)$, it suffices to compute  $f(\x)\bmod~\langle x_1^{t_1+1}, \hdots, x_k^{t_k+1}\rangle$, and we do it over a field $\F_p$ (which introduces error); for details, see the proof in the end (Randomness and error probability paragraph).

\smallskip
Using \autoref{lem:coefficient-extraction}, we can compute all the coefficient of $\ln(f(\x))\,\bmod\,\calI$ in time $ \tilde{O}(kn+\prod_{i=1}^k t_i)$. It is easy to see that the following equalities hold.

\[
f(\x)~\bmod~\calI \;\equiv\; \exp\left(\ln(f(\x))\right)~\bmod~\calI \;\equiv\;  \exp\left(\ln(f(\x))\bmod\,\calI\right)~\bmod~\calI\;.
\]
Since, we have already computed $\ln(f(\x))\,\bmod\,\calI$, the above equation implies that it is enough to compute the exponential which can be done using~\autoref{lem:multivariate-exponentiation}. This also takes time $\tilde{O}(kn+\prod_{i=1}^k (2t_i + 1))$. 

\paragraph{Randomness and error probability.} Note that there are $\Omega(n+N)^2$ primes in the interval $[N+1, (n+N)^3]$. Moreover, since $\cf_{\m}(f) \le 2^n$, at most $n$ prime factors can divide $\cf_{\m} (f(\x))$. Therefore, we can pick a prime $p$ randomly from this interval in $\poly(\log(n+N))$ time and the the probability of $p$ dividing the coefficient is $O(n+N)^{-1}$. In other words, the probability that the algorithm fails is bounded by $O\left((n+N)^{-1}\right)$. This concludes the proof. \qed

\end{proof}

We now compare the above result with some obvious attempts to solve $\simulsub$, before moving into solving $\pprod$.

{\noindent \bf A detailed comparison with time complexity of \cite{kane2010unary}.}~~Kane~\cite[Section~3.3]{kane2010unary} showed that the above problem can be solved deterministically in $C^{O(k)}$ time and $O(k \log C)$ space, where $C:= \sum_{i,j} a_{ij} + \sum_{j} t_j + 1$, which could be as large as $(n+1) \cdot (\sum_{j \in [k]} t_j) + 1$, since $a_{ij}$ can be as large as $t_j$. As argued in~\cite[Corollary~3.4 and Remark~3.5]{jin2021fast}, the constant in the exponent, inside the order notation, can be as large as $3$ (in fact directly using \cite{kane2010unary} gives larger than $3$; but modified algorithm as used in~\cite{jin2021fast} gives $3$). Use  AM-GM inequality to get
\[
\left( (n+1) \cdot (\sum_{j} t_j )+1\right)^{3k}\,>\, \left( \frac{2}{k} \cdot \sum_{j} t_j + 1\right)^{3k}\;\stackrel{\text{AM-GM}}{\ge}\; \prod_{j=1}^k\,\left(2t_j+1\right)^{3}\;.
\]
Assuming $N =\prod_{j=1}^k\,(2t_j+1)$, our algorithm is near-linear in $N$ while Kane's algorithm~\cite{kane2010unary} takes at $O(N^3)$ time; thus ours is almost a cubic improvement.

\medskip
{\noindent \bf Comparison with the trivial algorithm.}~~It is easy to see that a trivial $O(n\cdot (t_1+1) (t_2+1) \hdots (t_k+1))$ time {\em deterministic} algorithm for $\simulsub$ exists. Since, $t_i \ge 1$, we have 

\[\frac{n}{2} \cdot \prod_{i \in [k]} (1+t_i) \,\ge\, \frac{n}{2} \cdot 2^k \,\ge\, kn\,,\;\;\text{and}\;\; \frac{n}{2} \cdot \prod (1+t_i) \,\ge\, \frac{n}{2^{k+1}} \cdot \prod (2t_i+1)\,.
\]
Here, we used $2(1+x) > (2x+1)$, for any $x \ge 1$. Therefore, $n \cdot \prod_{i \in [k]} (1+t_i) \ge kn + n/2^{k+1} \cdot \prod (2t_i+1)$. Thus, when $k = o(\log n)$, our complexity is better.


\subsection{Proof of \autoref{thm:time-algos-pprod}} \label{sec:pf-thm1-details}

Once we have designed the algorithm for $\simulsub$, we design time-efficient algorithm for \autoref{thm:time-algos-pprod}.

\begin{proof}
Let $(a_1, \dots, a_n, t) \in \Z_{\ge 0}^{n+1}$ be the input for $\pprod$ problem. Without loss of generality, we can assume that all the $a_i$ divides $t$ because if some $a_i$ does not divide $t$, it will never be a part of any solution and we can discard it. Let us first consider the prime factorization of $t$ and $a_j$, for all $j \in [n]$. We will discuss about its time complexity in the next paragraph. Let

    \[t \;=\; \prod_{j=1}^{k} \,p_j^{t_j}\,,~~~~a_i \;=\; \prod_{j=1}^{k} \,p_j^{e_{ij}},\,\forall\,i\,\in\,[n]\,,\]
where $p_j$ are distinct primes and $t_j$ are positive integers and $e_{ij} \in \Z_{\ge 0}$. Since, $p_i \ge 2$, trivially, $\sum_{i=1}^k t_i \le \log(t)$, and $\sum_{i=1}^k e_{ij} \le  \log(t), j \in [n]$.  Also, the number of distinct prime factors of $t$ is at most $O(\log(t)/\log\log(t))$; therefore, $k = O(\log(t)/\log\log(t))$. 

\paragraph{Time complexity of factoring} 
To find all the primes that divides $t$, we will use the factoring algorithm given by Lenstra and Pomerance~\cite{lenstra1992rigorous} which takes expected $t^{o(1)}$ \footnote{Expected time complexity is $\exp(O(\sqrt{\log t \log \log t}))$, which is smaller than $t^{O(1/\sqrt{\log \log t})} =t^{o(1)}$, which will be the time taken in the next step. Moreover, we are interested in {\em randomized} algorithms, hence expected run-time is} time to completely factor $t$ into prime factors $p_j$ (including the exponents $t_j$). Using the primes $p_j$ and the fact that $0 \leq e_{ij} \leq \log(t)$, computing $e_{ij}$ takes $\log^2(t)\log\log(t)$ time, by performing binary search to find the largest $x$ such that $p_j^{x}\,|\, a_i$. So, the time to compute all exponents $e_{i,j}, \forall i \in [n], j \in [k]$ is $O(nk\log^2(t)\log\log(t))$. Since, $k \le O(\log t/\log\log(t))$, the total time complexity is $\tilde{O}(n + t^{o(1)})$.

\paragraph{Setting up $\simulsub$} Now suppose that $S \subseteq [n]$ is a solution to the $\pprod$ problem, i.e., $\prod_{i \in S} a_i = t$. This implies that 

    \[\sum_{i \in S} \,e_{ij} \;=\; t_j\,,~~\forall\,j \,\in\, [k]\;.\]
In other words, we have a $\simulsub$ instance where the $j^{th}$ $\ssum$ instance is $(e_{1j}, e_{2j}, \dots, e_{nj}, t_j)$, for $j \in [k]$. The converse is also trivially true. We now show that there exists an $\tilde{O}(kn + \prod_{i \in [k]} (2t_i + 1))$ time algorithm to solve $\simulsub$.

\paragraph{Randomized algorithm for $\pprod$} Using \autoref{thm:randomized-algos}, we can decide the $\simulsub$ problem with targets $t_1, \dots, t_k$ in $\tilde{O}(kn + \prod_{i \in [k]}(2t_i + 1))$ time (randomized) while working over $\F_p$ for some suitable $p$ (we point out towards the end). Since $k \leq O(\log(t)/\log\log(t))$, we need to bound the term $\prod_{i \in [k]} (2t_i + 1)$. Note that,
\begin{align*}
    \prod_{i \in [k]}\,(2t_i + 1) &\;=\; \sum_{S \subseteq [k]}\, 2^{|S|}\cdot\left(\prod_{i \in S}\,t_i\right) \\
    &\;\leq\; 2^{2k} \cdot \left(\prod_{i \in [k]} t_i\right)\;.
\end{align*}

We now focus on bounding the term $\prod_{i \in [k]} t_i$. By AM-GM,
\begin{align*}
    \prod_{i \in [k]} t_i \leq \left ( \dfrac{\sum_{i \in [k]} t_i}{k}\right)^{k} 
    &\leq \left ( \dfrac{\log(t)}{k}\right)^{k} \\
    &\leq 2^{O\left(\sqrt{k\log(t)}\right)}~~~~[Lemma~\ref{lem:inequality}] \\
    &\leq 2^{O\left(\sqrt{\log(t)^2/\log\log(t)}\right)} \\
    &\leq t^{O(1/\sqrt{\log\log(t)})} = t^{o(1)}
\end{align*}



Note that the prime $p$ in the \autoref{thm:randomized-algos} was $p \in [N+1, (n+N)^3]$, where $N:= \prod_{i=1}^k (2t_i+1)-1$. As shown above, we can bound $N=t^{o(1)}$. Thus, $p \le O((n+t^{o(1)})^3)$, as desired. Therefore, the total time complexity is $\tilde{O}(n\log(t)/\log\log(t) + t^{o(1)}) = \tilde{O}(n + t^{o(1)})$. This finishes the proof. \qed
\end{proof}

\paragraph{Removing the expected-time} If one wants to understand the worst-case analysis, we can use the polynomial time reduction from $\pprod$ to $\simulsub$ in \autoref{sec:pprd-to-ssum}. Of course, we will not get prime factorization; but the pseudo-prime factors will also be good enough to set up the $\simulsub$ with similar parameters as above, and the $\simulsub$ instance can be similarly solved in $\tilde{O}(n+t^{o(1})$ time. Since the reduction takes $n^2\poly(\log t)$ time, the total time complexity becomes $\tilde{O}(n^2+t^{o(1)})$.

\section{Space-efficient algorithms}

\subsection{Space-efficient algorithm for \texorpdfstring{$\ksssum{k}$}{}}
\label{sec:k-SSSUM-algos}

In this section, we will present a low space algorithm (\autoref{algo:lowspace}) for finding all the realisable sets for $\ksssum{k}$. Unfortunately, proof of \autoref{thm1:hamming} {\em fails}~to give a low space algorithm, since \autoref{lem:coeff-extraction} requires $\Omega(t)$ space (eventually it needs to store all the coefficients~$\bmod~x^{t+1}$). Instead, we work with a multivariate polynomial $f(x, y_1, \dots, y_n) = \prod_{i=1}^n (1+y_ix^{a_i})$ over $\F_q$,  for a large prime $q = O(nt)$ and its multiple evaluations $f(\alpha,c_1,\hdots,c_n)$, where $(\alpha,c_1,\hdots,c_n)\in \F_q^{n+1}$. 

Observe that, the coefficient of $x^t$ in $f$ is a multivariate polynomial $p_t(y_1, \dots, y_n)$; each of its monomial carries the {\em necessary information} of a solution, for the instance $(a_1, \dots, a_n, t)$. More precisely, $S$ is a realisable set of $(a_1, \dots, a_n, t) \iff \prod_{i \in S} y_i$ is a monomial in $p_t$. And, the sparsity (number of monomials) of $p_t$ is at most $k$. Therefore, it boils down to reconstruct the multivariate polynomial $p_t$ efficiently. We cannot use the trivial multiplication since it takes $\tilde{O}(2^nt)$ time! Instead, we use ideas from~\cite{kane2010unary} and \cite{klivans2001randomness}.

\begin{proof}[Proof of \autoref{thm2:algo-lowspace}]
Here are some notations that we will follow throughout the proof.

\paragraph{Basic notations.} Let us assume that there are exactly $m\ (m \le k)$ many realisable sets $S_1, \hdots, S_m$, each $S_i \subseteq [n]$. We remark that for our algorithm we do not need to apriori calculate $m$.

\paragraph{The multivariate polynomial.} For our purpose, we will be working with the following $(n+1)$-variate polynomial: 
$$
f(x,y_1,\hdots,y_n)\;:=\; \prod_{i \in [n]}\,\left(1+ y_i x^{a_i}\right)\;.
$$
Since, we have a $\ksssum{k}$ instance $(a_1,\hdots, a_n,t)$, $\cf_{x^t}(f)$ has the following properties.

\vspace{-.5mm}
\begin{enumerate}
    \item It is an $n$-variate polynomial $p_t(y_1, \dots, y_n)$ with sparsity {\em exactly} $m$.
    \item $p_t$ is a multilinear polynomial in $y_1,\dots, y_n$, i.e.,~individual degree of $y_i$ is at most $1$.
    \item The total degree of $p_t$ is at most $n$.
    \item if $S \subseteq [n]$ is a realisable set, then ${\mathbf y}_S:= \prod_{i \in S} y_i$, is a monomial in $p_t$.
\end{enumerate}

In particular, the following is an immediate but important observation.
\begin{observation}
$p_t(y_1,\hdots,y_n)\,=\,\sum_{i \in [m]}\,\y_{S_i}$\,.
\end{observation}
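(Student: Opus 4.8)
The observation to prove is $p_t(y_1,\dots,y_n)=\sum_{i\in[m]}\y_{S_i}$, where $p_t=\cf_{x^t}(f)$ and $f=\prod_{i\in[n]}(1+y_ix^{a_i})$. The plan is to expand the product directly and track which monomials in $y$ appear with $x^t$. First I would write $f=\sum_{T\subseteq[n]}\left(\prod_{i\in T}y_i\right)x^{\sum_{i\in T}a_i}$, which follows by distributing the $n$ binary factors: each factor contributes either the $1$ or the $y_ix^{a_i}$ term, and the choice of factors contributing $y_ix^{a_i}$ is indexed by a subset $T\subseteq[n]$. Collecting the coefficient of $x^t$ then gives $p_t(y_1,\dots,y_n)=\sum_{T\subseteq[n]:\,\sum_{i\in T}a_i=t}\y_T$, where $\y_T:=\prod_{i\in T}y_i$.

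\textbf{Key steps.} The second step is to observe that the index set $\{T\subseteq[n]:\sum_{i\in T}a_i=t\}$ is, by the definition of realisable set, precisely $\{S_1,\dots,S_m\}$, the collection of all realisable sets of the instance $(a_1,\dots,a_n,t)$; there are exactly $m$ of them by the basic-notations assumption. The third step is to note that distinct subsets $T\neq T'$ give distinct monomials $\y_T\neq\y_{T'}$ in the (multilinear) ring $\F_q[y_1,\dots,y_n]$, so no cancellation or combination of terms occurs and the sum $\sum_{i\in[m]}\y_{S_i}$ has sparsity exactly $m$. Putting these together yields $p_t=\sum_{i\in[m]}\y_{S_i}$, which also immediately re-verifies the four enumerated properties of $p_t$ listed just above the observation (multilinearity, total degree $\le n$, sparsity $m$, and that realisable sets correspond to monomials).

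\textbf{Main obstacle.} Honestly, this observation is essentially a restatement of the polynomial-expansion identity together with the definition of realisable set, so there is no real obstacle — the only thing to be slightly careful about is that we are working over $\F_q$ (or $\Z$), and one should confirm that the identity $f=\sum_T\y_Tx^{\sum_{i\in T}a_i}$ is a genuine ring identity independent of the base ring, so that reducing mod $q$ later (as the algorithm does) does not merge distinct $y$-monomials; this holds because distinct multilinear monomials in the $y$-variables remain linearly independent over any commutative ring. Hence the proof is a one-line expansion, and I would present it as such.
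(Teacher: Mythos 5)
Your proof is correct and matches the paper's (implicit) reasoning: the paper simply asserts the observation as immediate after listing the four properties of $p_t$, and your direct expansion of $\prod_{i\in[n]}(1+y_ix^{a_i})$ into $\sum_{T\subseteq[n]}\y_T\,x^{\sum_{i\in T}a_i}$ followed by collecting the coefficient of $x^t$ is exactly the one-line computation being invoked. The additional remark that the identity is a ring identity valid over $\Z$ (hence over $\F_q$) and that distinct multilinear $y$-monomials cannot merge is a sensible sanity check but is not something the paper needs to spell out.
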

Therefore, it suffices to know the polynomial $p_t$. However, we cannot treat $y_i$ as new variables and try to find the coefficient of $x^t$ since the trivial multiplication algorithm (involving $n+1$ variables) takes $\exp(n)$-time. This is because, $f(x, y_1, \hdots, y_n)~\bmod~x^{t+1}$ can have $2^n \cdot t$ many monomials as coefficient of $x^i$, for any $i \le t$ can have $2^n$ many multilinear monomials. 

However, if we substitute $y_i=c_i \in \F_q$, for some prime $q$, we claim that we can figure out the value $p_t(c_1,\hdots, c_n)$ from the coefficient of $x^t$ in $f(x,c_1,\hdots,c_n)$  efficiently (see \autoref{cl:eff-eval}). Once we have figured out, we can simply interpolate using the following theorem to reconstruct the polynomial $p_t$. Before going into the technical details, we state the sparse interpolation theorem below; for simplicity we consider multilinearity (though \cite{klivans2001randomness} holds for general polynomials as well). 

\begin{theorem}[\cite{klivans2001randomness}] \label{thm:sparse-interpolation}
Given a black box access to a multilinear polynomial $g(x_1, \dots, x_n)$ of degree $d$ and sparsity at most $s$ over a finite field $\F$ with $|\F| \ge  (nd)^6$, there is a $\poly(snd)$-time and $O(\log (snd))$-space algorithm that outputs all the monomials of $g$.
\end{theorem}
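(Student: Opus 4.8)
The plan is to realize, in $O(\log(snd))$ working space, the classical substitution scheme for sparse interpolation: collapse $g$ to a univariate polynomial whose support records the monomials of $g$, recover that univariate polynomial one coefficient at a time, and then detect, for each recovered monomial, which of the $n$ variables it contains. First I would fix a prime $p$ and use the Kronecker-type substitution $\sigma_p\colon x_i\mapsto y^{\,2^{i-1}\bmod p}$ (for a general polynomial of individual degree below $d$ one uses base $d$ instead of base $2$); writing $T_1,\dots,T_m$ for the unknown monomials of $g$ (subsets of $[n]$, since $g$ is multilinear) and $c_j\ne 0$ for their coefficients, this produces $G_p(y)=\sum_{j} c_j\,y^{\,W_j\bmod p}$ with $W_j:=\sum_{i\in T_j}2^{i-1}$. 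Two monomials $T_a\ne T_b$ are merged by $\sigma_p$ only when $p\mid(W_a-W_b)$, and $0<|W_a-W_b|<2^n$ has fewer than $n$ distinct prime factors; summing over the at most $\binom{m}{2}\le s^2$ pairs, fewer than $s^2n$ primes are ``bad'', so some ``good'' prime $p^\ast$ (one that merges no pair) occurs among the first $\poly(sn)$ primes, and these primes can be streamed through, with primality tested by trial division, in logarithmic space. Since merging can only decrease the number of surviving terms while a good prime leaves all $m$ of them at distinct exponents, the good primes are exactly those maximising the term count of $G_p$; hence I can determine $m$, and lock onto one good $p^\ast$, using only $O(\log(snd))$ bits by recomputing term counts on demand.

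With $p^\ast$ fixed, $G_{p^\ast}$ has degree below $p^\ast=\poly(sn)$, so — working over $\F$, or over a $\poly(snd)$-size extension of $\F$ if $|\F|$ should be too small for the next step (the hypothesis $|\F|\ge(nd)^6$ is what normally makes this unnecessary) — I extract every coefficient via Kane's identity (\autoref{lem:kane}): $\cf_{y^\tau}(G_{p^\ast})=-\sum_{x\in\F^\ast}x^{\,|\F|-1-\tau}\,G_{p^\ast}(x)$, an iterated sum each of whose terms costs one black-box evaluation of $g$ at the point $\big(x^{\,2^{0}\bmod p^\ast},\dots,x^{\,2^{n-1}\bmod p^\ast}\big)$ together with $n$ modular exponentiations, all carried out in a constant number of $O(\log(snd))$-bit registers. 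This yields the nonzero exponents $\ell_j:=W_j\bmod p^\ast$ together with the $c_j$. To turn a label $\ell_j$ into the set $T_j$, for each $i\in[n]$ I re-run the same computation with the modified substitution $\sigma_{p^\ast}^{(i)}$ that sends $x_i\mapsto 0$ and agrees with $\sigma_{p^\ast}$ elsewhere: since $g$ is multilinear this kills precisely the monomials containing $x_i$ and leaves the labels of the others unchanged, so $i\in T_j$ iff $\ell_j$ lies in the support of $G_{p^\ast}$ but not in that of $G_{p^\ast}^{(i)}$ — again a logspace coefficient extraction. Emitting each monomial $\prod_{i\in T_j}x_i$ (and, if desired, its coefficient $c_j$) to a write-only output tape as soon as it is pinned down, the working space stays $O(\log(snd))$ while the running time is a fixed polynomial in $s,n,d$ and the cost of a black-box query.

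The hard part will not be any individual ingredient — each is a short argument — but the space discipline that ties them together: the univariate polynomials $G_p,\,G_{p^\ast},\,G_{p^\ast}^{(i)}$, the list of candidate primes, and the final list of monomials are all too large to keep, so every phase must be regenerated on demand, and one has to verify that scanning the primes, maximising the term count, extracting a named coefficient, comparing two supports position by position, and reading off the bits of a monomial are each $\mathsf{L}$-uniform iterated-sum/iterated-product computations over a field of size $\poly(snd)$, for which \autoref{lem:kane} together with standard logspace field arithmetic suffice. The one genuinely delicate point is certifying goodness of a prime purely from term counts, so that no witness need be remembered across re-scans; a secondary point is keeping the field large enough for Kane's identity, which is exactly where the hypothesis $|\F|\ge(nd)^6$ (possibly after a harmless bounded extension) is used.
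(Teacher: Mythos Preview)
Your proposal is correct and follows the same three-phase scheme the paper sketches for the Klivans--Spielman algorithm: an isolating univariate substitution, log-space coefficient extraction of the resulting univariate, and a per-variable perturbation to decode each monomial. Your implementation choices differ at every step, though: the paper (following \cite{klivans2001randomness}) fixes one prime $p>2s^2n$ and searches over the base $k$ in $x_i\mapsto y^{k^{i-1}\bmod p}$, whereas you fix the base at $2$ and search over primes $p$, certifying goodness by maximising the term count; the paper recovers coefficients via explicit Vandermonde inversion (single entries of a Vandermonde inverse being computable in $\mathsf{LOGSPACE}$), whereas you invoke Kane's identity (\autoref{lem:kane}), which is arguably more natural inside this paper's toolkit; and the paper tests membership of $x_i$ in a monomial by rescaling $x_i\mapsto 2y^{\cdots}$ and watching the corresponding coefficient change, whereas you zero $x_i$ out and watch the term disappear. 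All three swaps are harmless.

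One small slip worth fixing: the degree of $G_{p^\ast}$ is bounded by $n\cdot p^\ast$, not by $p^\ast$, since each monomial's exponent is a sum of up to $n$ residues each below $p^\ast$. This is still $\poly(sn)$, so nothing in your argument breaks, but the field size Kane's identity needs becomes $|\F|>np^\ast+2$, which can exceed $(nd)^6$ when $s$ is large relative to $n,d$; so your escape hatch of passing to a $\poly(snd)$-size extension is genuinely used here, not merely a precaution. The paper's Vandermonde route has the same issue in its naive form and likewise appeals to a more careful argument in \cite{klivans2001randomness} to get the stated $(nd)^6$ bound independent of $s$.
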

{\noindent $\blacktriangleright$~\textsf{Remark}.}~~
We represent one monomial in terms of indices (to make it consistent with the notion of realisable set), i.e.,~for a monomial $x_1x_5x_9$, the corresponding indices set is $\{1,5,9\}$. Also, we do not include the indices in the space complexity, as mentioned earlier. 

\paragraph{Brief analysis on the space complexity of \cite{klivans2001randomness}.} Klivans and Spielman  \cite{klivans2001randomness}, did not explicitly mention the space complexity. However, it is not hard to show that the required space is indeed $O(\log (snd))$. \cite{klivans2001randomness} shows that substituting $x_i = y^{k^{i-1} \bmod p}$, for some  $k \in [2s^2n]$ and $p > 2s^2n$, makes the exponents of the new univaraite polynomial (in $y$) {\em distinct} (see~\cite[Lemma~3]{klivans2001randomness}); the algorithm actually tries for all $k$ and find the correct $k$. Note that the degree becomes $O(s^2nd)$. Then, it tries to first find out the coefficients by simple univariate interpolation \cite[Section~6.3]{klivans2001randomness} . Since we have blackbox access to $g(a_1,\hdots,a_n)$, finding out a single coefficient, by univariate interpolation (which basically sets up linear equations and solve) takes $O(\log (snd))$ space and $\poly(snd)$ time only. In the last step, to find one coefficient, we can use the standard univariate interpolation algorithm which uses the Vandermonde matrices and one entry of the inverse of the Vandermonde is $\log$-space computable \footnote{In fact Vandermonde determinant and inverse computations are in $\TC^0 \subset$~$\mathsf{LOGSPACE}$, see~\cite{maciel1998threshold}.}. 

At this stage, we know the coefficients (one by one), but we do not know which monomials the coefficients belong. However, it suffices to substitute $x_i=2y^{k^{i-1} \bmod~p}$. Using this, we can find the the correct value of the first exponent in the monomial. For eg.~ if after the correct substitution, $y^{10}$ appears with coefficient say~$5$, next step, when we change just $x_1$, if it does not affect the coefficient $5$, $y_1$ is not there in the monomial corresponding to the monomial which has coefficient $5$, otherwise it is there (here we also use that it is multilinear and hence the change in the coefficient must be reflected). This step again requires univariate interpolation, and  one has to repeat this experiment wrt each variable to know the monomial exactly corresponding to the  coefficient we are working with. We can reuse the space for interpolation and after one round of checking with every variable, it outputs one exponent at this stage. This requires $O(\log (snd)$-space and $\poly(snd)$ time.

With a more careful analysis, one can further improve the field requirement to $|\F| \ge (nd)^6$ only (and not dependent on $s$); for details see \cite[Thm.~5 \& 11]{klivans2001randomness}. 

\bigskip
Now we come back to our subset sum problem. Since we want to reconstruct an $n$-variate $m$ sparse polynomial $p_t$ which has degree at most $n$, it suffices to work with $|\F| \ge n^{12}$. However, we also want to use Kane's identity (\autoref{lem:kane}), which requires $q > \deg(f(x,c_1,\hdots,c_n))+2$, and $\deg(f(x,c_1,\hdots,c_n)) \le nt$. Denote $M:=\max(nt+3,n^{12})$. Thus, it suffices to we work with $\F= \F_q$ where $q \in [M,(6/5) \cdot M]$, such prime exists (\autoref{thm:prime-interval}) and easy to find deterministically in $\poly(nt)$ time and $O(\log (nt))$ space using \cite{agrawal2004primes}. In particular, we will substitute $y_i=c_i \in [0,q-1]$.

\begin{claim} \label{cl:eff-eval}
Fix $c_i \in [0,q-1]$, where $q \in [M,(6/5)\cdot M]$. Then, there is a $\poly(nt)$-time and $O(\log (nt))$ space algorithm which computes $p_t(c_1,\hdots,c_n)$ over $\F_q$.
\end{claim}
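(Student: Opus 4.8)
The plan is to substitute $y_i = c_i$ into $f$, obtaining the \emph{univariate} polynomial
\[
g(x) \;:=\; f(x, c_1, \dots, c_n) \;=\; \prod_{i \in [n]} \bigl(1 + c_i x^{a_i}\bigr) \;\in\; \F_q[x],
\]
and then to read off $\cf_{x^t}(g) = p_t(c_1,\dots,c_n)$ via Kane's identity (\autoref{lem:kane}), \emph{without ever storing $g$} --- we only evaluate it at points. Note $\deg(g) \le \sum_{i \in [n]} a_i \le nt$, since $a_i \le t$; and $q \ge M \ge nt+3 > \deg(g)+2$, while $0 \le t \le nt$. Hence \autoref{lem:kane} applies with degree bound $d = nt$ and gives
\[
p_t(c_1, \dots, c_n) \;=\; \cf_{x^t}(g) \;=\; -\sum_{\alpha \in \F_q^*} \alpha^{\,q-1-t}\, g(\alpha) \;=\; -\sum_{\alpha = 1}^{q-1} \alpha^{\,q-1-t} \prod_{i \in [n]} \bigl(1 + c_i \alpha^{a_i}\bigr) \pmod q,
\]
where we used that $q$ is prime, so $\F_q^* = \{1,\dots,q-1\}$ and the enumeration is trivial.

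Next I would argue the right-hand side is computable in the claimed resources. Keep a single accumulator $R \in \F_q$, initialized to $0$; for $\alpha = 1, \dots, q-1$, compute $P_\alpha := \prod_{i \in [n]}(1 + c_i \alpha^{a_i})$ by a running product over $i$, where each $\alpha^{a_i}$ is obtained by repeated squaring in $O(\log a_i) \le O(\log t)$ field multiplications; then compute $\alpha^{\,q-1-t}$ by repeated squaring; then update $R \leftarrow R + \alpha^{\,q-1-t} P_\alpha$. Every quantity kept alive --- the value $\alpha$, the inner loop index $i$, the exponent bit-index, the partial products, the accumulator $R$ --- is either an element of $\F_q$ or an index bounded by $q$ or $nt$, hence $O(\log(nt))$ bits; reusing the scratch space across iterations keeps the working space at $O(\log(nt))$. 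There are $q-1 = O(nt)$ outer iterations, each performing $O(n \log t)$ multiplications in $\F_q$ at $\poly(\log(nt))$ cost apiece, so the total time is $\poly(nt)$. Finally, output $-R \bmod q$.

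The only hypothesis that genuinely has to be checked is $q > \deg(g)+2$ of \autoref{lem:kane} --- which is exactly why $M$ was defined to be at least $nt+3$ in the setup preceding the claim; the remaining ingredients (logspace repeated squaring / modular exponentiation, enumerating a prime field, running products in a single register) are entirely standard, so I do not expect any real obstacle here. For safety one should also note the degenerate case $t > \sum_{i} a_i$: then $\cf_{x^t}(g) = 0$, and \autoref{lem:kane} applied with the still-valid bound $d = nt \ge t$ returns $0$ as well, consistent with there being no realisable set.
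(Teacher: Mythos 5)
Your proof is correct and takes essentially the same route as the paper: substitute $y_i = c_i$ to obtain the univariate polynomial $g(x)=\prod_{i}(1+c_i x^{a_i})$, apply Kane's identity (\autoref{lem:kane}) with degree bound $nt < q-2$ to express $\cf_{x^t}(g)$ as $-\sum_{\alpha\in\F_q^*}\alpha^{q-1-t}g(\alpha)$, and evaluate this sum in a running accumulator with space $O(\log(nt))$. The only difference is presentational --- you spell out the repeated-squaring and accumulator bookkeeping and flag the degenerate case $t>\sum_i a_i$ explicitly, neither of which changes the argument.
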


\begin{proof}
Note that, we can evaluate each $1+c_ix^{a_i}$, at some $x=\alpha \in \F_q$, in $\tilde{O}(\log nt)$ time and $O(\log (nt))$ space. Multiplying $n$ of them takes $\tilde{O}(n \log (nt))$-time and $O(\log (nt))$ space. 

Once we have computed $f(\alpha,c_1,\hdots,c_n)$ over $\F_q$, using Kane's identity (\autoref{lem:kane}), we can compute $p_t(c_1,\hdots, c_n)$, since
$$
p_t(c_1,\hdots,c_n)\;=\;-\,\sum_{\alpha \in \F_q^*}\, \alpha^{q-1-t}f(\alpha,c_1,\hdots,c_n)\;.
$$
As each evaluation $f(\alpha,c_1,\hdots,c_n)$ takes $\tilde{O}(n \log (nt))$ time, and we need $q-1$ many additions, multiplications and modular exponentiations, total time to compute is $\poly(nt)$. The required space still remains $O(\log (nt))$. 
\end{proof}

Once, we have calculated $p_t(c_1,\hdots,c_n)$ efficiently, now we try different values of $(c_1,\hdots,c_n)$ to reconstruct $p_t$ using \autoref{thm:sparse-interpolation}. Since, $p_t$ is a $n$-variate at most $k$ sparse polynomial with degree at most $n$, it still takes $\poly(knt)$ time and  $O(\log (knt))$ space. This finishes the proof. 
\end{proof}

\begin{algorithm}
\caption{Algorithm for $\ksssum{k}$}
\label{algo:lowspace}
\vspace{1mm}
\KwIn{A $\ksssum{k}$\ instance $a_1, \dots, a_n, t$}
\KwOut{All realisable subsets $S \subseteq [n]$ such that $\sum_{i \in S} a_i = t$}

Pick a prime $q \in \{M,(6/5) \cdot M\}$ where $M = max(nt + 3, n^{12})$\;
Let $\mathcal{O}$ be the algorithm mentioned in \autoref{thm:sparse-interpolation}\; 
\For{each $p_t(c_1, \dots, c_n)$ query requested by $\mathcal{O}$}{
    Send $-(\underset{\alpha \in \F_q^*}{\sum} \alpha^{q-1-t} f(\alpha, c_1, \dots, c_n))$ to $\mathcal{O}$\;
}
$p_t$ be the polynomial return by $\mathcal{O}$\;
$\mathcal{F} = \{\}$\;
\For{each monomial $\y_{S}$ in $p_t$}{
    $\mathcal{F} = \mathcal{F} \cup \{S\}$
}
\KwRet $\mathcal{F}$\;

\end{algorithm}

\subsection{Space-efficient algorithm for \texorpdfstring{$\pprod$}{}}
\label{sec:low-space-pprod}

The proof of~\autoref{thm:lowspace-pprod} uses the idea of reducing $\pprod$ to $\simulsub$ and then solving $\simulsub$ by computing the coefficient of $f(\x)= \prod_{i=1}^{n}\, \left( 1 \,+\, \prod_{j=1}^{k} \,x_j^{a_{ij}}\right)$ where $\x=(x_1,\hdots, x_k)$ using an extension of \autoref{lem:kane}. We cannot directly use~ \cite{kane2010unary} as it requires large space ($O(n\log(nt))$ space to be precise) to store the $\simulsub$ instance. Instead we compute the coefficient of $f(\x)$ without storing the $\simulsub$ instance using~\autoref{lem:kane-extension}.

The low space algorithm presented in this proof depends on the generalisation of \autoref{lem:kane}. Here we present Kane's identity for {\em bi}variate polynomials which can be easily extended to $k$-variate polynomials.
\begin{lemma}[Identity lemma {\cite{kane2010unary}}]
\label{lem:kane-extension}
Let $f(x,y) = \sum_{i=0}^{d_1} \sum_{j = 0}^{d_2} c_{i,j}x^iy^j$ be a polynomial of degree at most $d_1+d_2$ with coefficients $c_{i,j}$ being integers. Let $\F_q$ be the finite field of order $q = p^m > \max(d_1,d_2) + 1$. For $0 \leq t_1 \leq d_1, 0 \leq t_2 \leq d_2$, define
\[
r_{t_1, t_2} = \sum_{x \in \F_q^*} \sum_{y \in \F_q^*} x^{q-1-t_1}y^{q-1-t_2} f(x,y) = c_{t_1, t_2} \in \F_q
\]
\end{lemma}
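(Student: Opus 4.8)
The plan is to reduce the bivariate statement to the univariate one of Lemma~\ref{lem:kane} by iterating the single-variable identity. First I would fix $x \in \F_q^*$ and view $f(x,y) = \sum_{j=0}^{d_2} \left(\sum_{i=0}^{d_1} c_{i,j}x^i\right) y^j$ as a polynomial in $y$ alone of degree at most $d_2$, with coefficient of $y^j$ equal to $h_j(x) := \sum_{i=0}^{d_1} c_{i,j}x^i$. Since $q > d_2 + 1$, Lemma~\ref{lem:kane} applies in the variable $y$ (the sign convention there gives $\sum_{y \in \F_q^*} y^{q-1-t_2} f(x,y) = -h_{t_2}(x)$ over $\F_q$), so summing $y$ over $\F_q^*$ with weight $y^{q-1-t_2}$ collapses the inner sum to $-h_{t_2}(x) = -\sum_{i=0}^{d_1} c_{i,t_2} x^i$.

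Next I would apply Lemma~\ref{lem:kane} a second time, now in the variable $x$, to the polynomial $-h_{t_2}(x)$, which has degree at most $d_1 < q - 1$. This gives $\sum_{x \in \F_q^*} x^{q-1-t_1}\bigl(-h_{t_2}(x)\bigr) = -(-c_{t_1,t_2}) = c_{t_1,t_2}$ in $\F_q$. Combining the two steps and using that the double sum over $\F_q^* \times \F_q^*$ factors as an iterated sum, we obtain
\[
r_{t_1,t_2} \;=\; \sum_{x \in \F_q^*}\sum_{y \in \F_q^*} x^{q-1-t_1} y^{q-1-t_2} f(x,y) \;=\; \sum_{x \in \F_q^*} x^{q-1-t_1}\Bigl(-h_{t_2}(x)\Bigr) \;=\; c_{t_1,t_2} \in \F_q,
\]
which is the claimed identity. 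One should double-check the two sign flips cancel (each application of Lemma~\ref{lem:kane} contributes a $-1$, so the product is $+1$), which is consistent with the statement's $r_{t_1,t_2} = c_{t_1,t_2}$ as opposed to $-c_{t_1,t_2}$.

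There is essentially no hard obstacle here; the statement is a routine two-fold iteration of Kane's identity, and the only points requiring care are bookkeeping ones: verifying that the degree bound $q > \max(d_1,d_2)+1$ is exactly what is needed for each of the two univariate invocations, that the coefficients $h_j(x)$ obtained after the first step still lie in $\F_q$ (they do, being polynomial expressions in $x \in \F_q$ with integer coefficients reduced mod $p$), and tracking the sign so that the final answer is $c_{t_1,t_2}$ and not $-c_{t_1,t_2}$. I would also remark explicitly that the same argument extends verbatim to $k$-variate polynomials $f(x_1,\dots,x_k)$ by peeling off one variable at a time, picking up a factor $(-1)^k$ overall, which is what the application in the proof of Theorem~\ref{thm:lowspace-pprod} actually uses when computing $\cf_{\m}(f)$ for $f(\x) = \prod_{i=1}^n (1 + \prod_{j=1}^k x_j^{a_{ij}})$.
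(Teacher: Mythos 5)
Your proposal is correct and, at bottom, is the same computation as the paper's: the paper directly expands the double sum and reuses the orthogonality identities $\sum_{x \in \F_q^*} x^n = -1$ if $(q-1)\mid n$ and $= 0$ otherwise, while you organize the identical bookkeeping as two sequential applications of Lemma~\ref{lem:kane} (first over $y$, then over $x$). Your version is arguably cleaner because it treats Kane's identity as a black box rather than reproving it inline, and your closing remark about the $k$-variate case picking up a factor $(-1)^k$ is exactly right and matches how the paper uses the generalization.

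One small bookkeeping caveat you should make explicit: Lemma~\ref{lem:kane} as stated assumes $q = p^k > d+2$, whereas Lemma~\ref{lem:kane-extension} only assumes $q > \max(d_1,d_2)+1$. Invoking Lemma~\ref{lem:kane} literally twice therefore requires $q > \max(d_1,d_2)+2$, which is a slightly stronger hypothesis than what you are asked to prove under. The fix is painless but should be said: inspecting the proof of Lemma~\ref{lem:kane} (equivalently, the paper's own proof of Lemma~\ref{lem:kane-extension}) shows the real requirement is that $|i-t| \le d < q-1$ for all relevant indices, i.e.\ $q > d+1$ suffices; the extra ``$+1$'' in the stated form of Lemma~\ref{lem:kane} is loose. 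With that observation noted, both univariate invocations go through under $q > \max(d_1,d_2)+1$, and your sign analysis (two factors of $-1$ cancelling to give $+c_{t_1,t_2}$) is correct.

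Finally, you correctly anticipate and address the point that after fixing $x$ the coefficients $h_j(x)$ live in $\F_q$ rather than $\Z$: the \emph{identity} part of Lemma~\ref{lem:kane} (``$r_t = -c_t$ in $\F_q$'') is a purely algebraic statement over $\F_q$ and does not need integer coefficients; only the secondary claim ``$r_t = 0 \iff p\mid c_t$'' does, and you are not using that.
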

\begin{proof}
Let $n$ be a positive integer, then the two following identities hold:
\begin{enumerate}
    \item[Identity 1.] $\sum_{x \in \F_q^*} x^n = -1$ if $q-1\, |\, n$ because $x^n = x^{(q-1)m} = 1$ due to Fermat's Little theorem. 
    \item[Identity 2.] $\sum_{x \in \F_q^*} x^n = 0$, if $q-1 \nmid\,n$. This is because we can rewrite the summation as $\sum_{i=0}^{q-2} g^{i\cdot n} = \dfrac{g^{n(q-1)}-1}{g^{n}-1} = 0$ where $g$ is a generator of $\F_q^*$.
\end{enumerate}
Let us now consider $\sum_{x \in \F_q^*} \sum_{y \in \F_q^*} x^{q-1-t_1}y^{q-1-t_2} f(x,y)$.
\begin{align*}
    \sum_{x \in \F_q^*} \sum_{y \in \F_q^*} x^{q-1-t_1}y^{q-1-t_2} f(x,y) &= \sum_{x \in \F_q^*} \sum_{y \in \F_q^*} x^{q-1-t_1}y^{q-1-t_2} \left( \sum_{i=0}^{d_1} \sum_{j = 0}^{d_2} c_{i,j}x^iy^j \right)\\
    &= \sum_{i=0}^{d_1} \sum_{j = 0}^{d_2} c_{i,j}\left( \sum_{x \in \F_q^*} \sum_{y \in \F_q^*} x^{q-1-t_1+i}y^{q-1-t_2+j}  \right) \\
    &= \sum\limits_{\substack{i \in [0,d_1]\setminus \{t_1\} \\ j \in [0,d_2] \setminus \{t_2\}}} c_{i,j}\left( \sum_{x \in \F_q^*} \sum_{y \in \F_q^*} x^{q-1-t_1+i}y^{q-1-t_2+j}  \right) \\
    &\;\;\;\;+ c_{t_1, t_2} \\
    &= c_{t_1, t_2}
\end{align*} 
Observe that when $i \in [0,d_1] \setminus\{t_1\}$, we have $\sum_{x \in \F_q^*} x^{q-1 - t_1 + i} = 0$ because $|i-t_1| \leq d_1 < q-1 \implies q-1+i-t_1$ is not a multiple of $q-1$. The same goes for $j \in [0,d_2]\setminus \{t_2\}$.

\end{proof}


{\noindent $\blacktriangleright$~\textsf{Remark}.}~~\autoref{lem:kane-extension} can be easily extended to $k$ variables which was used by the authors of \cite{kane2010unary} to solve $\simulsub$ with $k$ many $\ssum$ instances in space $O(k\log(n \sum_{i=1}^k t_i))$ and time $(poly(n,t_1, \dots, t_k))^{O(k)}$. In this case, the order of the finite field must be greater than $\max(d_1,\hdots, d_k) + 1$ where $d_i$'s are the individual degrees of the polynomial. 

\paragraph{Issue with directly invoking~\cite{kane2010unary}.}
Using \autoref{thm:pprod-to-sssum}, we can reduce a $\pprod$ instance $(a_1, \dots, a_n, t)$ to a $\simulsub$ instance containing $k$ $\ssum$ instances $(e_{1i}, \dots, e_{ni}, t_i), \forall i \in [k]$ where $k \leq \log(t)$. The space required for the $\simulsub$ instance is the number of bits in $e_{ij}, t_j$. We know that $a_i = \prod_{j=1}^k p_i^{e_{ij}} \implies 2^{\sum_{j=1}^k \log(e_{ij})} \leq 2^{\sum_{j=1}^k e_{ij}} \leq a_i$ because $p_i \geq 2, \forall i \in [k]$. Therefore, we have $\sum_{i,j} \log(e_{i,j}) \leq \sum_{i=1}^n \log(a_i) \leq n\log(t)$. Similarly, $\sum_{i} \log(t_i) \leq \log(t)$. Therefore, the space required for the $\simulsub$ is $O(n\log(t))$. And, if we directly use the low-space algorithm for $\simulsub$ from \cite{kane2010unary}, the total space complexity would become $O((n + \log(nt)) \cdot \log(t))$.

To avoid the $n$-factor in the space complexity, we will not be storing the entire $\simulsub$ instance. Instead, for each summation in the $k$ variate version of \autoref{lem:kane-extension}, we will compute the values of $e_{ij}$ and $t_j$ and discard them after using it. To be precise, for $\overline{g} = (t_1, \dots, t_k)$, we have
    \[c_{\overline{g}} = (-1)^{k} \sum_{\x \in (\F_q^*)^k} f(\x)\cdot\prod_{i=1}^{k} x_i^{q-1-t_i}\]
where $f(\x) = \prod_{i=1}^{n} \left( 1 + \prod_{j=1}^{k} x_j^{e_{ij}}\right)$ and $c_{\overline{g}} = \cf_{\overline{g}}(f(\x))$. The values of $e_{ij}$ and $t_i$ is only required in $f(\x)$ and $\prod_{i=1}^{k} x_i^{q-1-t_i}$ respectively. Since, $e_{ij}$ and $t_{i}$ are the powers of $p_i$ in $a_j$ and $t$ respectively, we can't use pseudo-prime-factorization as this would require us to use $O(n\log(t))$ space to compute a pseudo-prime-factor set. Therefore, we will use naive prime-factorization algorithm that runs in $\tilde{O}(t)$ time which is affordable because we are interested in $\poly(knt)$. 

\paragraph{Choosing the prime $q$.} Observe that the total degree of $f(\x)$ is $\sum_{ij} e_{ij} \leq n\cdot (\sum_i t_i) \le n \log t$ because $0 \leq e_{ij} \leq t_j$ and $\sum_{i} t_i \le \log(t)$. Therefore, the maximum individual degree is bounded by $n\log(t)$. Since, \autoref{lem:kane-extension} requires a prime $q$ that depends on the maximum individual degree of the polynomial, it suffices to work with  $N = \ceil{n\log(t)}$ and $q > N$. Observe that we need to compute the coefficient modulo $q$, therefore, we need to ensure that $q$ does not divide the coefficient. To achieve this, we will use \autoref{lem:kane-extension} for different primes $q \in [N+1, (n+N)^3]$ which contains $\Omega(n+N)^2$ prime. This works because the coefficient can be at most $2^n$, therefore, it will have at most $n$ prime factors. So, at least one prime in the range will not divide the coefficient.

\paragraph{Computing $f(\x)$ and $\prod_{i=1}^{k} x_i^{q-1-t_i}$ using low space.} We will make sure that $t_i$ is the exponent of the $i^{th}$ \textit{smallest} prime factor of $t$. To find an $e_{ij}$, we will first find the $i^{th}$ smallest prime $p_i$ that divides $t$ and then compute the largest power of $p_i$ that divides $a_{j}$. Once, we find $e_{ij}$, we can use it to compute $\prod_{j=1}^{k} x_j^{e_{ij}}$ part of $f(\x)$ and discard it as shown in \autoref{algo:low-space-pprod}. Similarly, we can compute $\prod_{i=1}^{k} x_i^{q-1-t_i}$.

\paragraph{Space and Time complexity.} Observe that \autoref{algo:low-space-pprod} uses only $O(\log(nt))$ space for variables that are used through out the algorithm and reuses $O(\log(t))$ space while computing $t_i, e_{ij}, p_{\ell}$ values. It uses $k\log(nt) = O(\log^2(nt))$ space for $\overline{y}$, therefore, the total space complexity is $O(\log^2(nt))$. Whereas the time complexity is $\poly(nt)$ because each loop runs for $\poly(nt)$ iterations and finding the exponents take $\tilde{O}(t)$ time.

\begin{algorithm}
\caption{Algorithm for solving $\pprod$ using low space}
\label{algo:low-space-pprod}
\vspace{1mm}
\KwIn{A $\pprod$ instance $(a_1, \dots, a_n, t) \in \N^{n+1}$}
\KwOut{Decides whether the $\pprod$ instance has a solution}

$k = 0$\;
\For{each prime $p_i\,|\,t$}{
    $k = k + 1$\;
}
$N = \ceil{n\log(t)}$\;

\For{each prime $q \in [N+1, (n + N)^3]$}{
    $c_g = 1$\;
    \For{each $\overline{y} \in (\F_q^*)^{k}$}{
        $prodx_1 = 1$\;
        \For{$i \in [k]$}{
            Compute $i^{th}$ smallest prime that divides $t$ and find $t_i$\;
            $prodx_1 = prodx_1 * y_i^{q-1-t_i}$\;
            Discard $t_i$\;
        }
        
        $f= 1$\;
        \For{$i \in [n]$}{
            $prodx_2 = 1$\;
            \For{$j \in [k]$}{
                Compute $j^{th}$ smallest prime $p_j$ that divides $t$\;
                Using $p_j$ compute $e_{ij}$ which is the largest integer such that $p_j^{e_{ij}}\,|\,a_i$\;
                $prodx_2 = prodx_2 * y_j^{e_{ij}}$\;
                Discard $p_j$ and $e_{ij}$\;
            }
            $f = f * (1 + prodx_2)$\;
        }
        $c_g = f * prodx_1$\;
    }
    \If{$c_g \neq 0$}{
        \KwRet True\;
    }
}

\KwRet False\;


\end{algorithm}

\section{An efficient reduction from \texorpdfstring{$\pprod$}{} to \texorpdfstring{$\simulsub$}{}}
\label{sec:pprd-to-ssum}

In this section, we will present a deterministic polynomial time reduction from $\pprod$ to $\simulsub$. In \autoref{sec:rand-algo-pprod}, we have given a pseudo-polynomial time reduction from $\pprod$ to $\simulsub$ by performing prime-factorization of the input $(a_1, \dots, a_n, t)$. The polynomial time reduction also requires to factorize the input, but the factors are not necessarily prime. To be precise, we define pseudo-prime-factorization which can be achieved in polynomial time.

\begin{definition}[Pseudo-prime-factorization] \label{def:pseudo}
A set of integers $\mathcal{P} \subset \N$ is said to be pseudo-prime-factor set of $(a_1, \dots, a_n) \in \N^n$ if 
\begin{enumerate}
    \item the elements of $\mathcal{P}$ are pair-wise coprime, i.e., $\forall p_1,p_2 \in \mathcal{P}, gcd(p_1,p_2) = 1$,
    \item there are only non-trivial factors of $a_i$'s in $\mathcal{P}$, i.e., $\forall p \in \mathcal{P}, \exists i \in [n]$ such that $p\,|\,a_i$,
    \item every $a_i$'s can be uniquely expressed as product of powers of elements of $\mathcal{P}$, i.e., $\forall i \in [n], a_i = \prod_{p \in \mathcal{P}} p^{e_p}, \forall i \in [n]$ where $e_p \geq 0$.
\end{enumerate}
\end{definition}

For a given $(a_1,\hdots, a_n)$, $\mathcal{P}$ may not be unique. A trivial example of a pseudo-prime-factor set of $\mathcal{P}$ for $(a_1, \dots, a_n)$ is the set of all distinct prime factors of $\prod_{i=1}^n a_i$. The following is an important claim which will be used to give a {\em polynomial} time reduction from $\pprod$ to $\simulsub$.

\begin{claim}
\label{claim:size-ppf-set}
For any pseudo-prime-factor set $\mathcal{P}$ of $(a_1, \dots, a_n)$, we have  $|\mathcal{P}| \leq k$ where $k$ is the number of distinct prime factors of $\prod_{i=1}^n a_i$.
\end{claim}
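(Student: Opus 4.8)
The plan is to show that each element of a pseudo-prime-factor set $\mathcal{P}$ must contain at least one prime factor of $\prod_i a_i$, and that distinct elements of $\mathcal{P}$ must pick out \emph{distinct} such primes — this pigeonholes $|\mathcal{P}|$ against $k$, the total number of distinct primes dividing $\prod_i a_i$. First I would fix a pseudo-prime-factor set $\mathcal{P} = \{q_1, \dots, q_r\}$ and, for each $q_\ell \in \mathcal{P}$, pick an arbitrary prime $\pi_\ell$ dividing $q_\ell$ (this exists since property~2 of \autoref{def:pseudo} forces $q_\ell$ to be a nontrivial factor of some $a_i$, hence $q_\ell \ge 2$). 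Each $\pi_\ell$ divides $q_\ell$, which divides some $a_i$, so $\pi_\ell$ is one of the $k$ primes dividing $\prod_i a_i$; this gives a map $\ell \mapsto \pi_\ell$ from $\{1,\dots,r\}$ into that set of $k$ primes.

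The key step is injectivity of this map: if $\pi_\ell = \pi_{\ell'}$ for $\ell \ne \ell'$, then this common prime divides both $q_\ell$ and $q_{\ell'}$, so $\gcd(q_\ell, q_{\ell'}) \ge \pi_\ell > 1$, contradicting property~1 (pairwise coprimality) of \autoref{def:pseudo}. Hence $\ell \mapsto \pi_\ell$ is injective, and therefore $r = |\mathcal{P}| \le k$.

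I don't expect any genuine obstacle here — the argument is a short pigeonhole once the definition is unpacked, and properties~1 and~2 of the pseudo-prime-factor set are exactly what is needed (property~3 is not used for the bound itself). The only point worth stating carefully is that property~2 guarantees $q_\ell$ is nontrivial, so it really does have a prime factor; without that, a spurious element $1 \in \mathcal{P}$ would break the counting, but the definition explicitly excludes it.
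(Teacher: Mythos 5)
Your proof is correct and follows essentially the same pigeonhole argument as the paper: each element of $\mathcal{P}$ contributes at least one prime factor of $\prod_i a_i$, and pairwise coprimality forces these primes to be distinct across elements, so $|\mathcal{P}| \le k$. Your version is slightly more explicit (constructing the injective map $\ell \mapsto \pi_\ell$ and noting that the ``non-trivial'' clause of property~2 is what rules out $1 \in \mathcal{P}$), but the underlying idea and the properties invoked are identical to the paper's proof.
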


\begin{proof}
The proof is using a simple pigeonhole principle argument. Let $g_1, \dots, g_k$ be the distinct prime factors of $\prod_{i=1}^n a_i$. From the definition of $\mathcal{P}$, we know that $g_1, \dots, g_k$ are the only distinct prime factors of $\prod_{p \in \mathcal{P}}p$. Therefore, if there are more than $k$ numbers in $\mathcal{P}$, then there must exist $p_1, p_2 \in \mathcal{P}$ such that $gcd(p_1, p_2) \neq 1$ which violates pair-wise coprime property of $\mathcal{P}$.
\end{proof}

{\noindent \textsf{$\blacktriangleright$ Constructing $\mathcal{P}$ suffices}.}~~We now show that having a pseudo-prime-factor set $\mathcal{P}$ for $(a_1, \dots, a_n, t)$ helps us to reduce a $\pprod$ instance $(a_1, \dots, a_n, t)$ to $\simulsub$ with number of instances $|\mathcal{P}|$, in polynomial time. Wlog, we can assume that $a_i\,|\,t$ and $a_i, t \leq 2^{m}, \forall i \in [n]$ for some $m$. Trivially, $m \le \log t$. So, using \autoref{claim:size-ppf-set}, we have $|\mathcal{P}| \leq (n+1)\cdot m = \poly(n \log t)$. 

From \autoref{def:pseudo}, we have unique non-negative integers $e_{ij}$ and $t_j$ such that $t = \prod_{j \in |\mathcal{P}|} p_j^{t_j}$ and $a_i = \prod_{j \in |\mathcal{P}|} p_j^{e_{ij}}, \forall i \in [n]$. Since, $a_i\,|\,t$, we have $e_{ij} \leq t_j \leq m, \forall i \in [n], j \in [|\mathcal{P}|]$ and they can be computed in $\poly(m, n)$ time.

Let us consider the $\sssum{|P|}$ instance where the $i^{th}$ $\ssum$ instance is $(e_{1i}, e_{2i}, \dots, e_{ni}, t_i)$. Then, due to unique factorization property of $\mathcal{P}$, the $\pprod$ instance is YES, i.e., $\exists S \in [n]$ such that $\prod_{i \in S} a_i = t$ iff the $\simulsub$ instance with number of instances ${|\mathcal{P}|}$, is a YES.

\subsection{Polynomial time algorithm for computing pseudo-prime-factors}

We will now present a deterministic polynomial time algorithm for computing a pseudo-prime-factor set $\mathcal{P}$ for $(a_1, \dots, a_n)$. We will use the notation $\mathcal{P}(a_1, \dots, a_n)$ to denote a pseudo-prime-factor set for $(a_1, \dots, a_n)$.  Also, let $\calS(a_1,\hdots, a_n)$ be the set of all pseudo-prime-factor sets; this is a finite set. 

The following lemma is a crucial component in \autoref{algo:ppf-set}. We use $a//b$ to denote $a/b^e$ such that $b^{e+1} \nmid a$.

\begin{lemma}
\label{lem:divide-conquer-ppf-set}
Let $(a_1, \dots, a_n)$ be $n$ integers. Then,
\begin{enumerate}
    \item If $a_1$ is coprime with $a_i, \forall i > 1$, then for any $\mathcal{P}(a_2, \dots, a_n) \in \calS(a_2, \dots, a_n)$, $\mathcal{P}(a_2, \dots, a_n) \cup \{a_1\} \in \calS(a_1, \dots, a_n)$.
    
    \item $\mathcal{P}(g, a_1//g, a_2//g, \dots, a_n//g) \in \calS(a_1, \hdots, a_n)$, for given $a_i$, $i \in [n]$ and any factor $g$ of some $a_i$.
\end{enumerate}
\end{lemma}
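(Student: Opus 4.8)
The strategy is to verify the three defining conditions of a pseudo-prime-factor set (Definition \ref{def:pseudo}) in each of the two cases, leaning on the definition's hypotheses about the smaller instance.

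For part (1): the plan is to write $\mathcal{Q} := \mathcal{P}(a_2,\dots,a_n)$ and check that $\mathcal{Q} \cup \{a_1\}$ is a pseudo-prime-factor set of $(a_1,\dots,a_n)$. Pairwise-coprimality: elements of $\mathcal{Q}$ are pairwise coprime by assumption, and $a_1$ is coprime to every $a_i$ with $i>1$; since each $p \in \mathcal{Q}$ divides some $a_i$ ($i>1$), we get $\gcd(a_1,p)=1$, so the enlarged set is still pairwise coprime. Non-trivial factors: $a_1$ divides $a_1$, and every element of $\mathcal{Q}$ divides some $a_i$ with $i \ge 2$, so condition (2) holds. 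Unique factorization: for $i \ge 2$, $a_i$ is a product of powers of elements of $\mathcal{Q}$ (with exponent $0$ on $a_1$, since $\gcd(a_1,a_i)=1$ rules out $a_1 \mid a_i$ unless $a_1=1$, a degenerate case one can dispose of separately), and $a_1 = a_1^1 \cdot \prod_{p \in \mathcal{Q}} p^0$; uniqueness follows because the primes dividing $a_1$ are disjoint from those dividing any element of $\mathcal{Q}$.

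For part (2): set $b_i := a_i // g$ for $i \in [n]$, so $a_i = g^{f_i} b_i$ for the unique maximal $f_i \ge 0$ with $g^{f_i} \mid a_i$, and $g \nmid b_i$. Let $\mathcal{Q} := \mathcal{P}(g, b_1, \dots, b_n)$ be any pseudo-prime-factor set of the $(n+1)$-tuple $(g, b_1, \dots, b_n)$. I claim $\mathcal{Q} \in \calS(a_1,\dots,a_n)$. Pairwise-coprimality is inherited directly. For condition (2), every $p \in \mathcal{Q}$ divides either $g$ or some $b_i$; since $g \mid a_j$ for some $j$ (as $g$ is a factor of some $a_i$) and $b_i \mid a_i$, in both cases $p$ divides some $a_i$. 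For condition (3): $g = \prod_{p \in \mathcal{Q}} p^{c_p}$ and $b_i = \prod_{p \in \mathcal{Q}} p^{d_{ip}}$ for suitable non-negative exponents, hence $a_i = g^{f_i} b_i = \prod_{p \in \mathcal{Q}} p^{f_i c_p + d_{ip}}$, which is the desired representation; uniqueness of the exponents for $a_i$ follows from the uniqueness for $g$ and $b_i$ together with the fact that the prime-support of $\mathcal{Q}$ is fixed.

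The main subtlety I anticipate is ensuring the \emph{uniqueness} clause of condition (3) rather than mere existence of an expression — this requires noting that distinct elements of $\mathcal{Q}$ have disjoint prime supports (a consequence of pairwise-coprimality), so the $p$-adic data of $a_i$ determines each exponent $e_p$ unambiguously. A secondary nuisance is the degenerate case $a_1 = 1$ (or $g = 1$), where $a_1$ should simply not be added to the set; I would handle this with a one-line remark that such trivial factors are discarded, consistent with condition (2) forbidding trivial factors. Apart from these bookkeeping points, both parts are routine once the definitions are unwound.
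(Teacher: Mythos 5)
Your proposal is correct and takes essentially the same approach as the paper: both arguments verify the three conditions of Definition~\ref{def:pseudo} directly, using pairwise coprimality of the smaller pseudo-prime-factor set to handle the uniqueness clause of condition~(3), and in part~(2) both write $a_i = g^{f_i}\cdot(a_i//g)$ and combine the two given representations into $a_i = \prod_{p} p^{\,f_i c_p + d_{ip}}$. You are somewhat more careful than the paper (which declares part~(1) trivial and does not spell out why pairwise-coprimality gives uniqueness), and your flagging of the degenerate $a_1=1$ or $g=1$ cases is a reasonable bookkeeping point that the paper instead addresses via its preprocessing step of discarding~$1$'s.
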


\begin{proof}

The first part of the lemma is trivial. For the second part, let $g$ be a non-trivial factor of some $a_i$ and 
    \[\mathcal{P}\;:=\;\{p_1, \dots, p_k\} \;\in\;\calS(g, a_1//g, a_2//g, \dots, a_n//g)\,,\]
be any pseudo-prime-factor set. Then, $p_i$'s are pair-wise coprime and since each $p_i$ divides either $g$ or $a_i//g$ for some $i \in [n]$, it also divides some $a_i$ because $g$ is a factor of some $a_i$. Also, we have {\em unique} non-negative integers $e_{ip}, e_{gp}$ s.t.~

    \[a_i//g = \prod_{p \in \mathcal{P}} p^{e_{ip}}, \forall i \in [n] \text{ and }g = \prod_{p \in \mathcal{P}} p^{e_{gp}}\;.\]
Combining these equation, we get $a_i = a_i//g * g^{f_{ig}} = \prod_{p \in \mathcal{P}} p^{e_{ip} + e_{gp}*f_{ig}}$. Here $f_{ig}$ is the maximum power of $g$ that divides $a_i$.    
Therefore, $\{p_1, \dots, p_k\}$ is also a pseudo-prime-factor set for $(a_1, \dots, a_n)$. 
\end{proof}

\paragraph{Pre-processing.} Using \autoref{lem:divide-conquer-ppf-set}, \autoref{algo:ppf-set} performs a divide-and-conquer approach to find $\mathcal{P}(a_1, \dots, a_n)$. Observe that we can always remove duplicate elements and $1$'s from the input since it {\em does not} change the pseudo-prime-factors. Also, we can assume without loss of generality that $a_i//a_1 =: a_i, \forall i > 1$ because of the second part in \autoref{lem:divide-conquer-ppf-set}, with $g=a_1$, since it gives us $\calP(a_1,a_2//g, \hdots, a_n//g)$ and we know it suffices to work with these inputs. 

\smallskip
If $a_1$ is coprime to the rest of the $a_i$'s, then the algorithm will recursively call itself on $(a_2, \dots, a_n)$ and combine $\mathcal{P}(a_2, \dots, a_n)$ with $\{a_1\}$. Else, there exist an $i > 1$ such that $gcd(a_1, a_i) \neq 1$. So, the algorithm finds a factor $g$ of $a_1$ using Euclid's GCD algorithm and computes $\mathcal{P}(g, a_1//g, \dots, a_n//g)$. At every step we remove duplicates and $1$'s. Hence, the correctness of \autoref{algo:ppf-set} is immediate assuming it terminates.

To show the termination and time complexity of \autoref{algo:ppf-set}, we will use the {\em `potential function'}  $\potential (I):= \prod_{a \in I} a$, where $I$ is the input and show that at each recursive call, the value of the potential function is halved. Initially, the value of the potential function is $\prod_{i=1}^n a_i$. We also remark that since the algorithm removes duplicates and $1$'s; the potential function can {\em never} increase by the removal step and so it never matters in showing the decreasing nature of $\potential$.
\begin{enumerate}
    \item $a_1$ is corpime to the rest of the $a_i$'s: In this case, the recursive call has input $(a_2, \dots, a_n)$. Since, $a_1 \ge 2$, the value of potential function is 

    \[\potential(a_2,\hdots, a_n) \;=\; \prod_{i=2}^n a_i \;<\; (\prod_{i=1}^n a_i )/2 \;=\; \potential(a_1, \hdots, a_n)/2\,.\] 
    \item $a_1$ shares a common factor with some $a_i$. Let $g = gcd(a_1, a_i) \neq 1$. Since, we have assumed $a_i//a_1 = a_i$, this implies that $a_i$ is not a multiple of $a_1$. This implies that $2 \leq g \leq a_1/2$. Therefore, the new value of potential function is
 
    \begin{align*}
    \potential(g, a_1//g, \hdots, a_n//g)\;&=\; g \prod_{j=1}^n a_j//g \\
        &\;\leq\;  (a_1//g) \times \left( (a_i //g) \times g \right) \times  \prod_{j \in [n]\setminus \{1,i\}} a_j \\
        &\;\leq\; \dfrac{a_1}{g} \cdot \prod_{j=2}^n a_j \\ &\;\leq\; (\prod_{j=1}^n a_j)/2 \;=\; \potential(a_1, \dots, a_n)/2\;.
    \end{align*}
We used the fact that since, $2 \leq g\, |\, a_i$, therefore, $g \times (a_i//g) \leq a_i$.
\end{enumerate}

\paragraph{Time complexity.} In both the cases, the value of the potential function is halved. So, the depth of the recursion tree (in-fact, it is just a line) is at most $\log(\prod_{i=1}^n a_i) \leq m\cdot n$. Also, in each recursive call, the input size is increased at most by one but the integers are still bounded by $2^m$. This implies that input size, for any recurrence call, can be at most $(m+1)\cdot n$. Since there is no branching, the total time complexity is $\poly(m,n) = \poly(\log t, n)$. 

\begin{algorithm}
\caption{Algorithm for Pseudo-prime-factor set}
\label{algo:ppf-set}
\vspace{1mm}
\KwIn{$(a_1, a_2 \dots, a_n) \in \N^n$ which are $m$-bit integers such that $a_i//a_1 = a_i > 1, \forall i > 1$}
\KwOut{Pseudo-prime-factor set $\mathcal{P}$ for $(a_1, a_2, \dots, a_n)$}
\If{n == 0}{
    \KwRet $\emptyset$\;
}
\If{$\exists i > 1$ such that $gcd(a_1, a_i) \neq 1$}{
    $g = gcd(a_1, a_i)$\;
    $I = \{g\}$\;
    \For{$i \in [n]$}{
        $a_i' = a_i//g$\;
        \If{$a_i' \notin I$ and $a_i' \neq 1$}{
            $I = I \cup \{a_i'\}$
        }
    }
    \KwRet $\mathcal{P}(I)$\; 
}
\Else{
    \KwRet $\mathcal{P}(a_2, \dots, a_n) \cup \{a_1\}$\;
}
\end{algorithm}

\section{Extending Theorem 1 and 2 to Unbounded Subset Sum}
\label{sec:unbounded}

In this section, we efficient algorithm for $\ubssum$. The $\ubssum$ is an unbounded variant of $\ssum$\ problem, which is also NP-hard \cite{johnson1985np}.


\begin{definition}[Unbounded Subset Sum $(\ubssum)$]
Given $(a_1, \dots, a_n, t) \in \Z_{\geq 0}^{n+1}$, the $\ubssum$ problem asks whether there exists $\beta_1, \dots, \beta_n$ such that $\beta_i$ are non-negative integers and $\sum_{i=1}^n \beta_i a_i = t$.
\end{definition}

Similar to the $\ssum$, the $\ubssum$ problem also has a $O(nt)$ dynamic programming algorithm. Interestingly, this problem has a $O(n+ \min_{i} a_i^2)$-time determinstic algorithm \cite{hansen1996testing}. Recently, Bringmann \cite{bringmann2017near} gave an $\tilde{O}(t)$ deterministic algorithm for $\ubssum$. We now define two variants of the $\ubssum$\ problem which is very similar to $\ksssum{k}$ and $\hsssum{k}$.

\begin{problem}[$\ksubssum{k}$]
Given $(a_1, \dots, a_n, t) \in \Z_{\geq 0}^{n+1}$, the $\ksubssum{k}$ problem asks to output all $(\beta_1, \dots, \beta_n)$ where $\beta_i$ are non-negative integers and $\sum_{i=1}^n \beta_i a_i = t$ provided the number of such solutions is at most $k$.
\end{problem}

\begin{problem}[$\hsubssum{k}$]
Given a $\ksubssum{k}$~instance~$(a_1, \dots, a_n, t) \in \Z_{\ge 0}^{n+1}$, $\hsubssum{k}$ asks to output all the hamming weights of the solutions, i.e., $\sum_{i=1}^n \beta_i$.
\end{problem}
\begin{remark}
We want $\vec{a} \cdot \vec{v}=t$, where $\vec{v} \in \Z_{\ge 0}^n$. Similarly, like in the $\ssum$ case (i.e.,~$\vec{v}\in\{0,1\}^n$), we want $|v|_1$, which is exactly the quantity $\sum_{i}\beta_i$, as above. Thus, this definition can be thought as a natural extension of the hamming weight of the solution, in the unbounded regime.
\end{remark}

We will present a deterministic polynomial time reduction from $\ubssum$ to $\mathsf{SimulSubsetSum}$ which will be used latter in this section.

\begin{theorem}[$\ubssum$ reduces to~$\mathsf{SimulSubsetSum}$]
\label{thm:ubssum-to-ssum}
There exists a deterministic polynomial time reduction from $\ubssum$ to $\mathsf{SimulSubsetSum}$.
\end{theorem}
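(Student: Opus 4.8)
The plan is to reduce $\ubssum$ to $\ssum$ first, and then invoke $\ssum \le_\P \simulsub$ (which follows by composing \autoref{thm:ssum-to-simulssum} with the extension to $\sssum{\log}$, or simply via \autoref{thm:ssum-to-simulssum} itself since $\sssum{2}$ is a special case of $\simulsub$). So the real content is a deterministic polynomial-time reduction from $\ubssum$ to $\ssum$. Given an instance $(a_1,\dots,a_n,t)$ of $\ubssum$, the key observation is that in any solution $\sum_i \beta_i a_i = t$ with $\beta_i \ge 0$, each coefficient satisfies $\beta_i \le t/a_i \le t$ (we may assume each $a_i \ge 1$ and $a_i \le t$, discarding any $a_i$ that exceeds $t$ or equals $0$ appropriately). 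Hence $\beta_i$ can be written in binary using at most $\lceil \log t\rceil + 1$ bits. The standard trick is then to replace the single item $a_i$ with $O(\log t)$ many $0/1$-items $\{a_i, 2a_i, 4a_i, \dots, 2^{\lfloor \log t\rfloor} a_i\}$ (a binary expansion of the multiplicity), so that choosing a subset of these new items exactly simulates picking a coefficient $\beta_i \in \{0,1,\dots, 2^{\lceil\log t\rceil+1}-1\}$ for $a_i$. This gives a new $\ssum$ instance with $N = O(n \log t)$ items, target still $t$, and all entries bounded by $2t$; clearly a subset summing to $t$ in the new instance corresponds to a nonnegative integer combination summing to $t$ in the original, and conversely. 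The reduction is computable in $\poly(n, \log t)$ time.

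First I would state and prove this $\ubssum \le_\P \ssum$ reduction as the core lemma, carefully checking the two directions of correctness: (i) a $\ubssum$ solution $(\beta_i)$ yields an $\ssum$ solution by taking, for each $i$, the subset of $\{2^j a_i\}_j$ corresponding to the binary digits of $\beta_i$ (valid since $\beta_i < 2^{\lceil \log t\rceil + 1}$); (ii) an $\ssum$ solution over the new items yields a $\ubssum$ solution by setting $\beta_i$ to be the sum of the chosen powers of two multiplying $a_i$ — this is automatically a nonnegative integer, and the weighted sum is preserved. Then I would simply chain: $\ubssum \le_\P \ssum \le_\P \simulsub$, where the second reduction is \autoref{thm:ssum-to-simulssum} (noting $\sssum{2}$ is an instance of $\simulsub$ with $k=2$). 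Composition of two deterministic polynomial-time reductions is again deterministic polynomial-time, which finishes the proof.

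The only mild subtlety — not really an obstacle — is bookkeeping the bit-length bound so that the padding with powers of two doesn't overshoot: one must allow $\beta_i$ up to $2^{\lceil \log t\rceil + 1} - 1 \ge t$ rather than exactly $\lfloor t/a_i\rfloor$, since a power-of-two decomposition only reaches numbers one less than a power of two; any ``wasted'' choices that would make the sum exceed $t$ are automatically excluded because the target is exactly $t$ and all $a_i > 0$. A second point worth a sentence is handling degenerate items: if some $a_i = 0$ it can be dropped (it never affects the sum), and if $t = 0$ the empty solution works, so we may assume $1 \le a_i \le t$ throughout. With these conventions the argument goes through cleanly.
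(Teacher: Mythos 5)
Your proposal is correct and uses exactly the same core idea as the paper: replace each item $a_i$ by the $O(\log t)$ items $a_i, 2a_i, \dots, 2^{\lfloor\log t\rfloor}a_i$ so that choosing a subset encodes the binary digits of $\beta_i$, giving a polynomial-size $\ssum$ instance equivalent to the $\ubssum$ instance. The only cosmetic difference is that the paper stops there (treating $\ssum$ directly as a $\simulsub$ instance with $k=1$), whereas you tack on an extra, unnecessary invocation of \autoref{thm:ssum-to-simulssum} — which is moreover a two-instance Cook-style reduction rather than a single Karp map, so it is both superfluous and slightly weaker in form than just noting $\ssum \subseteq \simulsub$.
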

\begin{proof}
Let $(a_1, \dots, a_n, t) \in \Z_{\ge 0}^{n+1}$ be an instance of $\ubssum$. The reduction generates the following $\ssum$ instance $(\underbrace{a_1, 2a_1, 4a_1, \dots, 2^{\gamma}a_1}_{\gamma+1~\text{entries}}, \underbrace{a_2, 2a_2, \dots, 2^{\gamma}a_2}_{\gamma + 1 ~\text{entries}}, \dots,\underbrace{a_n, 2a_n, \dots, 2^{\gamma}a_n}_{\gamma + 1~\text{entries}}, t)$ of size $n(\gamma+1)$ where $\gamma = \floor{\log(t)}$. 

Let $(\beta_1, \dots, \beta_n)$ be a solution to the $\ubssum$ instance, i.e., $\sum_{i=1}^n \beta_i a_i = t$. Since, $\beta_i, a_i, t$ are all non-negative integers, we have $\beta_i \leq t, \forall i \in [n]$. Therefore, $\beta$ is at most $(\gamma + 1)$-bit integer. Let $\beta_i^{(j)}$ be the $j^{th}$ bit of $\beta_i$, then we have

\[
t = \sum_{i=1}^n \beta_i a_i = \sum_{i=1}^n \left(\sum_{j=0}^{\gamma} \beta_i^{(j)}2^{j} \right) \cdot a_i = \sum_{i=1}^n \sum_{j=0}^{\gamma} \beta_{i}^{(j)} \cdot (2^{j}a_i)
\]
which implies that the $\ssum$\ instance also has a solution. Similarly, we can show the reverse direction, i.e.,~if $\ssum$\ instance has a solution, then $\ubssum$ is also has a solution. This concludes the proof. 
\end{proof}

{\noindent $\blacktriangleright$~\textsf{Remark}.}~~
Observe that in \autoref{thm:ubssum-to-ssum}, there is a one-to-one correspondence between the solutions of the $\ubssum$\ and the solutions of the $\mathsf{SimulSubsetSum}$ instance. Therefore, the reduction {\em preserves} the number of solutions. Also, any $T(n,t)$ time algorithm that solve $\ssum$ gives an $T(n\log(t), t)$-time algorithm to solve $\ubssum$.

\medskip
We will now show that the \autoref{thm1:hamming}-\ref{thm2:algo-lowspace} can be extended to, in the $\ubssum$\ regime.

\begin{theorem}
There is an $\tilde{O}(k(n+t))$-time deterministic algorithm for $\hsubssum{k}$.
\end{theorem}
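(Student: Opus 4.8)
The plan is to mimic the proof of \autoref{thm1:hamming} (the $\hsssum{k}$ algorithm) but run it on the expanded instance produced by \autoref{thm:ubssum-to-ssum}. First I would apply the reduction from $\ubssum$ to $\ssum$: given $(a_1,\dots,a_n,t)$, form the $\ssum$ instance with entries $2^j a_i$ for $i \in [n]$ and $0 \le j \le \gamma$, where $\gamma = \floor{\log t}$. This has size $n' := n(\gamma+1) = \tilde{O}(n)$, target $t$, and by the remark following \autoref{thm:ubssum-to-ssum} it has exactly the same number of solutions, so the promise ``at most $k$ solutions'' carries over. The crucial point to verify is that hamming weight is preserved in the right sense: a solution $(\beta_1,\dots,\beta_n)$ of the $\ubssum$ instance with $|\beta|_1 = \sum_i \beta_i$ corresponds to a $0/1$-solution of the $\ssum$ instance whose hamming weight is $\sum_i (\text{popcount of }\beta_i)$, which is \emph{not} the same as $\sum_i \beta_i$. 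So the naive composition outputs the wrong weights, and this is the main obstacle.

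To get around this I would modify the coefficient-extraction step rather than the reduction. Instead of the polynomials $f_j(x) = \prod_i (1 + \mu^j x^{a_i})$ used in \autoref{thm1:hamming}, work with
\[
f_j(x) \;:=\; \prod_{i \in [n]} \left( \sum_{b=0}^{\gamma} \mu^{j b}\, x^{b a_i} \right)\;,\qquad j \in [k]\;,
\]
which is the natural generating function for the unbounded problem truncated at exponent $\gamma$ (legitimate since any $\beta_i \le t$ and so $\beta_i$ needs at most $\gamma+1$ bits; in fact $\beta_i \le t/a_i$, but $\le t$ suffices). Here a solution $(\beta_1,\dots,\beta_n)$ of total weight $w = \sum_i \beta_i$ contributes exactly $\mu^{jw}$ to $\cf_{x^t}(f_j)$, so the analogue of \autoref{cl:eqs-ham} holds verbatim: $C_j := \cf_{x^t}(f_j) = \sum_{i \in [\ell]} \lambda_i \mu^{j w_i}$, where $w_1,\dots,w_\ell$ are the distinct hamming weights and $\lambda_i$ their multiplicities. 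Each factor $\sum_{b=0}^{\gamma} \mu^{jb} x^{b a_i}$ is sparse with $\gamma+1$ terms, so this $f_j$ is exactly the generating polynomial of the $\ssum$ instance from \autoref{thm:ubssum-to-ssum} with the extra weights $\mu^{jb}$ attached to the copy $2^b a_i$ replaced by weight $\mu^{jb}$ on the ``use $\beta_i = b$'' choice — one should instead present it as: apply \autoref{lem:coeff-extraction} to the instance with $n(\gamma+1)$ entries $\{b a_i\}$ (the product form $\prod_i \sum_b \mu^{jb} x^{ba_i}$ expands to exactly the multilinear-in-$b$-slots structure needed), giving $C_j \bmod q$ in $\tilde{O}(n(\gamma+1) + t) = \tilde{O}(n + t)$ time per $j$.

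From here the rest of the argument is identical to \autoref{thm1:hamming}: choose a prime $q \in [M, \tfrac65 M]$ with $M > n\gamma + k + t$ (so that $\mathrm{ord}_q(\mu) = q-1$ exceeds the maximum possible weight $n\gamma$, and $q^{-1}$-type inverses up to $m \le k$ exist), pick a primitive root $\mu$ via \autoref{thm:primitive-root-finding}, recover $m$ by \autoref{cl:finding-numer-of-sol} (now run on the expanded instance), compute $C_1,\dots,C_m$, then use Newton's identities (\autoref{lem:newton}) and Vieta (\autoref{lem:vieta}) with FFT-accelerated recurrence to build $g(x) = \prod_i (x - \mu^{w_i})^{\lambda_i}$ of degree $m \le k$, and finally test divisibility by $(x - \mu^r)$ for $r \in [n\gamma]$ using \autoref{lem:poly-division} to read off the $w_i$. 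The bookkeeping changes only in that $n$ is replaced by $\tilde{O}(n)$ throughout (from $\gamma = O(\log t)$ extra copies and $w_i \le n\gamma$), which is absorbed by the $\tilde{O}$, so the total running time remains $\tilde{O}(k(n+t))$. The one genuine check, as noted, is Step~two: confirming that attaching $\mu^{jb}$ to the exponent-$b a_i$ choice makes $\cf_{x^t}$ a power sum in $\mu^{w}$ with $w = \sum_i \beta_i$ — this is the place where the unbounded structure enters and where the argument departs from a black-box use of the reduction.
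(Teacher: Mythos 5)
Your plan correctly identifies the central obstacle (the bit-decomposition reduction distorts hamming weight) and has the right high-level idea (build polynomials $f_j$ so that a solution $(\beta_1,\dots,\beta_n)$ contributes $\mu^{j\sum_i\beta_i}$ to $\cf_{x^t}(f_j)$). But the concrete polynomial you propose is wrong, and the way you propose to compute it would not work even if the polynomial were right. First, the truncation point: you set $\gamma = \lfloor\log t\rfloor$ and use $\sum_{b=0}^{\gamma}\mu^{jb}x^{ba_i}$, but the index $b$ in this sum ranges over the \emph{value} of $\beta_i$, not its bit positions. A valid $\ubssum$ solution can have $\beta_i$ as large as $\lfloor t/a_i\rfloor$ (e.g.~$\beta_i = t$ when $a_i = 1$), which vastly exceeds $\lfloor\log t\rfloor$ in general, so your $f_j$ simply drops most solutions. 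The parenthetical ``$\beta_i$ needs at most $\gamma+1$ bits'' is true but irrelevant; ``at most $\gamma+1$ bits'' and ``at most $\gamma$'' are not the same constraint, and the generating function needs the latter. Consequently the derived bound $w_i \le n\gamma$ is also wrong; in fact $w_i = \sum_i\beta_i \le t$.

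Second, even with the correct truncation $\sum_{b=0}^{\lfloor t/a_i\rfloor}$, you cannot ``apply \autoref{lem:coeff-extraction} to the instance with $n(\gamma+1)$ entries $\{ba_i\}$'': that lemma (and Jin--Wu's $\ln/\exp$ machinery behind it) is specifically for products of \emph{binomials} $\prod_i(1 + W^{b}x^{a_i})$, whereas your $f_j$ is a product of long geometric sums. The $\ssum$ instance with entries $\{ba_i\}$ has generating polynomial $\prod_{i,b}(1+\cdots x^{ba_i})$, which allows choosing several $b$-slots for the same $i$ simultaneously and is a different polynomial from $\prod_i\sum_b\mu^{jb}x^{ba_i}$. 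The paper avoids both problems with a cleaner device: define
\[
f_j(x) \;:=\; \prod_{i=1}^n\frac{1}{1-\mu^jx^{a_i}} \;=\; \Bigl(\underbrace{\textstyle\prod_{i=1}^n(1-\mu^jx^{a_i})}_{=:h_j(x)}\Bigr)^{-1}
\]
as a power series; the infinite geometric series, once reduced modulo $x^{t+1}$, automatically captures all $\beta_i\le\lfloor t/a_i\rfloor$. Now $h_j$ \emph{is} a product of binomials (with a minus sign, handled by the remark following \autoref{lem:coeff-extraction}), so its coefficients up to $x^t$ come from \autoref{lem:coeff-extraction} in $\tilde{O}(n+t)$ per $j$, and the power-series inverse of $h_j$ modulo $x^{t+1}$ is computed by fast polynomial inversion (\cite[Theorem 9.4]{von2013modern}) in the same time. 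Similarly, $m$ is extracted from $f_0 := \prod_i\frac{1}{1-x^{a_i}}$ by inverting $h_0 = \prod_i(1-x^{a_i})$, rather than by running \autoref{cl:finding-numer-of-sol} on the bit-expanded $\ssum$ instance (though your version of this step would also be correct). The remaining Newton/Vieta/root-extraction machinery then proceeds as you describe, with the root search ranging over $r \in [t]$ rather than $[n\gamma]$.
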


{\em \noindent Proof sketch}.~~The algorithm is almost similar to \autoref{algo:hamming}, except the definitions of the polynomials $f_j(x)$. Here also, we fix $q$ and $\mu$ similarly. We require the exact number of solutions $m (m \le k)$ in \autoref{sec:pf-thm1} (see~\autoref{cl:finding-numer-of-sol}). To do that, define the polynomial $f_0$:

\begin{align*}
    f_0(x)\;:=\; \prod_{i=1}^n\,\left(\dfrac{1}{1-x^{a_j}}\right)\,=\, \left(\prod_{i=1}^n\,\left(1-x^{a_i}\right)\right)^{-1}\,& =\,\left(\prod_{i=1}^n\,\left(1+x^{a_i}+x^{2a_i}+ \hdots\right)\right) \\ \,& =:\,(h_0(x))^{-1}\;.
\end{align*}

In the above, we used the {\em inverse identity} $1/(1-x)\,=\, \sum_{i \ge 0}\,x^i$. Expanding the above, is easy to see that $\cf_{x^t}(f_0(x)) = m$, where $m$ is the {\em exact} number of solutions to the $\ksubssum{k}$. Note that, we can compute $f_0^{-1} = h_0(x) \bmod x^{t+1}$, over $\F_q$ efficiently in $\tilde{O}(k+t)$ time. Finding inverse is {\em easy} and can be done efficiently (see~\cite[Theorem 9.4]{von2013modern}). 

Once, we know $m$, we define $m$ many polynomials $f_j$, for $j \in [m]$, as follows.
\[ f_j(x)\;:=\; \prod_{i=1}^n\,\left(\dfrac{1}{1-\mu^jx^{a_j}}\right)\,=\, \left(\prod_{i=1}^n\,\left(1-\mu^j x^{a_i}\right)\right)^{-1} \,=:\,(h_j(x))^{-1}\]

It is not hard to observe that $\cf_{x^t}(f_j(x)) = \sum_{i \in [\ell]} \lambda_i\cdot \mu^{jw_i}$, where $w_1, \hdots, w_{\ell}$ are the distinct hamming weights with multiplicities $\lambda_1, \hdots, \lambda_{\ell}$ (similar to \autoref{obs:thm1-cf-relation}). To find the coefficients of $f_{j}(x)$, we first compute the coefficients of $h_j(x)$, using \autoref{lem:coeff-extraction}, in $\tilde{O}(k(n+t))$ time and find its inverses, using \cite[Theorem 9.4]{von2013modern}, which can again be done in $\tilde{O}(k(n+t))$ time. Once we have computed the coefficients of $f_j(x)$, the rest proceeds same as \autoref{sec:pf-thm1}.

\begin{theorem}
There is a $\poly(knt)$-time and $O(\log (knt))$-space deterministic algorithm which solves $\ksubssum{k}$.
\end{theorem}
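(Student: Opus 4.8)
The plan is to mirror the proof of \autoref{thm2:algo-lowspace}, replacing the multilinear ``subset'' generating polynomial $\prod_i(1+y_ix^{a_i})$ by a truncated ``multiset'' generating polynomial encoding non-negative exponent vectors, and then reusing Kane's identity (\autoref{lem:kane}) together with the general (not necessarily multilinear) version of Klivans--Spielman sparse interpolation (\autoref{thm:sparse-interpolation}, \cite{klivans2001randomness}). We may assume $a_i\ge 1$ for all $i$: if some $a_i=0$ then either $t=0$ (trivial) or there are infinitely many solutions, contradicting the promise of at most $k$ solutions. Setting $N_i:=\floor{t/a_i}\le t$, define
\[
 f(x,y_1,\dots,y_n)\;:=\;\prod_{i=1}^n\left(\sum_{\beta=0}^{N_i} y_i^{\beta}\,x^{\beta a_i}\right),
\]
and let $p_t(y_1,\dots,y_n):=\cf_{x^t}(f)$. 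Since every solution $(\beta_1,\dots,\beta_n)$ of $\sum_i\beta_i a_i=t$ automatically satisfies $\beta_i\le\beta_i a_i\le t$, hence $\beta_i\le N_i$, no solution is lost by the truncation, and $p_t=\sum\prod_i y_i^{\beta_i}$ over all solutions. Thus $p_t$ has sparsity exactly $m\le k$ (each solution contributing one monomial with coefficient $1$, and distinct solutions giving distinct exponent vectors), with individual degrees and total degree at most $t$; recovering the monomials of $p_t$ recovers all solution vectors.

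Next, fix a prime $q\in[M,\tfrac{6}{5}M]$ with $M:=\max(nt+3,(nt)^6)$; such a prime exists by \autoref{thm:prime-interval} and is found deterministically in $\poly(nt)$ time and $O(\log(nt))$ space via \cite{agrawal2004primes}. The term $(nt)^6$ meets the field-size hypothesis of \autoref{thm:sparse-interpolation} for an $n$-variate polynomial of degree at most $t$, while $nt+3$ meets the hypothesis of Kane's identity, because for any $c_1,\dots,c_n\in\F_q$ the polynomial $f(x,c_1,\dots,c_n)=\prod_i\big(\sum_{\beta=0}^{N_i}c_i^{\beta}x^{\beta a_i}\big)$ has $x$-degree at most $\sum_i a_iN_i\le nt$. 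Exactly as in \autoref{cl:eff-eval}, we then evaluate
\[
 p_t(c_1,\dots,c_n)\;=\;-\sum_{\alpha\in\F_q^*}\alpha^{\,q-1-t}\,f(\alpha,c_1,\dots,c_n),
\]
where each $f(\alpha,c_1,\dots,c_n)$ is computed by accumulating, for $i=1,\dots,n$, the inner sum $\sum_{\beta=0}^{N_i}c_i^{\beta}\alpha^{\beta a_i}$ (maintaining $c_i^{\beta}$ and $\alpha^{\beta a_i}$ incrementally) and multiplying the $n$ partial sums. This uses $O(\log(nt))$ working space and $\poly(nt)$ time per evaluation of $p_t$.

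Finally, run the Klivans--Spielman algorithm of \autoref{thm:sparse-interpolation} (general-polynomial version) on the black box $(c_1,\dots,c_n)\mapsto p_t(c_1,\dots,c_n)$ just described, with $n$ variables, degree bound $d=t$, and sparsity bound $s=k$. It outputs all monomials of $p_t$, i.e.\ all solution vectors $(\beta_1,\dots,\beta_n)$, in $\poly(knt)$ time and $O(\log(knt))$ space; the $O(\log(nt))$ extra space per oracle call is absorbed into this bound. Writing each recovered exponent vector to the one-way output tape as it is produced yields the stated algorithm.

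The only genuinely new point over the bounded case, and what I expect to be the main obstacle, is that the natural unbounded generating function $\prod_i(1-y_ix^{a_i})^{-1}$ is a rational function, not a polynomial, so Kane's identity does not apply to it directly, and reading off its $x^t$-coefficient via the obvious power-series recurrence would cost $\Omega(t)$ space. Truncating each geometric series at $x^{a_iN_i}$ fixes both issues at once: no solution is lost, the result is a genuine polynomial of $x$-degree at most $nt$ (so Kane applies with $q>nt+2$), and its $\alpha$-evaluations decompose into $n$ short sums computable term-by-term in logarithmic space. Once this is in place, the remaining verifications (that truncation loses no solution; that the unit coefficients of $p_t$ survive reduction modulo $q$; the exact time/space bookkeeping) are routine adaptations of the proof of \autoref{thm2:algo-lowspace}.
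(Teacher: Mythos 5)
Your proposal is correct, but it takes a genuinely different route from the paper. The paper first invokes \autoref{thm:ubssum-to-ssum}: expand each unknown multiplicity $\beta_i\le t$ in binary and replace $a_i$ by the $\lfloor\log t\rfloor+1$ items $a_i,2a_i,\dots,2^{\lfloor\log t\rfloor}a_i$, giving a bounded $\ksssum{k}$ instance of size $n(\lfloor\log t\rfloor+1)$ in which the number of realisable sets is \emph{exactly} preserved (the map $\beta_i\mapsto(\beta_i^{(0)},\dots,\beta_i^{(\gamma)})$ is a bijection). It then runs \autoref{algo:lowspace} as a black box — in particular staying within the \emph{multilinear} version of \autoref{thm:sparse-interpolation} whose space analysis the paper actually spells out — and translates each recovered realisable set back into a multiplicity vector, giving $O(\log(kn\log t\cdot t))=O(\log(knt))$ space. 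You instead keep $n$ variables and move the unboundedness into the polynomial: you replace the factor $1+y_ix^{a_i}$ by the truncated geometric sum $\sum_{\beta=0}^{\lfloor t/a_i\rfloor}y_i^{\beta}x^{\beta a_i}$, correctly observe that the truncation loses no solution and that the resulting $p_t$ is a genuine polynomial with $0/1$ coefficients of $x$-degree $\le nt$ so that Kane's identity applies with $q>nt+2$, and then appeal to the \emph{general} (non-multilinear) Klivans--Spielman theorem with degree bound $t$ and sparsity bound $k$. Both routes work and give the stated bounds. What the paper's reduction buys is modularity and, concretely, that it never leaves the multilinear regime for which the paper has already justified the $O(\log(snd))$ space bound of \autoref{thm:sparse-interpolation}; your route is more direct and avoids inflating $n$ by a $\log t$ factor, but it leans on the general-degree Klivans--Spielman algorithm, whose low-space implementation the paper only sketches for the multilinear case (the extra step of recovering exponents $e_i\in[0,t]$ rather than $e_i\in\{0,1\}$ needs a short discrete-log-by-enumeration argument, which is fine but not written out anywhere in the paper). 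So your last paragraph correctly identifies the crux, and the remaining gap is small: you should state explicitly that the space analysis of \cite{klivans2001randomness} extends to the non-multilinear case, rather than citing \autoref{thm:sparse-interpolation}, which as stated in the paper is restricted to multilinear $g$.
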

{\em \noindent Proof idea}.~~The algorithm first reduces $\ubssum$ to $\ssum$ using \autoref{thm:ubssum-to-ssum} which preserves the number of solutions but the size of the $\ssum$\ instance is now $n\log t$. Then, it runs \autoref{algo:lowspace} on the $\ssum$\ instance to find all its solutions. From the solutions of the $\ssum$\ instance, it constructs all the solutions of the $\ubssum$\ instance. This gives $\poly(knt)$-time and $O(\log(knt\log t)) = O(\log (knt))$-space algorithm.

\section{Conclusion}
This work introduces some interesting search versions of variants of $\ssum$ problem and gives efficient algorithms for each of them. This opens a variety of questions which require further rigorous investigations.

\begin{enumerate}
\item Can we improve the time complexity of the \autoref{algo:lowspace}? Because of using \autoref{thm:sparse-interpolation}, the complexity for interpolation is already cubic. Whether some other algebraic (non-algebraic) techniques can improve the time complexity, while keeping it low space, is not at all clear.
\item Can we use these algebraic-number-theoretic techniques, to give a {\em deterministic} $\tilde{O}(n+t)$ algorithm for decision version of $\ssum$?
\item Can we improve \autoref{remark:randomized-not-helping} to find both the hamming weights $w_i$ as well as the multiplicities $\lambda_i$, in $\tilde{O}(k(n+t))$?
\item Can we improve the complexity of \autoref{thm:randomized-algos} to $\tilde{O}(kn + \prod_{i=1}^k t_i)$? Note that we cannot avoid the $kn$ term in the time complexity because the number of bits in a $\simulsub$ instance is at least $kn$.
    \item What can we say about the hardness of $\simulsub$ with $k$ subset sum instances where $k = \omega(\log(n))$?
    \item Set Cover Conjecture (SeCoCo) \cite{cygan2016problems} states that given $n$ elements and $m$ sets and for any $\epsilon >0$, there is a $k$ such that Set Cover with sets of size at most $k$ {\em cannot} be computed in time $2^{(1-\epsilon)n}\cdot \poly(m)$ time. Can we show that SeCoCo implies there is {\em no} $t^{1-\epsilon} \poly(n)$ time solution to $\pprod$?
\end{enumerate}

\medskip
{\noindent\bf Acknowledgement.}~PD thanks Department of CSE, IIT Kanpur for the hospitality, and acknowledges the support of Google Ph.~D.~Fellowship. MSR acknowledges the support of Prime Minister's Research Fellowship. The authors would like to thank Dr.~Santanu Sarkar (Dept.~of Mathematics, IIT Madras) for introducing us to various algorithms for subset sum. The authors would also like to thank the anonymous reviewers for the suggestions to improve some parts of the presentation.


\bibliographystyle{alpha}
\bibliography{bibliography}

\newcommand{\etalchar}[1]{$^{#1}$}
\begin{thebibliography}{VZGG13}

\bibitem[ABHS19]{abboud2019seth}
Amir Abboud, Karl Bringmann, Danny Hermelin, and Dvir Shabtay.
\newblock Seth-based lower bounds for subset sum and bicriteria path.
\newblock In {\em Proceedings of the Thirtieth Annual ACM-SIAM Symposium on
  Discrete Algorithms}, pages 41--57. SIAM, 2019.

\bibitem[AKS04]{agrawal2004primes}
Manindra Agrawal, Neeraj Kayal, and Nitin Saxena.
\newblock Primes is in p.
\newblock {\em Annals of mathematics}, pages 781--793, 2004.

\bibitem[Bel57]{bellman57}
Richard~E. Bellman.
\newblock Dynamic programming, 1957.

\bibitem[BHV09]{bazgan2009solving}
Cristina Bazgan, Hadrien Hugot, and Daniel Vanderpooten.
\newblock Solving efficiently the 0--1 multi-objective knapsack problem.
\newblock {\em Computers \& Operations Research}, 36(1):260--279, 2009.

\bibitem[BJLM13]{bernstein2013quantum}
Daniel~J Bernstein, Stacey Jeffery, Tanja Lange, and Alexander Meurer.
\newblock Quantum algorithms for the subset-sum problem.
\newblock In {\em International Workshop on Post-Quantum Cryptography}, pages
  16--33. Springer, 2013.

\bibitem[BKS15]{bourgain2015character}
Jean Bourgain, Sergei Konyagin, and Igor Shparlinski.
\newblock Character sums and deterministic polynomial root finding in finite
  fields.
\newblock {\em Mathematics of Computation}, 84(296):2969--2977, 2015.

\bibitem[Bre76]{brent1976multiple}
Richard~P Brent.
\newblock Multiple-precision zero-finding methods and the complexity of
  elementary function evaluation.
\newblock In {\em Analytic computational complexity}, pages 151--176. Elsevier,
  1976.

\bibitem[Bri17]{bringmann2017near}
Karl Bringmann.
\newblock A near-linear pseudopolynomial time algorithm for subset sum.
\newblock In {\em Proceedings of the Twenty-Eighth Annual ACM-SIAM Symposium on
  Discrete Algorithms}, pages 1073--1084. SIAM, 2017.

\bibitem[BW21]{bringmann2021near}
Karl Bringmann and Philip Wellnitz.
\newblock On near-linear-time algorithms for dense subset sum.
\newblock In {\em Proceedings of the 2021 ACM-SIAM Symposium on Discrete
  Algorithms (SODA)}, pages 1777--1796. SIAM, 2021.

\bibitem[CDL{\etalchar{+}}16]{cygan2016problems}
Marek Cygan, Holger Dell, Daniel Lokshtanov, D{\'a}niel Marx, Jesper Nederlof,
  Yoshio Okamoto, Ramamohan Paturi, Saket Saurabh, and Magnus Wahlstr{\"o}m.
\newblock On problems as hard as cnf-sat.
\newblock {\em ACM Transactions on Algorithms (TALG)}, 12(3):1--24, 2016.

\bibitem[CLRS09]{cormen2009introduction}
Thomas~H Cormen, Charles~E Leiserson, Ronald~L Rivest, and Clifford Stein.
\newblock {\em Introduction to algorithms}.
\newblock MIT press, 2009.

\bibitem[CLS06]{contini2006vsh}
Scott Contini, Arjen~K Lenstra, and Ron Steinfeld.
\newblock Vsh, an efficient and provable collision-resistant hash function.
\newblock In {\em Annual International Conference on the Theory and
  Applications of Cryptographic Techniques}, pages 165--182. Springer, 2006.

\bibitem[DMT20]{draziotis2020product}
Konstantinos~A Draziotis, V~Martidis, and S~Tiganourias.
\newblock Product subset problem: Applications to number theory and
  cryptography.
\newblock {\em arXiv preprint arXiv:2002.07095}, 2020.

\bibitem[EJT10]{elberfeld2010logspace}
Michael Elberfeld, Andreas Jakoby, and Till Tantau.
\newblock Logspace versions of the theorems of bodlaender and courcelle.
\newblock In {\em 2010 IEEE 51st Annual Symposium on Foundations of Computer
  Science}, pages 143--152. IEEE, 2010.

\bibitem[EM20]{esser2020low}
Andre Esser and Alexander May.
\newblock Low weight discrete logarithm and subset sum in 20. 65n with
  polynomial memory.
\newblock {\em memory}, 1:2, 2020.

\bibitem[FMV16]{faust2016chosen}
Sebastian Faust, Daniel Masny, and Daniele Venturi.
\newblock Chosen-ciphertext security from subset sum.
\newblock In {\em Public-Key Cryptography--PKC 2016}, pages 35--46. Springer,
  2016.

\bibitem[GJ79]{garey1979computers}
Michael~R Garey and David~S Johnson.
\newblock {\em Computers and intractability}, volume 174.
\newblock freeman San Francisco, 1979.

\bibitem[HM18]{helm2018subset}
Alexander Helm and Alexander May.
\newblock Subset sum quantumly in 1.17\^{} n.
\newblock In {\em 13th Conference on the Theory of Quantum Computation,
  Communication and Cryptography (TQC 2018)}. Schloss Dagstuhl-Leibniz-Zentrum
  fuer Informatik, 2018.

\bibitem[HR96]{hansen1996testing}
Paul Hansen and Jennifer Ryan.
\newblock Testing integer knapsacks for feasibility.
\newblock {\em European journal of operational research}, 88(3):578--582, 1996.

\bibitem[Joh85]{johnson1985np}
David~S Johnson.
\newblock The np-completeness column: an ongoing guide.
\newblock {\em Journal of Algorithms}, 6(3):434--451, 1985.

\bibitem[JVW21]{jin2021fast}
Ce~Jin, Nikhil Vyas, and Ryan Williams.
\newblock Fast low-space algorithms for subset sum.
\newblock In {\em Proceedings of the 2021 ACM-SIAM Symposium on Discrete
  Algorithms (SODA)}, pages 1757--1776. SIAM, 2021.

\bibitem[JW18]{jin2018simple}
Ce~Jin and Hongxun Wu.
\newblock A simple near-linear pseudopolynomial time randomized algorithm for
  subset sum.
\newblock {\em arXiv preprint arXiv:1807.11597}, 2018.

\bibitem[Kan10]{kane2010unary}
Daniel~M Kane.
\newblock Unary subset-sum is in logspace.
\newblock {\em arXiv preprint arXiv:1012.1336}, 2010.

\bibitem[KP10]{kovalyov2010generic}
Mikhail~Y Kovalyov and Erwin Pesch.
\newblock A generic approach to proving np-hardness of partition type problems.
\newblock {\em Discrete applied mathematics}, 158(17):1908--1912, 2010.

\bibitem[KS01]{klivans2001randomness}
Adam~R Klivans and Daniel Spielman.
\newblock Randomness efficient identity testing of multivariate polynomials.
\newblock In {\em Proceedings of the thirty-third annual ACM symposium on
  Theory of computing}, pages 216--223, 2001.

\bibitem[KX18]{koiliaris2018subset}
Konstantinos Koiliaris and Chao Xu.
\newblock Subset sum made simple.
\newblock {\em arXiv preprint arXiv:1807.08248}, 2018.

\bibitem[KX19]{koiliaris2019faster}
Konstantinos Koiliaris and Chao Xu.
\newblock Faster pseudopolynomial time algorithms for subset sum.
\newblock {\em ACM Transactions on Algorithms (TALG)}, 15(3):1--20, 2019.

\bibitem[Lew83]{lewis1983computers}
Harry~R Lewis.
\newblock Computers and intractability. a guide to the theory of
  np-completeness, 1983.

\bibitem[LL19]{li2019improved}
Yang Li and Hongbo Li.
\newblock Improved quantum algorithm for the random subset sum problem.
\newblock {\em arXiv preprint arXiv:1912.09264}, 2019.

\bibitem[LP92]{lenstra1992rigorous}
Hendrik~W Lenstra and Carl Pomerance.
\newblock A rigorous time bound for factoring integers.
\newblock {\em Journal of the American Mathematical Society}, 5(3):483--516,
  1992.

\bibitem[LPS10]{lyubashevsky2010public}
Vadim Lyubashevsky, Adriana Palacio, and Gil Segev.
\newblock Public-key cryptographic primitives provably as secure as subset sum.
\newblock In {\em Theory of Cryptography Conference}, pages 382--400. Springer,
  2010.

\bibitem[MT98]{maciel1998threshold}
Alexis Maciel and Denis Therien.
\newblock Threshold circuits of small majority-depth.
\newblock {\em Information and Computation}, 146(1):55--83, 1998.

\bibitem[MVV87]{mulmuley1987matching}
Ketan Mulmuley, Umesh~V Vazirani, and Vijay~V Vazirani.
\newblock Matching is as easy as matrix inversion.
\newblock In {\em Proceedings of the nineteenth annual ACM symposium on Theory
  of computing}, pages 345--354, 1987.

\bibitem[MW18]{mckay2018quadratic}
Dylan~M McKay and Richard~Ryan Williams.
\newblock Quadratic time-space lower bounds for computing natural functions
  with a random oracle.
\newblock In {\em 10th Innovations in Theoretical Computer Science Conference
  (ITCS 2019)}. Schloss Dagstuhl-Leibniz-Zentrum fuer Informatik, 2018.

\bibitem[Nag52]{nagura1952interval}
Jitsuro Nagura.
\newblock On the interval containing at least one prime number.
\newblock {\em Proceedings of the Japan Academy}, 28(4):177--181, 1952.

\bibitem[Pis99]{pisinger1999linear}
David Pisinger.
\newblock Linear time algorithms for knapsack problems with bounded weights.
\newblock {\em Journal of Algorithms}, 33(1):1--14, 1999.

\bibitem[PST21]{pferschy2021approximating}
Ulrich Pferschy, Joachim Schauer, and Clemens Thielen.
\newblock Approximating the product knapsack problem.
\newblock {\em Optimization Letters}, pages 1--12, 2021.

\bibitem[Sho90]{shoup1990deterministic}
Victor Shoup.
\newblock On the deterministic complexity of factoring polynomials over finite
  fields.
\newblock {\em Information Processing Letters}, 33(5):261--267, 1990.

\bibitem[Shp96]{shparlinski1996finding}
Igor Shparlinski.
\newblock On finding primitive roots in finite fields.
\newblock {\em Theoretical computer science}, 157(2):273--275, 1996.

\bibitem[VZGG13]{von2013modern}
Joachim Von Zur~Gathen and J{\"u}rgen Gerhard.
\newblock {\em Modern computer algebra}.
\newblock Cambridge university press, 2013.

\end{thebibliography}

\appendix

\section{Generalizing Jin and Wu's technique~\texorpdfstring{\cite{jin2018simple}}{} to different settings} 

\subsection{Revisiting~\texorpdfstring{\cite{jin2018simple}}{} with weighted coefficient}
\label{appendix:coefficient-extraction}
    
    
    
    

In \cite[Lemma~5]{jin2018simple}, Jin and Wu established the main lemma which shows that one can compute
\[
A(x)\;:\equiv\;\prod_{i \in [n]}\, (1+x^{a_i})\,\bmod\,\langle x^{t+1}, p \rangle\;,\text{for any prime}\,p\, \in\, [t+1, (n+t)^3]\,,
\]
in $\tilde{O}(t)$ time. Further, choosing a {\em random} $p$,  one can decide nonzeroness of $\cf_{x^t}(A(x))$, with high probability. In this paper, we will work with a more general polynomial 
\[
G(x)\;:\equiv\; \prod_{i\in [n]}\, (1+W^{b}\cdot x^{a_i})\,\bmod\,\langle x^{t+1}, p \rangle\;,
\]
for some integer $W$, not necessarily $1$ and $b \in \Z_{\ge 0}$. Therefore, the details slightly differ. For the completeness, we give the details. But before going into the details, we define some basics of power series and expansion of $\exp$ (respectively~$\ln$), which will be crucially used in the proof of \autoref{lem:coeff-extraction}. In general, we will be working with primes $p$ such that $\log(p) = O(\log(n+t))$, thus $\log(p)$ terms in the complexity can be subsumed in $\tilde{O}$ notation.

\medskip
{\bf \noindent Basic Power series tools.}~We denote $\F[x]$ as the ring of polynomials over a field $\F$, and $\F[[x]]$ denote the ring of formal power series over $\F$ which has elements of the form $\sum_{i \ge 0} a_i x^i$, for $a_i \in \F$. Two important power series over $\Q[[x]]$ are: 
\[
\ln(1+x) \;=\; \sum_{k \ge 1}\, \frac{(-1)^{k-1} x^k}{k}\,,\;\;\text{and}\;\;\exp(x)\;=\;\sum_{k \ge 0}\,\frac{x^k}{k!}\;.
\]
They are inverse to each other and satisfy the basic properties:
\[ \exp\left(\ln\left(1+f(x)\right)\right) = 1+f(x),\;\;\text{and}\;\;\ln\left((1+f(x)) \cdot (1+g(x))\right)=\ln(1+f(x)) + \ln(1+g(x))\;,
\]
for every $f(x), g(x) \in x\Q[x]$ (i.e.,~constant term is $0$). Here is an important lemma to compute $\exp(f(x)) \bmod x^{t+1}$; for details see~\cite{brent1976multiple}; for an alternative proof, see~\cite[Lemma~2]{jin2018simple}.
\begin{lemma}[\cite{brent1976multiple}]\label{lem:exp-lemma}
Given a polynomial $f(x) \in x\F[x]$ of degree at most $t (t < p)$, one can compute a polynomial $g(x) \in \F_p[x]$ in $\tilde{O}(t)$ time such that $g(x) \equiv \exp(f(x))~\bmod~\langle x^{t+1},p \rangle$.
\end{lemma}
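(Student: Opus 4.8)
The plan is to compute $g \equiv \exp(f)\bmod\langle x^{t+1},p\rangle$ by a Newton iteration that doubles the working precision at every step, so that $O(\log t)$ steps suffice and the running time is a geometric sum dominated by the last step. First note that the truncated exponential is well defined over $\F_p$: since $t<p$, the integers $1,2,\dots,t$ are units in $\F_p$, so $\exp(f)$ makes sense as an element of $\F_p[[x]]$ modulo $x^{t+1}$ (here we use $f\in x\F[x]$, i.e.\ $f$ has zero constant term). The series $g$ is characterised as the unique power series with constant term $1$ and $\ln(g)=f$. Put $\Phi(h):=\ln(h)-f$; since $\Phi'(h)=1/h$, one formal Newton step is
\[
h\;\longmapsto\;h-\frac{\Phi(h)}{\Phi'(h)}\;=\;h\cdot\bigl(1+f-\ln(h)\bigr).
\]
The key lemma to establish is quadratic convergence: if $h\equiv g\bmod x^{m}$ with $h(0)=1$, then $h\cdot(1+f-\ln h)\equiv g\bmod x^{2m}$. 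Starting from $h_0=1$ and iterating $\lceil\log_2(t+1)\rceil$ times therefore yields $g\bmod x^{t+1}$.

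It remains to implement one Newton step at precision $2m$ in $\tilde{O}(m)$ time. This needs (i) a multiplication of two polynomials of degree $<2m$ over $\F_p$, done in $\tilde{O}(m)$ via an FFT-based routine (three-primes FFT, or Kronecker substitution into integer multiplication; the $\log p=O(\log(n+t))$ factor is absorbed into $\tilde{O}$), and (ii) the evaluation of $\ln(h)\bmod x^{2m}$. For the latter, write $\ln(h)=\int h'/h$: compute $h^{-1}\bmod x^{2m}$ by the standard inversion Newton iteration $u\mapsto u(2-hu)$ (which again doubles precision and costs $\tilde{O}(m)$ overall), multiply by $h'$, and formally integrate --- the integration divides the coefficient of $x^{j-1}$ by $j$ for $j\le t$, which is legal precisely because $p>t$. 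Summing the per-step cost over all iterations gives $\sum_{i}\tilde{O}(2^i)=\tilde{O}(t)$, as claimed.

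The only real work is the convergence lemma together with the precision bookkeeping: one has to check that each of the three nested Newton iterations (for $h^{-1}$, then for $\ln$, then for $\exp$) genuinely doubles the number of correct coefficients, and that the only integer divisions performed are by $1,\dots,t$ inside the logarithm, so that the hypothesis $t<p$ is exactly what is needed. I would also record the alternative derivation used in \cite[Lemma~2]{jin2018simple}: $g=\exp(f)$ satisfies the ODE $g'=f'g$, hence the coefficient recurrence $n\,g_n=\sum_{k=1}^{n}k f_k\,g_{n-k}$, which can be resolved online in $\tilde{O}(t)$ by a relaxed (divide-and-conquer) convolution; this bypasses $\ln$ but relies on the same hypothesis $p>t$ for the division by $n$.
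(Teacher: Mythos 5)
Your proposal is correct and reconstructs exactly the two proofs the paper defers to by citation: the paper itself gives no in-line argument for this lemma, pointing instead to \cite{brent1976multiple} (which is the Newton-iteration route you develop, with $h\mapsto h(1+f-\ln h)$, the nested inversion iteration $u\mapsto u(2-hu)$, and logarithm via $\int h'/h$) and to \cite[Lemma~2]{jin2018simple} (which is the ODE recurrence $n\,g_n=\sum_{k=1}^n k f_k g_{n-k}$ solved by divide-and-conquer convolution that you record as the alternative). Your identification of $p>t$ as precisely the hypothesis needed to invert $1,\dots,t$ in the integration/division steps, and the geometric-sum accounting of the doubling-precision steps to get $\tilde{O}(t)$, are both accurate, so nothing is missing.
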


Here is the most important lemma, which is an extension of~\cite[Lemma~4]{jin2018simple}, where the authors considered the simplest form. In this paper, we need the extensions for the `robust' usage of this lemma (in \autoref{sec:pf-thm1}).

\begin{lemma}[Coefficient Extraction Lemma] \label{lem:coeff-extraction} Let $A(x) = \prod_{i \in [n]} (1+W^b\cdot x^{a_i})$, for any non-negative integers $a_i, b$ and $W \in \Z$. Then, for a prime $p > t$, one can compute $\cf_{x^r}(A(x))\,\bmod\,p$ for all $0 \le r \le t$, in time $\tilde{O}((n+t \log (Wb)))$.
\end{lemma}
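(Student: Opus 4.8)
The plan is to follow the logarithm--exponentiation strategy of Jin and Wu \cite{jin2018simple}, carrying the extra factor $W^b$ through the computation; this lemma is a mild extension of \cite[Lemma~4]{jin2018simple}. First I would reduce to the generic situation. By the standing assumption $t \ge \max_i a_i$, every factor with $a_i \ge 1$ has constant term $1$, so if $I_0 := \{i : a_i = 0\}$ we may pull out $\prod_{i \in I_0}(1+W^b) = (1+W^b)^{|I_0|}$, compute it modulo $p$ in $\tilde{O}(\log(Wb))$ time, and multiply it back at the end; hence assume $1 \le a_i \le t$ for all $i$. Since $p > t$, every integer in $[1,t]$ --- in particular every $k$ and every $k!$ that appears below --- is a unit modulo $p$, and therefore reduction modulo $p$ commutes with the power-series operations $\ln$ and $\exp$; this is the single place where $p > t$ is used. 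Consequently it suffices to compute $L(x) := \ln(A(x)) \bmod \langle x^{t+1}, p\rangle$ and then recover $A(x) \equiv \exp(L(x)) \bmod \langle x^{t+1}, p\rangle$ via \autoref{lem:exp-lemma}, which costs $\tilde{O}(t)$ and returns all the coefficients $\cf_{x^r}(A)$, $0 \le r \le t$, simultaneously.

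The remaining work is to compute $L(x)$ efficiently. Expanding,
\[
\ln A(x) \;=\; \sum_{i \in [n]} \ln\!\left(1 + W^b x^{a_i}\right) \;=\; \sum_{i \in [n]} \sum_{k \ge 1} \frac{(-1)^{k-1}(W^b)^k}{k}\, x^{a_i k}\,,
\]
so, writing $m_a$ for the number of indices $i$ with $a_i = a$,
\[
\cf_{x^j}\!\big(\ln A(x)\big) \;=\; \sum_{a \mid j} m_a \cdot \frac{(-1)^{j/a - 1}(W^b)^{j/a}}{j/a}\,, \qquad 1 \le j \le t\,.
\]
I would compute the multiplicities $m_a$ in $O(n+t)$ time with a length-$(t{+}1)$ counter array, precompute $1^{-1},\dots,t^{-1}\bmod p$ in $O(t)$ time by the usual recurrence, and compute $W^b \bmod p$ once by fast modular exponentiation in $\tilde{O}(\log W + \log b) = \tilde{O}(\log(Wb))$ time. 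Then, for each distinct value $a$, I loop over $k = 1, 2, \dots, \lfloor t/a\rfloor$, maintaining $(W^b)^k \bmod p$ incrementally (one multiplication mod $p$ per step), and add $m_a (-1)^{k-1}(W^b)^k k^{-1}$ to the accumulator for the coefficient of $x^{ak}$. The number of iterations is $\sum_{a\ \text{distinct}}\lfloor t/a\rfloor$; as there are at most $\min(n,t)$ distinct values, this is at most $t\cdot H_{\min(n,t)} = \tilde{O}(t)$, so this phase runs in $\tilde{O}(n+t+\log(Wb))$ time (or, recomputing $(W^b)^{j/a}\bmod p$ from scratch by repeated squaring, $\tilde{O}(\log(Wb))$ per term and $\tilde{O}(t\log(Wb))$ overall --- matching the stated bound). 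Feeding $L(x)$ into \autoref{lem:exp-lemma}, and multiplying back $(1+W^b)^{|I_0|}$ if $I_0 \neq \emptyset$, yields all coefficients $\cf_{x^r}(A(x))\bmod p$ within the claimed time.

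I do not expect a serious obstacle here; the only two points needing care are (i) the justification that working over $\F_p$ from the outset is legitimate --- precisely where $p > t$ is used, since all denominators $k,k!$ with $k \le t$ are units mod $p$ --- and (ii) the harmonic-sum estimate $\sum_{a\ \text{distinct}}\lfloor t/a\rfloor = \tilde{O}(t)$, which is what keeps the divisor-indexed double sum near-linear in $t$. Everything else is routine bookkeeping of modular-arithmetic costs, and the bit-lengths $\log p = O(\log(n+t))$ of the primes in play are absorbed into the $\tilde{O}(\cdot)$ notation as elsewhere in the paper.
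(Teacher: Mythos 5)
Your proof is correct and follows essentially the same $\ln$-then-$\exp$ strategy as the paper: expand $\ln A(x)$ as a double sum over indices $i$ and powers $k$, collect multiplicities, precompute inverses mod $p$, and finish with \autoref{lem:exp-lemma}. The only (welcome) refinements beyond the paper's version are the explicit handling of $a_i=0$ factors and the observation that maintaining $(W^b)^k \bmod p$ incrementally trims the running time to $\tilde{O}(n+t+\log(Wb))$, a small improvement over the stated $\tilde{O}(n+t\log(Wb))$, which you also recover as a fallback.
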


\begin{proof}
Let us define $B(x):=\ln (A(x)) \in \mathbb{Q}[[x]]$. By definition,
\[
B(x) \,=\, \ln \left(\prod_{i \in [n]}\, \left(1+ W^b \cdot x^{a_i}\right)\right)\,=\,\sum_{i \in [n]}\,\ln \left(1 + W^b \cdot x^{a_i}\right)\,=\,\sum_{i \in [n]} \sum_{j=1}^{\infty}\, \frac{(-1)^{j-1}}{j} \cdot W^{jb} \cdot x^{a_i j}\;. 
\]
Let $B_t(x) : = B(x) \bmod \langle x^{t+1},p \rangle$. Define $S_k := \{ i \mid a_i = k\}$. Moreover, let us define
\[
d_{k,j}:= \begin{cases}
\sum_{i \in S_k} W^{jb}, & \;\text{if}\;S_k \,\ne\, \phi\,,\\
0, & \text{otherwise}\,.
\end{cases}
\]
Then, rewriting the above expression, we get

$$B_t(x)\, \equiv \,\sum_{i \in [n]} \sum_{j=1}^{\lfloor t/a_i\rfloor}\, \frac{(-1)^{j-1}}{j} \cdot W^{jb} \cdot x^{a_i j}\,\equiv \,\sum_{k \in [t]} \sum_{j=1}^{\lfloor t/k\rfloor}\, \frac{(-1)^{j-1} \cdot d_{k,j}}{j} \cdot x^{jk}~~\bmod~p\;.$$

Since $p > t$, $j^{-1} \bmod p$ exists, for $j \in [t]$. So we pre-compute all $j^{-1} \bmod p$, which takes total $\tilde{O}(t)$ time. Further, we can pre-compute $|S_k|,\,\forall\,k \in [t]$ in $\tilde{O}((n+t))$ time, just by a linear scan. 

Moreover, computing each $d_{k,j}= |S_k| \cdot W^{jb}$, takes $\tilde{O}(\log (Wbt))$ time, since $j \le t$ (assuming we have computed $|S_k|$). Thus, the total time complexity to compute coefficients of $B_t(x)$ is 
$$ \tilde{O}((n+t)) + \tilde{O}(t ) + \sum_{k \in [t]} \sum_{j \in \lfloor t/k \rfloor}\,\tilde{O}(\log (Wbt) \cdot )\,=\,\tilde{O}((n + t \log (Wb)) )\,.$$
In the last, we use that $\sum_{k=1}^{\lfloor t/k \rfloor} 1 = O(t \log t)$, which gives the sum to be $\tilde{O}((n + t \log (Wb)))$ ($\log t$ is absorbed inside the $\tilde{O}$).

The last step is to compute $A(x) \equiv \exp(B_t(x)) \bmod \langle x^{t+1},p \rangle$. Since one can compute $B_t(x)$ in time $\tilde{O}((n+t\log (Wb)))$, using~\autoref{lem:exp-lemma}, one concludes to compute the coefficients of $x^r$ of $A(x)$, $0 \le r \le t$, over $\F_p$ in similar time of~$\tilde{O}((n+t \log (Wb)))$. 
\end{proof}

{\noindent $\blacktriangleright$~\textsf{Remark}.}~~
When $|W|=1$, it is exactly~\cite[Lemma~5]{jin2018simple}. One can also work with $A(x) = \prod_{i \in [n]} (1-W^b\cdot x^{a_i})$; the negative sign does not matter since we can use $\ln(1-x)=-\sum_{i \ge 1}\, - x^i/i$ and the proof goes through.

\subsection{Fast multivariate polynomial multiplication}
In this section, we will study the time required to compute
\[A(x_1, \dots, x_k)\;:\equiv\;\prod_{i=1}^n\,\left(1 + \prod_{j=1}^k x_j^{a_{ij}}\right)\,\bmod\,\langle x_1^{t_1+1}, x_2^{t_2+1}, \dots, x_k^{t_k+1}, p \rangle\]
for some prime $p$. The case when $k = 1$ has been studied in \cite[Lemma~5]{jin2018simple} where the authors gave an $\tilde{O}(n+t_1)$ time algorithm for $p \in [t_1 + 1, (n+t_1)^3]$. Here we will present the generalisation of this lemma which is used in~\autoref{thm:randomized-algos} using multivariate FFT.

\begin{lemma}[Fast multivariate exponentiation]
\label{lem:multivariate-exponentiation}
Let $\x=(x_1,\hdots,x_k)$ and $f(\x) = \sum_{i=1}^{t_1} f_i(\x) \cdot x_1^{i} \in \F_p[\x]$ where $f_i(\x) \in \F_p[x_2, \dots, x_k]$ such that 
\begin{enumerate}
    \item $f(\x) ~\bmod~\langle x_1, \hdots, x_k \rangle = 0$, i.e.,~the constant term of $f(\x)$ is $0$, and, 
    \item $\deg_{x_j}(f) = t_j$, for positive integers $t_j$.
\end{enumerate}
Then, there is an $\tilde{O}(\prod_{i=1}^{k} (2t_i + 1))$ time deterministic algorithm that computes a polynomial $g(\x) \in \F_p[\x]$ such that $g(\x) \equiv \exp(f(\x))\, \bmod\, \langle x_1^{t_1 + 1}, \dots, x_k^{t_k+1}\rangle$ over $\F_p$.
\end{lemma}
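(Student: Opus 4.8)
The plan is to reduce the $k$-variate exponential to a one-variable power-series exponential over the coefficient ring $R := \F_p[x_2,\dots,x_k]/\langle x_2^{t_2+1},\dots,x_k^{t_k+1}\rangle$, to run the $R$-analogue of Brent's Newton-iteration routine behind \autoref{lem:exp-lemma} in the variable $x_1$, and to implement each multiplication occurring inside that routine by one $k$-variate polynomial multiplication carried out through a Kronecker substitution followed by univariate FFT. First I would note that, since $f$ has zero constant term and $\deg_{x_j} f \le t_j$, the requested output is a well-defined element of the finite local $\F_p$-algebra $\F_p[\x]/\langle x_1^{t_1+1},\dots,x_k^{t_k+1}\rangle \cong R[x_1]/x_1^{t_1+1}$, namely the finite sum $\sum_{m\ge 0} f^m/m!$: the maximal ideal $\langle x_1,\dots,x_k\rangle$ is nilpotent, and $p>t_1$ (as required by \autoref{lem:exp-lemma}) makes $1,\dots,t_1$ units. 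Viewing $f\in x_1 R[x_1]$, its value at $x_1=0$ is $0$, so $\exp(f)\bmod x_1^{t_1+1}$ is a well-defined element of $R[x_1]$ with $x_1^0$-coefficient $1\in R^{\times}$, and it suffices to compute it.

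Next I would observe that the algorithm underlying \autoref{lem:exp-lemma} is pure Newton iteration: one doubles the $x_1$-precision via $g \leftarrow g\,(1+f-\log g)\bmod x_1^{2\ell}$, where $\log g = \int(g'/g)$ and the power-series inverse $g^{-1}$ is itself produced by a Newton iteration; all arithmetic used is ring addition/multiplication, division by integers $\le t_1$, and inversion of power series whose $x_1^0$-coefficient is a unit. Since $R$ is a commutative $\F_p$-algebra in which $1,\dots,t_1$ are units, the identical algorithm runs over $R$ verbatim and outputs $g\equiv\exp(f)\bmod x_1^{t_1+1}$ using $O(\log t_1)$ multiplications in $R[x_1]$ of $x_1$-degree $O(t_1)$, plus lower-order work — this is the standard fact that computing $\exp$ of a length-$N$ series costs $O(\mathsf{M}_R(N))$, one multiplication up to constants, since the Newton recursion $T(N)=T(N/2)+O(\mathsf{M}_R(N))$ solves to $O(\mathsf{M}_R(N))$.

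It then remains to bound the cost of one multiplication of $A,B\in R[x_1]$ of $x_1$-degree $O(t_1)$. I would lift their $R$-coefficients to their canonical representatives in $\F_p[x_2,\dots,x_k]$ with $\deg_{x_j}\le t_j$, so that the untruncated product $AB$ has $\deg_{x_1}=O(t_1)$ and $\deg_{x_j}\le 2t_j$ for $j\ge 2$; then apply the Kronecker map $x_1\mapsto y$, $x_j\mapsto y^{m_j}$ with $m_2 := O(t_1)$ chosen to exceed every $x_1$-degree arising in the iteration and $m_j := m_2\prod_{2\le l<j}(2t_l+1)$, which is injective on every monomial of $AB$ by the usual mixed-radix argument — here the strides $2t_l+1$ (rather than $t_l+1$) are exactly what keep the packing collision-free for the genuinely size-$2t_l$ product exponents. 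Multiplying the two resulting univariate polynomials, of degree $O(t_1\prod_{j=2}^k(2t_j+1))=\tilde{O}(\prod_{j=1}^k(2t_j+1))$, by FFT, reading back the $k$-variate product, reducing each $x_1$-coefficient modulo $\langle x_2^{t_2+1},\dots,x_k^{t_k+1}\rangle$ and truncating the $x_1$-degree returns us to $R[x_1]$ at cost $\tilde{O}(\prod_{j=1}^k(2t_j+1))$ per multiplication; combined with the $O(\log t_1)$ Newton steps, the total is $\tilde{O}(\prod_{i=1}^k(2t_i+1))$, as claimed.

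The main obstacle I anticipate is the interplay of two different truncations — honest power-series truncation in $x_1$ versus ideal truncation in $x_2,\dots,x_k$ — and keeping the Kronecker packing faithful at every intermediate product. This is precisely why a single global Kronecker image $\phi(f)$ does not suffice: $\phi(f)^m\ne\phi(f^m)$ once $m\ge 3$ (product exponents then exceed any fixed stride), so $\exp(\phi(f))$ is not the image of the desired truncated multivariate exponential; one must instead re-truncate modulo the ideal after every product, which is exactly what carrying the quotient ring $R$ together with honest $x_1$-truncation accomplishes. A secondary point to watch is that $R$ has zero-divisors, so one must confirm that the only power-series inversions Brent's routine performs are of series with unit $x_1^0$-coefficient; this holds because $f$ has no $x_1^0$-term, so every such series has leading coefficient $1$.
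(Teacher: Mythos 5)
Your proof is correct, but it takes a genuinely different route from the paper for both of the two ingredients the lemma needs. The paper differentiates $g = \exp(f)$ with respect to $x_1$ to get $g' = g\cdot f'$, reads off the linear recurrence $i\,g_i = \sum_{j<i}(i-j)f_{i-j}g_j$ on the $x_1$-coefficients $g_i \in \F_p[x_2,\dots,x_k]$, and speeds the recurrence up by the standard half-online divide-and-conquer (\autoref{algo:brent}), exactly as Jin--Wu do in the univariate case; the multivariate products that appear there are performed by multidimensional FFT with each factor already reduced modulo $\langle x_2^{t_2+1},\dots,x_k^{t_k+1}\rangle$. You instead coefficient-lift to $R = \F_p[x_2,\dots,x_k]/\langle x_2^{t_2+1},\dots\rangle$ and run Brent's Newton iteration $g \leftarrow g(1+f-\log g)$ over $R[x_1]$, implementing each $R[x_1]$-product by Kronecker substitution into a single univariate FFT followed by unpacking and reduction. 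Both give the same $\tilde{O}\bigl(\prod_i(2t_i+1)\bigr)$ bound, and both hinge on the same two observations you make explicit: that $1,\dots,t_1$ must be units (so $p>t_1$), and that one must re-truncate modulo the ideal in $x_2,\dots,x_k$ after \emph{every} product rather than Kronecker-packing once at the start (otherwise high powers of $f$ overflow the strides). Your route is more modular --- it reuses univariate exp/inverse/Kronecker as black boxes --- at the cost of having to argue, as you do, that Brent's routine and Newton inversion work verbatim over an arbitrary finite commutative $\F_p$-algebra with zero-divisors, since the only power-series inversions ever performed have unit constant term. One small imprecision in your accounting: the Newton iteration is not ``$O(\log t_1)$ multiplications of $x_1$-degree $O(t_1)$'' (that would give an extra $\log t_1$ factor); rather, the precisions form a geometric series, so the cost is $O(\mathsf{M}_R(t_1))$ as your recursion $T(N)=T(N/2)+O(\mathsf{M}_R(N))$ correctly shows --- you state both, and the second is the one to keep.
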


\begin{proof}
Let $g(\x) = \exp(f(\x)) = \sum_{i=0}^{\infty} g_i(x_2, \dots, x_k) \cdot x_1^{i}$, where $g_i \in \F_p[[x_2, \hdots, x_k]]$. Differentiate wrt $x_1$ to get: \[g'(\x) \;:=\; \dfrac{\partial g(\x)}{\partial x_1} \;=\; g(\x) \cdot \dfrac{\partial f(\x)}{\partial x_1}\;.\]
By comparing the coefficients of $x_1^i$ on both sides, we get (over $\F_p$):
\[g_{i} \; \equiv\; i^{-1} \cdot \sum_{j=0}^{i-1} f_{i-j} \cdot g_j\;~\bmod~\langle x_2^{t_2+1}, \hdots, x_k^{t_k+1} \rangle\,,\]
\end{proof}
where $g_0 = 1$. By initializing $g_0 = 1$, the rest $g_i$ to 0 and calling $Compute(0,t_1)$ procedure in \autoref{algo:brent}, we can compute all the coefficients up to $x_1^{t_1}$, in the polynomial $g(\x)~\bmod~\langle x_2^{t_2+1}, \hdots,  x_k^{t_k+1} \rangle$, over $\F_p$.

\begin{algorithm}
\caption{Algorithm for $Compute(\ell,r)$}
\label{algo:brent}
\vspace{1mm}
\KwIn{integers $\ell, r$ and polynomials $f_i, g_i$}
\KwOut{Updated values of $g_i$}

\If{$\ell<r$}{
    $m = \lfloor (\ell + r)/2 \rfloor$\;
    $Compute(\ell, m)$\;
    \For{$i \in \{m+1, \dots, r\}$}{
        $g_i = g_i + i^{-1} \sum_{j = \ell}^{m} (i-j) f_{i-j}g_j\,\mod\,\langle x_2^{t_2+1}, x_3^{t_3+1}, \dots, x_k^{t_k+1}, p \rangle$\;
    }
    $Compute(m+1, r)$\;
}
\KwRet $g_{\ell}, \dots, g_{r}$\;
\end{algorithm}
To speed up this algorithm, we can set $A(\x) = \sum_{i=0}^{r- \ell} if_ix_1^i$ and $B(\x) = \sum_{i=0}^{m-\ell} g_{i + \ell} x_1^i$; here the $f_i$ and $g_{i+\ell}$ have been computed modulo $\langle x_2^{t_2+1}, \hdots,  x_k^{t_k+1} \rangle$ already. Use multidimensional FFT \cite[Chapter~30]{cormen2009introduction} to compute $C(\x) = A(\x)B(\x)$ to speed up the for loop which takes $O(\prod_{i=1}^k (2t_i + 1) \log(\prod_{i=1}^k (2t_i + 1)))$ time.

Observe that $\sum_{j = \ell}^{m} (i-j) f_{i-j}g_j$ is the coefficient of $x_1^{i - \ell}$ in $C(\x)$; importantly $\deg_{x_i}(C) \le 2t_i$, for $i \ge 2$. The extraction of the coefficient of $x_1^i$ in $C(\x)$ for all $i$, $\bmod~\langle x_2^{t_2+1}, x_3^{t_3+1}, \dots, x_k^{t_k+1}\rangle$ can be performed in $O(\prod_{i=1}^k (2t_i + 1))$ time. This is done by traversing through the polynomial and collecting coefficient along with monomials having the same $x_1^i$ term (and there can be at most $\prod_{i=2}^k (2t_i + 1)$ many terms). Thus, the total time complexity of computing $g(\x) \bmod~\langle x_2^{t_2+1}, x_3^{t_3+1}, \dots, x_k^{t_k+1}\rangle$ is 

\[
T(t_1, t_2, \dots, t_k) \;=\; 2T(t_1/2, t_2, \dots, t_k) \,+\, \tilde{O}(\prod_{i=1}^k (2t_i + 1)) \;=\; \tilde{O}(\prod_{i=1}^k (2t_i + 1))\;,
\] as desired. 

\begin{lemma}[Fast logarithm computation]
\label{lem:coefficient-extraction}
Let $A(\x)\,=\,\prod_{i=1}^n\,\left(1 + \prod_{j=1}^k x_j^{a_{ij}}\right)$. Then, there exists an $\tilde{O} (kn + \prod_{i=1}^k t_i)$ time deterministic algorithm that computes $\cf_{\x^{\e}}(\ln(A(\x)))\,\mod\,p$ for all $\e$, such that $\e=(e_1,\hdots, e_k)$ with $e_i \le t_i$. 
\end{lemma}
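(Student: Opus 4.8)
The plan is to follow the univariate argument behind \autoref{lem:coeff-extraction} almost verbatim, with scalar exponents replaced by exponent vectors. Writing $\veca_i:=(a_{i1},\dots,a_{ik})$ and expanding $\ln(1+u)=\sum_{\ell\ge 1}(-1)^{\ell-1}u^\ell/\ell$ over $\Q[[\x]]$,
\[
\ln(A(\x)) \;=\; \sum_{i\in[n]}\ln\Bigl(1+\prod_{j}x_j^{a_{ij}}\Bigr) \;=\; \sum_{i\in[n]}\sum_{\ell\ge 1}\frac{(-1)^{\ell-1}}{\ell}\,\x^{\,\ell\veca_i}\,.
\]
Modulo the ideal $\langle x_1^{t_1+1},\dots,x_k^{t_k+1},p\rangle$ the $(i,\ell)$ term survives exactly when $\ell a_{ij}\le t_j$ for all $j$, i.e.\ when $\ell\le L_{\veca_i}:=\min\{\lfloor t_j/a_{ij}\rfloor:a_{ij}\ge 1\}$. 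Assuming (as holds in the $\pprod$ application, where each $a_i\ge 2$) that every $\veca_i\ne\mathbf 0$, and grouping equal exponent vectors with multiplicity $c_{\veca}:=|\{i:\veca_i=\veca\}|$ over the set $S$ of distinct occurring vectors, we get the closed form
\[
\ln(A(\x)) \;\equiv\; \sum_{\veca\in S}\ \sum_{\ell=1}^{L_{\veca}}\frac{(-1)^{\ell-1}c_{\veca}}{\ell}\,\x^{\,\ell\veca}\ \ \bmod\ \langle x_1^{t_1+1},\dots,x_k^{t_k+1},p\rangle .
\]
Correctness of the scheme reduces to this identity plus the fact — argued exactly as in \autoref{lem:coeff-extraction} — that $\ln$ commutes with truncation modulo the monomial ideal.

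The algorithm is then: (i) read the $n$ vectors, discard any $i$ with some $a_{ij}>t_j$ (that term is $\equiv 0$) and, by sorting, compute $S$ together with the $c_{\veca}$'s, in $\tilde O(nk)$ time; (ii) precompute $\ell^{-1}\bmod p$ for $\ell=1,\dots,\max_j t_j$ by the usual linear-time recurrence — valid since in the application $p$ is a prime exceeding $\prod_j(2t_j+1)$, hence every $t_j$; (iii) keep an array of coefficients indexed by the monomials $\preceq(t_1,\dots,t_k)$, and for each $\veca\in S$ and $\ell\in[L_{\veca}]$ add $(-1)^{\ell-1}c_{\veca}\ell^{-1}$ into the slot of $\x^{\ell\veca}$. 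Each update costs $\tilde O(1)$, so the total is $\tilde O(nk)+\tilde O\bigl(\prod_j t_j\bigr)+\tilde O\bigl(\#\{(\veca,\ell)\text{ surviving pairs}\}\bigr)$.

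The main thing to prove — the one step that is not pure bookkeeping — is that the number of surviving pairs is $\tilde O(\prod_j t_j)$, the multivariate analogue of the harmonic bound $\sum_k\lfloor t/k\rfloor=\tilde O(t)$ used in \autoref{lem:coeff-extraction}. Each surviving pair is determined by the monomial $\e:=\ell\veca\preceq(t_1,\dots,t_k)$ together with a divisor $\ell\mid\gcd(e_1,\dots,e_k)$, so the count is at most $\sum_{\mathbf 0\ne\e\preceq\vec t}d(\gcd(\e))$, $d(\cdot)$ being the divisor-count function; interchanging sums,
\[
\sum_{\e\preceq\vec t}d(\gcd(\e)) \;=\; \sum_{g\ge 1}\bigl|\{\e\preceq\vec t : g\mid e_j\ \forall j\}\bigr| \;=\; \sum_{g\ge 1}\prod_j\Bigl\lfloor\frac{t_j}{g}\Bigr\rfloor \;\le\; \Bigl(\prod_j t_j\Bigr)\sum_{g\ge 1}g^{-k}.
\]
For $k\ge 2$ the last series is a constant ($<\pi^2/6$), giving $O(\prod_j t_j)$; for $k=1$ it is the classical $O(t_1\log t_1)$. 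Hence the running time is $\tilde O(kn+\prod_j t_j)$, as claimed. (Minor caveat: the output array has $\prod_j(t_j+1)\le 2^k\prod_j t_j$ cells; in the regime where this lemma is invoked, $k=O(\log t/\log\log t)$, the factor $2^k=t^{o(1)}$ is absorbed, matching the stated bound.)
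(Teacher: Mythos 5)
Your proof follows the same skeleton as the paper's: expand $\ln(A)=\sum_i\ln(1+\x^{\veca_i})$ as a power series, truncate modulo $\langle x_1^{t_1+1},\dots,x_k^{t_k+1},p\rangle$, group identical exponent vectors with multiplicities, precompute the inverses $\ell^{-1}\bmod p$, and accumulate into an array indexed by the monomials $\e\preceq\vec t$. The one place you argue genuinely differently is the bound on the number of surviving $(\veca,\ell)$ pairs. The paper singles out the smallest target (WLOG $t_1\le t_j$), writes the iteration count as $\bigl(\sum_{e_1=1}^{t_1}\lfloor t_1/e_1\rfloor\bigr)\cdot\prod_{j\ge 2}t_j$, and invokes the harmonic bound $\sum_{e_1}\lfloor t_1/e_1\rfloor=O(t_1\log t_1)$ to get $\tilde O(\prod_j t_j)$. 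You instead parameterize surviving pairs by the monomial $\e=\ell\veca$ and a divisor $\ell\mid\gcd(\e)$, interchange sums to get $\sum_{g\ge 1}\prod_j\lfloor t_j/g\rfloor\le(\prod_j t_j)\sum_g g^{-k}$, and use $\zeta(k)=O(1)$ for $k\ge 2$. Your argument is more symmetric (no need to pick a distinguished coordinate) and for $k\ge 2$ eliminates the $\log$ factor that the paper absorbs into $\tilde O$; the paper's argument is marginally shorter and handles $k=1$ (where your $\zeta$ bound degenerates to the same harmonic sum) uniformly. Both yield the stated $\tilde O(kn+\prod_j t_j)$.

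Two small points worth tightening in your write-up: (i) the count $|\{\e\preceq\vec t:g\mid e_j\ \forall j\}|$ is $\prod_j(\lfloor t_j/g\rfloor+1)$ rather than $\prod_j\lfloor t_j/g\rfloor$ once you allow $e_j=0$ (which you must, since $a_{ij}=0$ occurs in the $\pprod$ application) — you flag the resulting $2^k$ overhead in your caveat, so the final bound stands, but the displayed equality should be an inequality; (ii) your restriction $\veca_i\ne\mathbf 0$ is exactly the right way to avoid the paper's division-by-zero in $M_i:=\min_j\lfloor t_j/a_{ij}\rfloor$, and is harmless since a zero vector contributes $\ln 2$, a constant that never meets a nonzero target $\e$. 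Neither point affects correctness.
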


\begin{proof}
Let us define $B(\x)\,:=\,\ln(A(\x))$. Then,
\begin{align*}
    B(\x) &= \ln \left(\prod_{i=1}^n\,(1 + \prod_{j=1}^k x_j^{a_{ij}})\right) \\
                    &= \sum_{i=1}^n \,\ln\left(1 + \prod_{j=1}^k x_j^{a_{ij}}\right) \\
                    &= \sum_{i=1}^n \sum_{\ell=1}^{\infty}\left( \dfrac{(-1)^{\ell-1}}{\ell}(\prod_{j=1}^k x_j^{a_{ij}})^{\ell}\right)\;.
\end{align*}
Without loss of generality, we can assume that $t_1 \leq t_i, \forall i > 1$. Let $C(\x)\,:= B(\x)\,\mod\,\langle x_1^{t_1+1}, \dots, x_k^{t_k+1}, p \rangle$. Since, we are interested where the individual degree of $x_j$ can be at most $t_j$, the index $\ell$ in the above equation (for a fixed $i$) must satisfy $a_{ij} \cdot \ell \le t_j$ for each $j \in [k]$. This implies $\ell \le t_j/a_{ij}$, for $j \in [k]$. Therefore, define $M_i := min_{j=1}^{k} \lfloor t_j/a_{ij} \rfloor$. Now, one can express $C(\x)$ using $M_i$ since it suffices to look the index $\ell$ till $M_i$ (for a fixed $i$), as argued before.

Importantly, note that the above equation involves $\prod_{j=1}^k x_j^{a_{ij}\ell}$ which has individual degree $>0$, since both $a_{ij}, \ell \ge 1$. Thus, define $T := \{\e= (e_1,\hdots, e_k) \in \Z^{k}\,|\, 1 \leq e_i \leq t_i, \forall i \in [k]\}$. Then,
\begin{align*}
    C(\x) \;&=\; \sum_{i=1}^n \sum_{\ell=1}^{M_{i}}\left( \dfrac{(-1)^{\ell-1}}{\ell}(\prod_{j=1}^k x_j^{a_{ij}})^{\ell}\right) \\
        &=\; \sum_{\overline{e} \in T} \sum_{\ell = 1}^{t_1/e_1} \left( \dfrac{s_{\overline{e}} \times (-1)^{\ell-1}}{\ell} \prod_{j=1}^{k} x_j^{e_i\ell}\right)\;,
\end{align*}
where $s_{\overline{e}} = |\,\{i \in [n]\,|\, (a_{i1}, \dots, a_{ik}) = \overline{e}\}\,|$. Essentially, for a given $s_{\e}$, the quantity computes how many times $a_{ij}$ is equal to $e_j$, for all $j \in [k]$. Using $s_{\e}$, we can interchange the order of the summation as shown above. Moreover, we can pre-compute $s_{\e}$, for all $\e \in T$ in time $O\left(kn + \prod_{i=1}^k t_i\right)$. 

Observe that $\cf_{\x^{\e}}(B(\x)) = \cf_{\x^{\e}}(C(\x))$, for any $(e_1, \dots, e_k) \in T$. Since, $\ell \leq t_1 < p$, $\ell^{-1}$ exists and can be pre-computed in $\tilde{O}(t_1)$.

\paragraph{Time complexity.}
Observe that we have
\begin{align*}
    C(\x) \;&=\; \sum_{\overline{e} \in T} \sum_{\ell = 1}^{t_1/e_1} \left( \dfrac{s_{\overline{e}} \times (-1)^{\ell-1}}{\ell} \prod_{j=1}^{k} x_j^{e_i\ell}\right) \\
    &=\;\sum_{e_2 = 1}^{t_2}\sum_{e_3 = 1}^{t_3}\dots \sum_{e_k=1}^{t_k} \left( \sum_{e_1 = 1}^{t_1} \sum_{\ell = 1}^{t_1/e_1} \left( \dfrac{s_{\overline{e}} \times (-1)^{\ell-1}}{\ell} \prod_{j=1}^{k} x_j^{e_i\ell}\right) \right)
\end{align*}
\\
The time taken to compute all $\cf_{\x^{\e}}(C(\x))$, given $s_{\e}$, is the number of iterations over all $(e_2, \hdots, e_k)$, for $1 \leq e_i \le t_i$, $i >1$ and $\ell \in [t/e_1]$, which is atmost $\sum_{j=1}^{t_1} \floor{t_1/j} \times \prod_{i=2}^{k}t_i = \tilde{O}(\prod_{i=1}^{k} t_i)$, since $\sum_{j=1}^{t_1} t_1/j = O(t_1 \log t_1)$. Thus, the total time is $\tilde{O}(kn + \prod_{i=1}^k t_i)$.
\end{proof}

\subsection{Solving linear recurrence: Tool for \autoref{sec:pf-thm1}} \label{appendix-recurrence-fft}

In this section, we briefly sketch how to speed up the algorithm of computing $E_i$, for $i \in [m]$, using FFT, rather than just going through one by one. \autoref{eq:thm1-relation} gives the following relation:
$$ E_j \equiv j^{-1} \cdot \left(\sum_{i \in [j]}\, (-1)^{j-i-1} E_{i} \cdot P_{j-i}\right)~\bmod~q\;.$$
Here, by $E_j$ (respectively~$P_j$), we mean $E_j(\mu^{w_1},\hdots,\mu^{w_{\ell}})$ (respectively~$P_j$). We can assume that $P_j$'s are already pre-computed and hence contributes to the complexity only once. This calculation is very~similar to \cite[Lemma~2]{jin2018simple}, with a similar relation. But we give the details, for the completeness.

Eventually, once we have computed $P_j$'s, we can use FFT (\autoref{algorithm:fft}) to find $E_j$'s, which eventually gives $T(m) \le \tilde{O}(k(n+t))$. 

To elaborate, in the for-loop 7-8 in \autoref{algorithm:fft}, we want to find $\sum_{i=\ell}^{s} (-1)^{j-i} E_i\cdot P_{j-i}$ for all $j \in \{s+1, \dots, u\}$. To achieve this, we define the polynomials: 
\[
F(x)\,:=\, \sum_{k=0}^{u-\ell} (-1)^{k-1}P_k x^k,~~ \text{and}~~G(x)\,:=\,\sum_{j=0}^{s-\ell} E_{j+\ell}x^j\;.\] 
Note that our $F(x)$ is  {\em different} than used in \cite{jin2018simple}, because of slightly different recurrence relation. We can compute $H(x) = F(x) \cdot G(x)$, in time $\tilde{O}((u-\ell) )$. Observe that $\sum_{i=\ell}^{u}\, (-1)^{j-i-1} P_{j-i} \cdot E_i = \cf_{x^{j-\ell}} (H(x))$ because $(-1)^{j-i-1}P_{j-i} = \cf_{x^{j-i}}(F(x))$ and $E_i = \cf_{x^{i-\ell}}(G(x))$. Therefore, the inner for loop can be computed in $\tilde{O}((u-\ell) )$ time. 

\paragraph{Final time complexity.} Let $T'(m)$ is the complexity of computing $E_1, \hdots, E_m$ assuming precomputations of $P_j$ and $j^{-1}$. Then,
$$T'(m) \;\le\; 2 T'(m/2) \,+\, \tilde{O}( m ) \;\implies\; T'(m) \,\le\, \tilde{O}(m )\;.$$ 
Therefore, the total complexity of computing $E_1, \hdots, E_m$, is $T(m) = T'(m) + \tilde{O}(k(n+t))$, where $\tilde{O}(k(n+t))$ is for the time for computing $P_j$'s (and $j^{-1}$). Since, $q= O(n+k+t)$ and $m \le k$, we get $T(m) = \tilde{O}(k(n+t))$, as we wanted.

\begin{algorithm}[H]
\caption{Algorithm for computing $E_i$}
\label{algorithm:fft}
\vspace{1mm}
\KwIn{$P_i$, for $i \in [m]$, $q$ and $E_0=1$}
\KwOut{$E_i$ for $i \in [m]$}
Initialize $E_j \leftarrow 0$, for $j \in [m]$\;
\KwRet Compute$(0,m)$\;

\textbf{Procedure }Compute$(\ell,u)~\triangleright$ the values returned by Compute$(\ell, u)$ are the final values $E_{\ell}, \hdots, E_{u}$ are computed\;
\For{$\ell < u$ }{
    $s \leftarrow \lfloor\frac{\ell+u}{2}\rfloor$\\
    Compute$(\ell, s)$\;
    \For{$j \leftarrow s+1, \hdots, u$}{
    $E_j \leftarrow E_j + j^{-1} \cdot \left( \sum_{i=\ell}^s \,(-1)^{j-i}E_{i}\cdot P_{j-i} \right)~\bmod~q $\;
    }
   Compute$(s+1,u)$}
\KwRet $E_{\ell}, \dots, E_u$\;
\end{algorithm}

\section{Algorithms}

\subsection{Trivial solution for \texorpdfstring{$\ksssum{k}$}{}}
\label{appendex:k-SSSUM-trivial}
Bellman's dynamic programming solution for the decision version of~$\ssum$~ is based on the recurrence relation $S((a_1, \dots, a_n), t) = S((a_1, \dots, a_{n-1}), t) \oplus S((a_1, \dots, a_{n-1}), t-a_n)$ where $S((a_1, \dots, a_j), t') = 1 \iff t'$ is a realisable target of $(a_1, \dots, a_j)$. Using this relation, the algorithm needs to store only the values of $S((a_1, \dots, a_{j-1}), t')$ for all $1 \leq t' \leq t$ to compute $S((a_1, \dots, a_j), t'')$ for all $1 \leq t'' \leq t$. So, the time complexity is $O(nt)$ whereas the space complexity is $\Omega(t)$.

For finding all the solutions, we modify the above algorithm by adding a pointer from $S((a_1, \dots, a_j), t')$ to $S((a_1, \dots, a_{j-1}), t-a_{j})$ when both are equal to 1. The same is done for $S((a_1, \dots, a_j), t')$ and $S((a_1, \dots, a_{j-1}), t)$. Apart from these, we also add a pointer from $S({a_i}, a_i)$ to a new node $S(\{\}, 0)$ where $1 \leq i \leq n$. This gives a directed graph of size $O(nt)$ because the out-degree of each node is at most 2. To find all the solutions to the~$\ssum$~, we simply run a modified version of DFS algorithm\footnote{The graph is a directed acyclic one and we can use the algorithm mentioned in \url{https://stackoverflow.com/questions/20262712/enumerating-all-paths-in-a-directed-acyclic-graph}} on this graph to finds all the paths from $S((a_1, \dots, a_n), t)$ to $S(\{\}, 0)$. 

It is evident that if the number of solutions to the~$\ssum$~ instance is $k$, then the number of paths is also $k$. The modified DFS algorithm goes through all the neighbouring vertices of a given vertex, no matter if they are visited or not. Furthermore, any path that starts from $S((a_1, \dots, a_n), t)$ will end at $S(\{\}, 0)$.

Clearly, this algorithm will terminate because the graph is directed acyclic.  The running time and space of the modified DFS algorithm is $O(nk)$ because each path is of length at most $n$ and the algorithm traverses through each path at most twice (the first traversal ends at $S(\{\}, 0)$ which finds the path and the second one is backtracking). Therefore, the total time and space complexity is $O(n(t + k))$.

\subsection{Trivial dynamic algorithm for \texorpdfstring{$\simulsub$}{}} \label{sec:dynamic-simul}
In this section, we sketch a dynamic pseudo-polynomial time algorithm which solves $\simulsub$, with targets $t_1, \hdots, t_k$, in $O(n(t_1+1) \hdots (t_k+1))$ time. This is a direct generalization of Bellman's work~\cite{bellman57}.

The algorithm considers an $n \times (t_1+1) \times \dots \times (t_k+1)$ boolean matrix $M$ and populates it with 0/1 entries. $M[i, j_1, j_2, \dots, j_k]$ has 1 iff the $\simulsub$ instance with $\ell^{th}$ $\ssum$ instance $(a_{1\ell}, a_{2\ell}, \dots,a_{i\ell}, j_i)$ has a solution. Here $i \in [n]$ and $j_i \in [0, t_i]$. Even though we have remarked that wlog $t_i \ge 1, \forall i \in [n]$, we cannot do the same for $a_{ij}$'s. This forces us to look at $j_i \in [0, t_i], \forall i \in [k]$. The algorithm starts by setting $M[1, a_{11}, a_{12}, \dots, a_{1k}] = 1$ and $M[1, j_1, j_2, \dots, j_k] = 0$ for the rest. Then,  using the following recurrence relation, the algorithm populates the rest of the matrix.
    \[
    M[i, j_1, j_2, \dots, j_k] = M[i-1,j_1, j_2, \dots, j_k] + M[i-1, j_1 - a_{i1}, j_2 - a_{i2}, \dots, j_k - a_{ik}]
  \]
i.e., $M[i, j_1, j_2, \dots, j_k]$ is set to 1 iff either $M[i-1,j_1, j_2, \dots, j_k] = 1$ or $M[i-1, j_1 - a_{i1}, j_2 - a_{i2}, \dots, j_k - a_{ik}] = 1$. Since, the size of the matrix is $n(t_1+1)\dots (t_k+1)$, the running time of the algorithm is $O(n(t_1+1)\dots (t_k+1))$.

\section{Dynamic programming approach for \texorpdfstring{$\pprod$}{}}
\label{sec:pprod-dynamic}
In this section, we will briefly discuss the modification to Bellman's dynamic programming approach for $\ssum$ to solve $\pprod$ in deterministic (expected) time $O(nt^{o(1)})$. 

The algorithm starts by removing all $a_i$ that does not divide $t$. Then using the factoring algorithm in \cite{lenstra1992rigorous}, we can factor $t$ into prime factor $p_j$, i.e., $t = \prod_{i \in [k]} p_j^{t_j} = t$, where $k = O(\log(t)/\log\log(t))$. We now compute the DP table $T$ of size $n \times (t_1+1) \times \dots \times (t_k+1)$ such that 
\[
T[i, x_1, \dots, x_k] = 1,~\text{if and only if there exists}~S \in [i],~\text{such that}~\prod_{j \in S} a_j = \prod_{j \in [k]} p_j^{x_j}\;.
\]

Observe that the time complexity of the algorithm is the time taken to populate the DP table with either 1 or 0. Since the size of the DP table is $n \times \prod_{i \in [k]} (1+t_i)$, using the similar analyse mentioned in \autoref{sec:pf-thm1-details}, we can bound the term $\prod_{i \in [k]} (1+t_i)$ by $t^{o(1)}$. Therefore, the total time complexity is $O(nt^{o(1)})$.

\end{document}